\newtheorem{theorem}{Theorem}[section]
\newtheorem{proposition}{Proposition}[section]
\newtheorem{remark}{Remark}[section]
\DeclareMathOperator{\sym}{sym}
\title{Branching of twins in shape memory alloys revisited}
\author[1]{Hanu\v{s} Seiner}
\author[2]{Paul Plucinsky}
\author[2,3]{Vivekanand Dabade}
\author[1,4]{Barbora Bene\v{s}ov\'{a}}
\author[2]{Richard D. James}
\address[1]{Institute of Thermomechanics, Czech Academy of Sciences, Prague, Czech Republic}
\address[2]{Aerospace Engineering and Mechanics, University of Minnesota, Minneapolis, USA}
\address[3]{Ecole Polytechnique, University of Paris-Saclay, France}
\address[4]{Faculty of Mathematics and Physics, Charles University, Prague, Czech Republic.}
\begin{document}

\begin{keyword}
Shape memory alloys; martensitic microstructures; branching; non-linear elasticity.
\end{keyword}

\begin{abstract}

\baselineskip=20pt

We study the branching of twins appearing in shape memory alloys at the interface between austenite and martensite.  In the framework of three-dimensional non-linear elasticity theory, we propose an explicit, low-energy construction of the branched microstructure, generally applicable to any shape memory material without restrictions on the symmetry class of martensite or on the geometric parameters of the interface. We show that the suggested construction follows the expected energy scaling law, i.e., that (for the surface energy of the twins being sufficiently small) the branching leads to energy reduction. 
Furthermore, the construction can be modified to capture different features of experimentally observed microstructures without violating this scaling law. By using a numerical procedure, we demonstrate that the proposed construction is able to predict realistically the twin width and the number of branching generations in a Cu-Al-Ni single crystal.
  
\end{abstract}

\journal{J. Mech. Phys. Solids}
\maketitle

\baselineskip=20pt

\section{Introduction}

 In  shape memory alloys, the branching of the ferroelastic domains, called martensitic twins, typically appears close to the interface between austenite and a first order laminate of two martensitic variants (Fig.~\ref{branchingEx}). While the gradual refinement of the twins towards the interface reduces the elastic strain energy localized directly at the interface, coarsening of the laminate farther from the interface leads to reduction of the surface energy in the crystal. As a result, the elastic energy is partially delocalized from the interface into the branched structure, while the surface energy becomes partially localized into the vicinity of the interface, which both may lead to reduction of the total energy. 

 \begin{figure}[!h]
 \centering
 \includegraphics[width=0.75\textwidth]{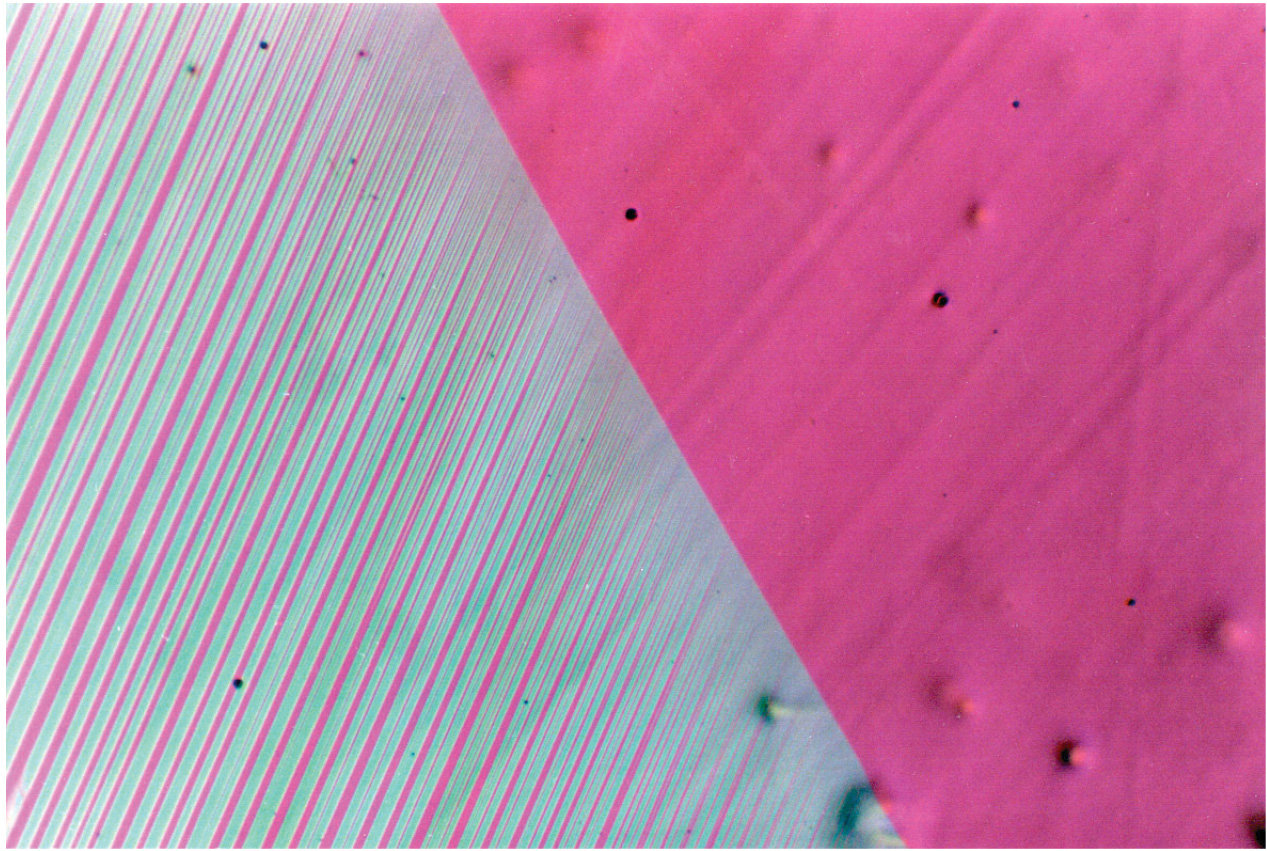}
 \caption{An example of branching microstructure at the austenite-martensite interface.  The photograph is about $0.5$mm in length.  Courtesy of C. Chu. } \label{branchingEx}
\end{figure}

The theoretical framework for studying the branched microstructures in shape memory alloys was first established by \cite{KM1,KM2}, and further developed by many others \citep{Capella_Otto_1,Capella_Otto_2,Dondl,Conti}. These pioneering works always considered some simplified versions of the problem, mostly assuming linearized elasticity and/or prescribing some artificial continuous displacement fields over the branched structure. Most recently, the full vectorial problem, including invariance under rotations, was studied in two dimensions by  \cite{ChanConti1,ChanConti2}, assuming specific forms of the deformation gradients representing individual variants of martensite. Although all these works provide a valuable insight into the scaling laws for energy of the branched structure, they are hardly accessible for experimentalists, who would like get a direct understanding of the branching mechanism for a given, real material.  

In this paper, we present a construction of branched microstructure at the austenite-martensite interface in a fully non-linear, three-dimensional setting, and we show that  branching using this construction leads to the expected reduction of the total energy. In addition, this explicit construction enables us to discuss the character of the elastic strain field in the branched microstructure, and to suggest strategies for reducing the energy stored in this microstructure. The main aim of this paper is to provide a micro-mechanical model of branching in shape memory alloys, directly applicable to particular shape memory alloys with experimentally determined material parameters. Let us point out that the energy calculated for the construction presented in this paper is still just an upper bound to the energy of the real branched microstructure. Similarly to \citep{KM1,KM2,Capella_Otto_1,Capella_Otto_2,ChanConti1,ChanConti2}, we do not require stress equilibrium at the interfaces inside the construction, and the number of degrees of freedom of the construction is relatively low. However, as all parts of the constructed microstructure represent either stress-free martensite or martensite very close to the stress-free state, and as the geometric parameters of the construction are optimized in order to attain a minimum of energy, it is justified to consider this construction as a good approximation of a real microstructure. This assumption is tested in the final part of the paper on a numerical example, where we show that the simulation predicts  realistically the length-scales and morphology of an austenite-to-martesite interface in the Cu-Al-Ni alloy.

\section{Theoretical Background}\label{sec:Theory}

\subsection{A simple model of the austenite-martensite interface}

A widely accepted theoretical approach to martensitic microstructures \citep{James_Ball_1,James_Ball_2,Bhattacharya} takes the advantage of describing the diffusionless transitions in shape memory alloys within the framework of continuum mechanics. In this theory, the austenite phase and the individual variants of martensite are represented by 3$\times{}$3 matrices, with the identity matrix ${\mathbf I}$ representing the austenite phase, and Bain matrices ${\mathbf U}_i$ for the variants of martensite. ${\mathbf U}_i$ are calculated from the deformation gradients $\nabla{\mathbf y}(\mathbf x)$ that are related to the crystal lattices of the variants via the Cauchy-Born hypothesis \citep{Bhattacharya}. For $\varphi(\nabla{\mathbf y})$ being the free energy density of the shape memory alloy crystal (occupying some purely austenite domain $\Omega$ in the reference configuration), the theory predicts that the martensitic microstructure forming at a given temperature and for given boundary conditions at $\partial{}\Omega$ corresponds to the minimum of the energy
\begin{equation}
 E=\int_{\Omega}\varphi(\nabla{\mathbf y}) {\rm d}{\mathbf x} + E_{\rm surf.}[{\mathbf y}]\label{energy}
\end{equation}
over all continuous functions ${\mathbf y}({\mathbf x})$, where the second term represents the energetic penalization for the interfaces between the individual phases in the observed domain. The energy density $\varphi(\nabla{\mathbf y})$ is typically considered to have a multi-well structure, with the respective multiple minima corresponding to the individual phases and variants; the depth of the energy wells gives the chemical energy of the phases at the given temperature, and the derivatives $\tfrac{\partial^2 \varphi}{\partial F_{jk} \partial F_{lm}}$ evaluated at $\mathbf{I}$ or $\mathbf{U}_{i}$ give the elastic constants of these phases. At the transitions temperature, i.e.,
at the temperature where the chemical energy of the austenite and the martensite phases are equal,
\begin{equation}
\varphi(\nabla{\mathbf y}\in{SO(3)})= \varphi(\nabla{\mathbf y}\in\cup_{i=1}^n{SO(3)}{\mathbf U}_{i}) = 0,
\end{equation}
where ${\mathbf U}_{i}$ is the Bain matrix representing the $i-$th variant of martensite from the total number of $n$ possible martensitic variants.

The surface energy term $E_{\rm surf.}[{\mathbf y}]$ is more delicate. Several different expressions for this term can be found in the literature. A frequently used form is
\begin{equation}
 E_{\rm surf.}[{\mathbf y}]=\int_{\Omega}\sigma{}|\nabla^2{\mathbf y}| {\rm d}{\mathbf x}, \label{surfE}
\end{equation}
where $\sigma$ has a meaning of the surface energy per unit area per unit jump of the strain over the interface; the minimum of (\ref{energy}) is then sought over all continuous ${\mathbf y}({\mathbf x})$ with measurable (weak) second derivative.  
However, relating the value of the surface energy per unit area to the magnitude of the strain jump over the interface does not have any clear physical justification. For example, in  alloys undergoing  cubic-to-orthorhombic or cubic-to-monoclinic transitions, certain pairs of variants can form at the same time two different types of twins: Type I twins, where the twinning plane is a low-index atomic plane, and the Type II twins, where the twinning plane is a general plane with irrational crystallographic indices. As a result, the Type I interface can be atomistically sharp, while Type II interface can have a diffused, or segmented nature \citep{Vronka,Liu1,Liu2}. Hence, although the strain jump over the twinning plane is nearly identical for both twinning types, the surface energy associated with the twin can be very different due to the different structure of the interface. 

Another possible approach is to consider a fixed value $\sigma_0$ of surface energy per unit area  for a given twin, regardless of the jump in strain between the variants forming the twin. In this approach, though, it is not clear how $\sigma_0$ evolves when the twinning plane becomes inclined, as in Fig.~\ref{branched}. Again, inclining a Type I interface, which is fixed to a crystallographic plane, could lead to a very different change of the surface energy than in the case of Type II, which can slightly rotate without changing its structure. 

In this paper, we will use both these approaches, depending on the particular case. For the qualitative models of the simple laminate and of the branched structure (this section and Section 3), where the inclination of the twins can be neglected, we will assume a constant surface energy per unit area. The fully-optimized construction of the branched structure used in Section 4 for quantitative simulations of a real shape memory crystal requires, however, a treatment of inclined interfaces. In this case, we will use the assumption (\ref{surfE}).

\bigskip
 Let ${\mathbf U}_{A}$ and ${\mathbf U}_{B}$ be two variants of martensite able to form a twin, i.e., able to border over a planar, kinematically compatible interface.  This means that there exists a rotation matrix ${\mathbf R}$ and a non-zero vector $\hat{\mathbf a}$ such that
\begin{equation}
{\mathbf R}{\mathbf U}_{A}-{\mathbf U}_{B}=\hat{\mathbf a}\otimes{\mathbf n},
\end{equation}
where ${\mathbf n}$ is a unit vector perpendicular to the twinning plane. Let us further consider that a first order laminate of these two variants is able to form a macroscopically compatible interface with austenite, i.e., that there exists a rotation matrix ${\mathbf Q}$, a non-zero vector ${\mathbf b}$ and a volume fraction $0<\lambda<{1}$ such that 
\begin{equation}
{\mathbf Q}\left[\lambda{\mathbf R}{\mathbf U}_{A}+(1-\lambda){\mathbf U}_{B}\right]-{\mathbf I}={\mathbf b}\otimes{\mathbf m}, 
\end{equation}
where ${\mathbf m}$ is a unit vector perpendicular to the habit plane. (We will not discuss here the limiting cases $\lambda\rightarrow{0}$ and $\lambda\rightarrow{1}$, where the compatibility conditions are approximately satisfied for a single variant of martensite \citep{ContiZwicknagl,Zwicknagl}.)
By introducing a simplified notation $\mathbf{A}={\mathbf Q}{\mathbf R}{\mathbf U}_A$, $\mathbf{B}={\mathbf Q}{\mathbf U}_B$ and ${\mathbf a}={\mathbf Q}\hat{\mathbf a}$, we obtain the basic set of compatibility equations used throughout this paper:
\begin{align}
&\mathbf{A}-\mathbf{B}={\mathbf a}\otimes{\mathbf n}, \label{compat1} \\
&\lambda{}\mathbf{A}+(1-\lambda)\mathbf{B}-\mathbf{I} ={\mathbf b}\otimes{\mathbf m}. \label{compat2}
\end{align}  

\bigskip
Consider now a simple $AB$ laminate (without branching) forming a planar interface with austenite.  The macroscopic compatibility condition (\ref{compat2}) ensures that the deformation gradient of the twinned martensite, in the limit of an infinitely fine $AB$ laminate, is compatible with the austenite phase.  This property enables a nearly stress-free co-existence between the austenite and martensite phases at a habit plane with normal $\mathbf{m}$. However, the laminate cannot be infinitely fine, as fine oscillations in $\nabla{\mathbf y}$ would result in diverging surface energy. In contrast, any finite width $d$ of  a single $AB$ twin introduces elastic strains to ensure compatibility at the austenite-martensite interface. Hence, the energy of such an interface is non-zero and must be obtained from (\ref{energy}) by balancing the elastic and surface energy. 

Following the approach of \cite{James_Ball_1}, the minimization (\ref{energy}) for a simple laminate meeting with austenite can be done quite easily: Whatever the elastic strain field is at the habit plane, the corresponding strain energy for one twin approaching this interface scales as $d^2$ (due to the scale-independent character of the linear elasticity). The interfacial energy is proportional to the number of twins.  For $1/d$ being the number of twins per unit length of the habit plane, and for $L$ being the distance between the habit plane and the free surface of the crystal (assumed parallel to the habit plane), the total energy per unit length of the habit plane is
\begin{equation}
 E=\frac{1}{d}\left({G_{AB}}{d^2}+L\sigma_{AB}\right)=d{G_{AB}}+\frac{L\sigma_{AB}}{d}, \label{laminate}
\end{equation}
where $\sigma_{AB}$ is a specific surface energy paid for a unit area of the twin interface, and $G_{AB}$ is a constant characterizing the elastic energy paid for one twin with width $d=1$ approaching the habit plane. This constant is, in principle, a combination of the elastic constants of austenite and martensite, the transformation strains, the volume fraction $\lambda$ and other geometrical parameters of the habit plane. The optimal $d$ for a simple laminate is then  
\begin{equation}
 d=\sqrt{\frac{L\sigma_{AB}}{G_{AB}}},
\end{equation}
i.e., $d\sim{}\sqrt{L}$ which is in a good agreement with several experimental observations. The scaling of the energy (\ref{energy}) is then $E\sim{}\sqrt{\sigma_{AB}L}$ (which is vanishing in the limit $\sigma_{AB} \rightarrow 0$). The constant $G_{AB}$ can be obtained by numerical simulations, using, for example, finite element calculations \citep{stupk}, or it can be estimated (or, more precisely, bounded above) by constructing a piecewise homogeneous strain fields compensating the incompatibility at the interface \citep{James_Ball_1}. The second approach is more straightforward, and will be adopted in this paper not only for the closure domains, but for the whole branched microstructure. 

\subsection{The self-similar construction of branching of twins}

While the energy calculation for a simple $AB$ laminate is instructive, many experimental observations reveal that laminates at the austenite-martensite interfaces tend to 
branch into a finer and finer structure near the interface. The theoretical treatment of the branched structure is obviously more intricate than for a simple laminate. In the branched structure, the elastic strain energy is delocalized from the interface, as the branching requires slight inclinations of the twinning planes from the stress-free orientations.  In contrast, the surface energy becomes localized near the interface, as the number of twinning planes increases with branching. Consequently, the simple scaling model presented above does not hold and a more detailed construction is necessary.

Here, we follow the ansatz of \cite{KM1,KM2} that the branched microstructure can be constructed in a self-similar manner.  This entails constructing first a branching segment (or a cell) that provides a refinement of the number of twins per length of the segment from $1/d$ to $2/d$, and then repeating this segment at finer and finer spatial scales in geometric progression, until the required hierarchical structure is obtained. The energy (\ref{energy}) calculated for the resulting structure is then the upper bound of the  total energy of a real branched twins. 

\begin{figure}[!h]
 \centering
 \includegraphics[width=0.75\textwidth]{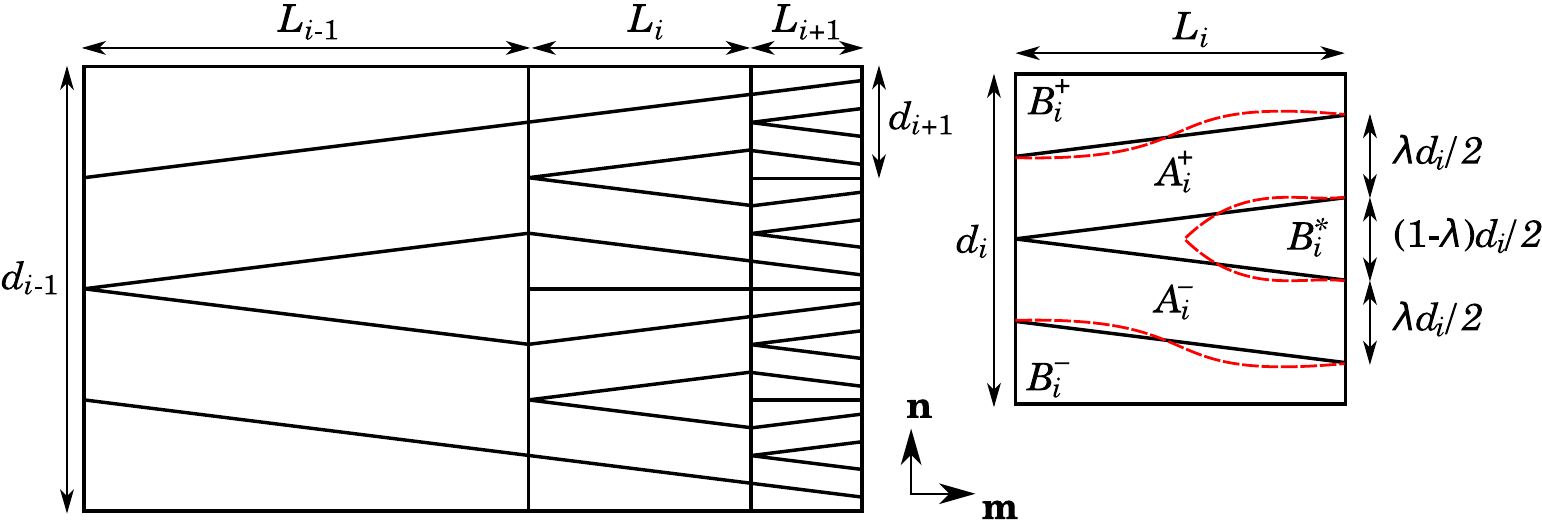}
 \caption{Self-similar construction of the branched microstructure suggested by  \cite{KM1,KM2}; for simplicity, ${\mathbf m}$ is chosen perpendicular to ${\mathbf n}$. The whole structure  is composed of individual branching segments (one segment of the $i-$th layer is shown on the right). $\mathbf{A}^{+}_i$, $\mathbf{A}^{-}_i$,$\mathbf{B}^{+}_i$, $\mathbf{B}^{-}_i$ and $\mathbf{B}^{*}_i$, are the deformation gradients in the $i-$th layer chosen such that the segments can be compatibly attached to each other and such that the compatible interfaces inside the segment are inclined as required. It is assumed that these deformation gradients differ from the deformation gradients $\mathbf{A}$ and $\mathbf{B}$, respectively, just by small perturbations, such that resulting elastic energy is small. The deformation gradients can be homogeneous, which means the interfaces between them are planar (solid black lines), or heterogeneous, which may lead to curved interfaces (dashed red lines).} \label{hierarchical_KM}
\end{figure}

The construction is outlined in Fig.~\ref{hierarchical_KM}. The segment of length $L_i$, width $d_i$ and unit thickness in the out-of-plane direction provides the branching of the laminate such that the spacing of the twins on the left-hand-side boundary of the segment is equal to $\lambda{}d_i$ and the spacing of the twins on the right-hand-side boundary is $\lambda{}d_i/2$. The interfaces connecting the the left-hand-side and right-hand-side boundaries of the segment may be either planar (solid line in Fig.~\ref{hierarchical_KM}) or curved (dashed lines in Fig.~\ref{hierarchical_KM}); in agreement with \cite{KM1,KM2}, we will consider planar interfaces for our construction. However, as discussed in Section 3, using curved interfaces does not lead to any significant reduction of the energy or any change in the scaling laws. 

If we assume that there are $1/d_0$ twins per unit length of the habit plane far away from the interface (in the $0-$th layer), then the $i-$th layer of the branched structure of the same out-of-plane thickness consists of $2^i/d_0$ branching segments, each of width $d_i=d_0/2^i$ and length $L_i$ as depicted in Fig.~\ref{hierarchical_KM}. Strictly speaking, this construction may not be literally \emph{self-similar}, as the $d_i/L_i$ ratio may change from layer to layer (cf. \citep{KM1,KM2,ChanConti1,ChanConti2}). However, as the term 
\emph{self-similar construction} has been introduced in \citep{KM1} and is commonly used for this type of construction, we will continue using it in this paper.

Let $L$ be the distance between the habit plane and the free surface of the crystal, as described previously. This distance can be expressed as 
\begin{equation}
 L=\sum_{i=0}^NL_i, \label{sumL}
\end{equation}
where $N$ is the total number of the branching generations. If $E_{\rm elast}^{(i)}$ and $E_{\rm surf}^{(i)}$ are, respectively, the elastic and surface energy of one segment in the $i-$th layer, then the total energy of the crystal is
\begin{equation}
 E=\frac{1}{d_0}\sum_{i=0}^{N}2^i\left(E_{\rm elast}^{(i)}+E_{\rm surf}^{(i)}\right)+d_0\frac{G_{AB}}{2^{N}}, \label{suma}
\end{equation}
where the last term represents the elastic energy localized at the habit plane.   As the number of the branching generations increases, this term goes quickly to zero since it is proportional to $2^{-N}$. Consequently, the energy of branched structures with several generation of branching ($N\gg{}1$) can be very accurately approximated just by the sum of the energy of the branching segments. According to the analyses by  \cite{KM1,KM2}, this energy scales with the length of the crystal as $E\sim{}{\sigma_{AB}^{2/3}}L^{1/3}$, and the energy-minimizing width of the twins $d(x)$ at the distance $x$ away from the austenite-martensite interface is $d(x)\sim{}x^{2/3}$.  

\subsection{The scaling argument} 
\label{scalingargument}
The simple constructions of microstructure at the austenite-martensite interface described above enabled \cite{KM1,KM2} to formulate the following fundamental \emph{scaling argument}: As the scaling of the total energy for the simple laminate is $E\sim{\sigma_{AB}^{1/2}L^{1/2}}$, while the scaling for the branched structure with $N\gg{1}$ is $E\sim{\sigma_{AB}^{2/3}L^{1/3}}$, the branching is always preferred for $L\rightarrow\infty$ or/and $\sigma_{AB}\rightarrow{}0$, regardless of the prefactors for the scalings.

Constructions by  \cite{Capella_Otto_1} and  \cite{ChanConti1,ChanConti2} with a more complete treatment of the surface energy revealed that for a large but finite number of branching generations the correct scaling is rather  $E\sim{}p_1{\sigma_{AB}^{2/3}L^{1/3}}+p_2\sigma_{AB}L$ (where $p_1$ and $p_2$ are properly chosen prefactors), which means that the scaling argument holds for $\sigma_{AB}\rightarrow{0}$ (with $L$ finite), but not for $L\rightarrow{}\infty$ (with $\sigma_{AB}$ finite). However, the limit $L\rightarrow{}\infty$ is not very interesting from the physical point of view, while $\sigma_{AB}\rightarrow{0}$ can be a realistic description for some alloys. For example, the so-called $a/b-$twins in the Ni-Mn-Ga shape memory alloy can have up to 10$^3$ times smaller surface energy than other twinning systems in the same alloy \citep{Zeleny} or twins in other alloys \citep{shilo,waitz}. This may be the origin of extensive coarsening of twins observed in the seven-layer modulated structure of Ni-Mn-Ga \citep{Kaufman}.

\section{The self-similar construction in a full three-dimensional setting}

In this section, we will follow the approach of  \cite{KM1,KM2} to construct an upper bound of the total energy for a branched microstructure given a pair of variants satisfying equations (\ref{compat1}) and (\ref{compat2}). First, we will propose a continuous displacement field providing a twin refinement towards the phase boundary. Then we will evaluate the energy of this displacement field and discuss the scaling laws. Unlike in references \citep{KM1,KM2,Capella_Otto_1,Capella_Otto_2,ChanConti1,ChanConti2}, we will not do the construction by prescribing directly the displacement field ${\mathbf y}({\mathbf x})$ over the branched structure; instead, we will find a piecewise-constant deformation gradient ${\mathbf F}$, such that the planar interfaces between the regions where ${\mathbf F}$ is constant satisfy the kinematic compatibility conditions. This will ensure the existence of a displacement field that is continuous with $\nabla{\mathbf y}={\mathbf F}$ almost everywhere. 
 
 Following the approach of Kohn and M\"uller, we will also assume that the twin interfaces are approximately parallel to the stress-free $AB$ twins (i.e., perpendicular to ${\mathbf n}$), and we will assume that the surface energy can be expressed as $\sigma_{AB}$ times the area of the interface. 

\subsection{Construction of the deformation gradients}

Consider now $\mathbf{A}$ and $\mathbf{B}$ satisfying the compatibility conditions (\ref{compat1},\ref{compat2}). The vectors ${\mathbf n}$ and ${\mathbf m}$ are now not required to be perpendicular to each other, and the volume fraction $0<\lambda<1$ is also general. Hence, our construction is applicable for any symmetry class of martensite, and for any lattice parameters such that the conditions (\ref{compat1},\ref{compat2}) are satisfied. 

We propose a branching segment (Fig.~\ref{branched}) consisting of five regions with homogeneous deformation gradients. These five gradients are denoted  as $\mathbf{A}^{+}_i$, $\mathbf{A}^{-}_i$,$\mathbf{B}^{+}_i$, $\mathbf{B}^{-}_i$ and $\mathbf{B}^{*}_i$ in Fig.~\ref{hierarchical_KM}. In the simplest case, we will assume that four of these deformation gradients lie exactly on the energy wells; in particular, we assume that  
\begin{equation}
\mathbf{B}^{+}_i = \mathbf{B}^{-}_i = \mathbf{B}^{*}_i = \mathbf{B}
\end{equation}
and
\begin{equation}
\mathbf{A}^{-}_i = \mathbf{A}
\end{equation}
for all $i$, while the remaining one is slightly elastically strained, 
\begin{equation}
\mathbf{A}^{+}_i =\mathbf{A} + \boldsymbol{\delta} \mathbf{A}_i.
\end{equation}
Due to this elastic strain, the interface between $\mathbf{A}^{+}_i$ and $\mathbf{B}^{+}_i = \mathbf{B}^{*}_i$ is inclined, with the new orientation being given by a unit vector ${\mathbf n}+\boldsymbol{\delta} {\mathbf n}_i$.  To realize a continuous deformation from this strain field, the perturbations $\boldsymbol{\delta} \mathbf{A}_i$ and $\boldsymbol{\delta} \mathbf{n}_i$ are constrained by additional compatibility conditions beyond (\ref{compat1}) and (\ref{compat2}).  
The compatibility conditions at the inclined planar interfaces are 
\begin{equation}
(\mathbf{A}+\boldsymbol{\delta}\mathbf{A}_i)-\mathbf{B}={\mathbf c}_i\otimes({\mathbf n}+ \boldsymbol{\delta}{\mathbf n}_i),\label{compat_delt_n}
\end{equation}
and the compatibility conditions for connecting the $i-$th layer with the neighboring layers are
\begin{equation}
(\mathbf{A}+\boldsymbol{\delta} \mathbf{A}_i)-\mathbf{A} = \boldsymbol{\delta} \mathbf{A}_i={\mathbf d}_i\otimes{\mathbf m} \label{compat_delt_A}
\end{equation}
and
\begin{equation}
(\mathbf{A}+\boldsymbol{\delta} \mathbf{A}_{i\pm{1}})-(\mathbf{A}+\boldsymbol{\delta}\mathbf{A}_i)=\boldsymbol{\delta}\mathbf{A}_{i\pm{}1}-\boldsymbol{\delta}\mathbf{A}_i={\mathbf d}_{i}^{\pm}\otimes{\mathbf m}
\label{compat_delt_pm}
\end{equation}
for some vectors ${\mathbf c}_i$, ${\mathbf d}_i$, ${\mathbf d}^+_i$, and ${\mathbf d}^-_i$.

The last condition (\ref{compat_delt_pm}) is, in fact, redundant.  Indeed, if (\ref{compat_delt_A}) is satisfied for each $i$, then
\begin{equation}
\boldsymbol{\delta} \mathbf{A}_{i\pm{}1}-\boldsymbol{\delta}\mathbf{A}_i=({\mathbf d}_{i\pm{1}}-{\mathbf d}_{i})\otimes{\mathbf m},
\end{equation}
which means that (\ref{compat_delt_pm}) is satisfied with 
\begin{equation} {\mathbf d}_i^{\pm}={\mathbf d}_{i\pm{1}}-{\mathbf d}_{i}. \end{equation}
Consequently, the existence of the vectors ${\mathbf d}_i$ directly implies the existence of the vectors ${\mathbf d}_i^{\pm}$.

\begin{figure}
\centering
 \includegraphics[width=0.5\textwidth]{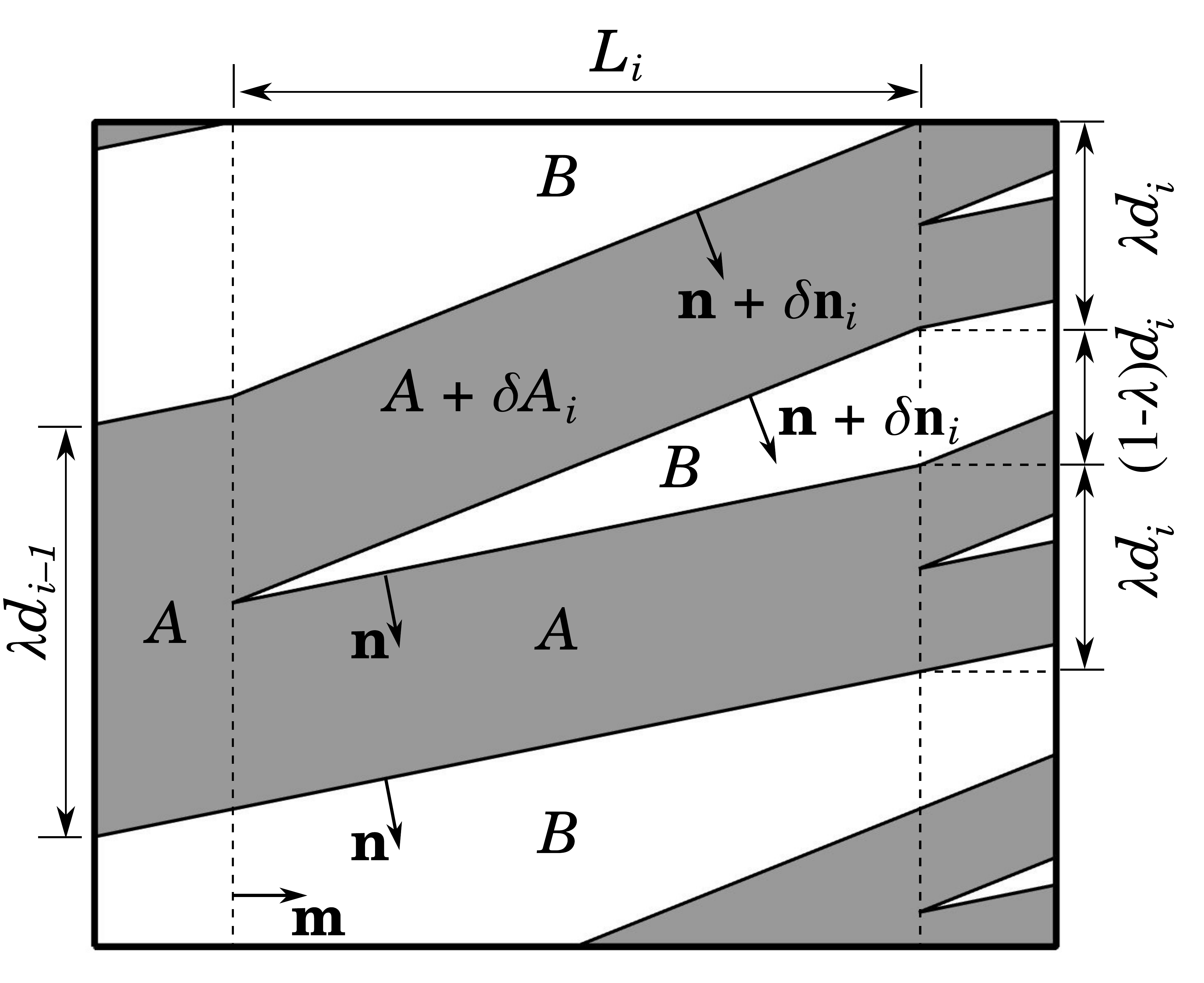}
 \caption{The proposed segment of the branched structure (segment in the $i-$th layer). White color denotes variant $\mathbf{B}$ (stress-free), darker gray corresponds to variant $\mathbf{A}$, which is, in some regions, slightly elastically strained ($\mathbf{A}+\boldsymbol{\delta}\mathbf{A}_i$). Notice that the sketch shows only one of two alternating connections of the branching segment to the $(i-1)-$th layer. In the second possible case, $\mathbf{A}$ and $\mathbf{A}+\boldsymbol{\delta}\mathbf{A}_i$ are connected to $\mathbf{A}+\boldsymbol{\delta}\mathbf{A}_{i-1}$ instead of to $\mathbf{A}$. This alternation is reflected by equations (\ref{compat_delt_A}) and (\ref{compat_delt_pm}).}\label{branched}
\end{figure}

Our aim then is to find a perturbation $\boldsymbol{\delta} \mathbf{A}_i$ such that these compatibility conditions are satisfied for prescribed $\boldsymbol{\delta} {\mathbf n}_i$ (i.e., for a prescribed inclination of the interface). As the vectors ${\mathbf n}$, ${\mathbf m}$ and $({\mathbf n}+\boldsymbol{\delta}{\mathbf n}_i)$ are necessarily coplanar, but ${\mathbf n}$ and ${\mathbf m}$ are never collinear, there always exists a scalar parameter $\varepsilon_i$ such that
\begin{equation}
 {\mathbf n}+\boldsymbol{\delta}{\mathbf n}_i = \frac{{{\mathbf n}+\varepsilon_i{\mathbf m}}}{\left|{{\mathbf n}+\varepsilon_i{\mathbf m}}\right|}.
\end{equation}
The parameter $\varepsilon_i$ has also a direct geometrical interpretation. It can be easily shown that 
\begin{equation}
 \varepsilon_i=\frac{(1-\lambda)\sqrt{1-({\mathbf m}\cdot{\mathbf n})^2}d_i}{L_i}\stackrel{\rm def.}{=}\frac{(1-\lambda)\alpha{}d_i}{L_i}, \label{diLi}
\end{equation}
i.e., that $\varepsilon_i$ determines the ratio between $d_i$ and $L_i$.

To satisfy the condition (\ref{compat_delt_n}), it is then sufficient to take
\begin{equation}
 {\mathbf d}_i=\varepsilon_i{\mathbf a}.
\end{equation}
Indeed, utilizing (\ref{compat1}),
\begin{equation}
(\mathbf{A}+\boldsymbol{\delta}\mathbf{A}_i)-\mathbf{B} = (\mathbf{A}-\mathbf{B})+\boldsymbol{\delta}\mathbf{A}_i = {\mathbf a}\otimes{\mathbf n}+\varepsilon_i{\mathbf a}\otimes{\mathbf m}={\mathbf a}\otimes({{\mathbf n}+\varepsilon_i{\mathbf m}}),
\end{equation}
which is a rank-one matrix. By taking 
\begin{equation}
 {\mathbf c}_i = \left|{{\mathbf n}+\varepsilon_i{\mathbf m}}\right|{\mathbf a},
\end{equation}
we obtain exactly (\ref{compat_delt_n}). 

In summary, we have shown that the branching segment sketched in Fig.~\ref{branched} represents a continuous displacement field if the small perturbation of the deformation gradient in one of the regions is
\begin{equation}
 \boldsymbol{\delta} \mathbf{A}_i=\varepsilon_i{\mathbf a}\otimes{\mathbf m}. \label{deltaA}
\end{equation}
Then, the resulting inclination of the twinning planes encapsulating the elastically strained region is
\begin{equation}
\boldsymbol{\delta} {\mathbf n}_i = \frac{{{\mathbf n}+\varepsilon_i{\mathbf m}}}{\left|{{\mathbf n}+\varepsilon_i{\mathbf m}}\right|}-{\mathbf n},
\end{equation}
and the $d_i/L_i$ ratio is given by (\ref{diLi}). The segments can be used to construct a fully compatible, three-dimensional branched structure, as each layer of the structure inherently satisfies a macroscopic compatibility condition with austenite. In particular, the macro-scale deformation gradient in the $i-$th layer of the structure is
\begin{equation}
\lambda{}\mathbf{A}+(1-\lambda)\mathbf{B}+\frac{\lambda}{2}\boldsymbol{\delta} \mathbf{A}_i={\mathbf I}+{\mathbf b}\otimes{\mathbf m}+\frac{\lambda}{2}\varepsilon_i{\mathbf a}\otimes{\mathbf m}={\mathbf I}+({\mathbf b}+\frac{\lambda}{2}\varepsilon_i{\mathbf a})\otimes{\mathbf m},
\end{equation}
which is compatible with austenite over a planar interface perpendicular to ${\mathbf m}$.

Before discussing the energy of the proposed construction, let us mention that the branching segment in Fig.~\ref{branched} is very similar to the real geometry of the branching points observed in shape memory alloys. In Fig.~\ref{experiment}, two examples of such observations are seen. Fig.~\ref{experiment}(a) shows a branching point in a Type-II laminate in a Cu-Al-Ni single crystal observed by white-light interferometry (see \citep{seiner} for more details on the experiment), and Fig.~\ref{experiment}(b) shows a branching microstructure in ten-layer modulated (10 M) martensite in a Ni-Mn-Ga single crystal \citep{Bronstein}. In both cases, the branching appears to be provided by planar interfaces, with (approximately) one half of the original twin band of the minor variant remaining straight and the second half becoming tilted beyond the branching point.

\begin{figure}[htb]
\centering
 \includegraphics[width=0.8\textwidth]{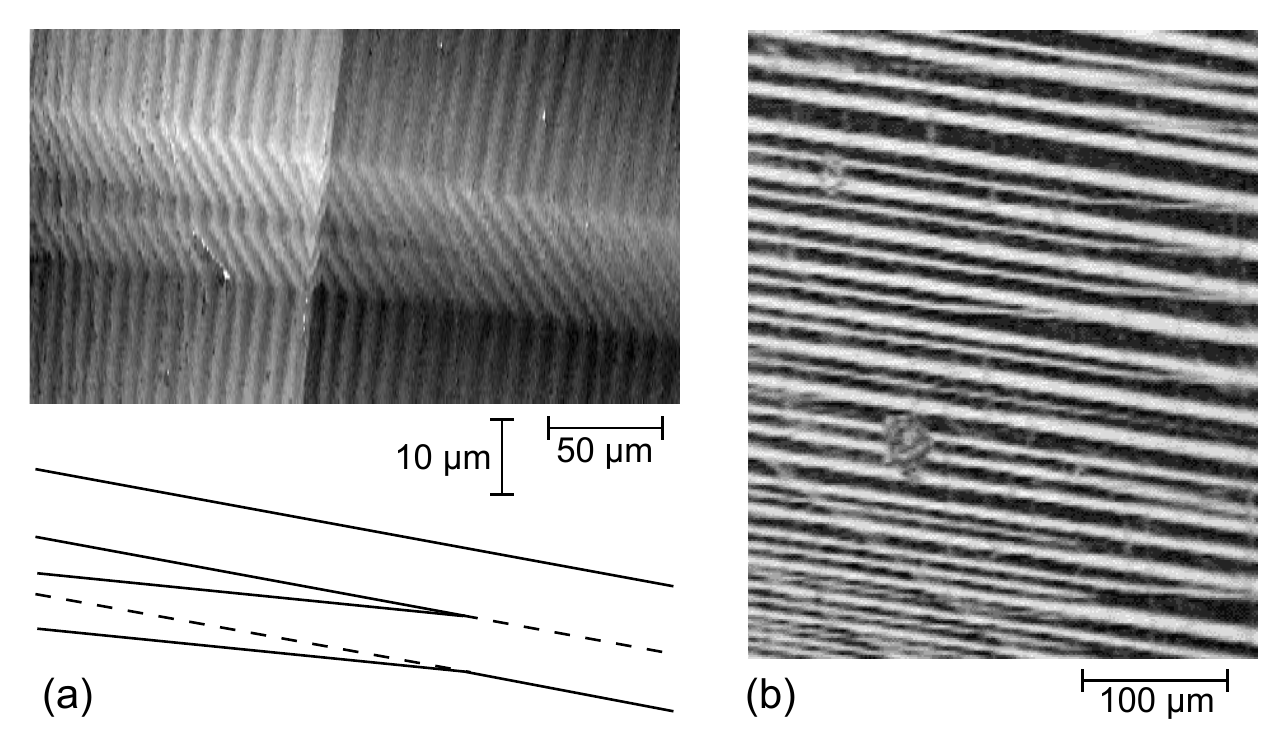}
 \caption{Experimental observations of the morphology of the branching points: (a) White-light interferometry image of a single branching point in a Cu-Al-Ni single crystal; the orientation of the fine interference fringes correspond to the tilt of the observed surface, and the twin interfaces are seen as sharp changes of this orientation. The arrangement of the twin interfaces read from the micrograph is shown below the image; the dashed lines are parallel to the upper interface. (b) Optical micrograph of several branching points in a Ni-Mn-Ga single crystal (courtesy of E. Bronstein; see \citep{Bronstein} for more details on the experiment.)}\label{experiment}
\end{figure}

\subsection{Energy considerations}
The energy of the  microstructure consists of the elastic energy of the branching segments, the surface energy of the twin walls, and the energy of the closure domains ($G_{AB}$, see (\ref{laminate}) and ({\ref{suma}})). For the self-similar construction, we will assume that the energy of the closure domains is negligible (a similar result is obtained if this energy is comparable with the energy of the last layer of the branched structure \citep{ChanConti1,ChanConti2}); this assumption will be discussed in more detail in Section 4. The elastic energy of the branching segments is simple to express. In a given segment, the elastic energy is located only in the region with the deformation gradient $\mathbf{A}+\boldsymbol{\delta} \mathbf{A}_i= (\mathbf{I} + \boldsymbol{\delta}\mathbf{A}_i \mathbf{A}^{-1}) \mathbf{A}$. The elastic energy density in this region is
\begin{equation}\label{varphiE}
 \varphi_E = \frac{1}{2}\left(\sym \boldsymbol{\delta}\mathbf{A}_i \mathbf{A}^{-1}\right) \colon{}{\mathbb C}\colon{}\left( \sym \boldsymbol{\delta} \mathbf{A}_i \mathbf{A}^{-1} \right),
\end{equation}
where ${\mathbb C}$ is a tensor of elastic constants for the martensite phase (with major and minor symmetry),  and $\sym \mathbf{B} \stackrel{\rm def.}{=} \frac{1}{2}(\mathbf{B} + \mathbf{B}^T)$ gives the symmetric part of a tensor $\mathbf{B}$.
Using (\ref{deltaA}), this expression simplifies to 
\begin{equation}
 \varphi_E = \frac{1}{2}\varepsilon_i^2(\sym {\mathbf a}\otimes{\mathbf{A}^{-T} \mathbf m})\colon{}{\mathbb C}\colon{}(\sym {\mathbf a}\otimes{\mathbf{A}^{-T} \mathbf m})\stackrel{\rm def.}{=}\frac{1}{2}\mathcal{C} \varepsilon_i^2,
\end{equation}
where $\mathcal{C}$ collects the dependence on elastic constants and the geometry of the twins through the vectors ${\mathbf a}$ and ${\mathbf m}$ and the transformation tensor $\mathbf{A}$. The elastic energy of one segment is then 
\begin{equation}
 E^{(i)}_{\rm elast.} = \lambda \frac{L_id_i}{2}\mathcal{C}\varepsilon_i^2=\lambda\frac{(1-\lambda)\alpha{}d_i^2}{2}\mathcal{C}\varepsilon_i.
\end{equation}
As the $i-$th layer consists of $1/d_i = 2^i/d_0$ segments, the total elastic energy of the branched structure per unit length of the habit plane is  
\begin{equation}
 E_{\rm elast.} =\frac{1}{d_0}\sum_{i=0}^{N}2^iE^{(i)}_{\rm elast.}=\lambda(1-\lambda)\mathcal{C}\alpha{}d_0 \sum_{i=0}^N\frac{\varepsilon_i}{2^{i+1}}. \label{Eelast}
\end{equation}

Provided that the inclinations are very small ($\varepsilon_i\ll{}1$, $d_i{}\ll{}L_i$), we can approximately assume that $\sigma_{AB}$ is a constant and that the twin interfaces are all parallel to the stress-free twinning plane. Then, the surface energy of one segment is
\begin{equation}
 E^{(i)}_{\rm surf.}=4\sigma_{AB}\frac{L_i}{\alpha}=4\sigma_{AB}(1-\lambda)\frac{d_i}{\varepsilon_i}, \label{surf_i}
\end{equation}
and the total surface energy in the branched structure per unit length of the habit plane is
\begin{equation}
 E_{\rm surf.}=4\sigma_{AB}(1-\lambda)\sum_{i=0}^{N}\frac{1}{\varepsilon_i}.
\end{equation}
This sum converges for $N\rightarrow\infty$ only if
\begin{equation}
\lim_{i\rightarrow\infty}\varepsilon_i=\infty. 
\end{equation}
Then, however, as $N$ becomes very large, the assumption that $d_i{}\ll{}L_i$ is violated, and the surface energy term cannot be expressed by (\ref{surf_i}).  Nevertheless, if $d_0{}\ll{}L_0$, and if $\varepsilon_i$ does not grow too fast, the approximation (\ref{surf_i}) is sufficiently justified. For the purpose of the construction in this section, we will assume that $N$ is large enough to make the energy of the closure domains negligible, while ensuring $d_i{}\ll{}L_i$ remains fulfilled. Such an assumption allows us to follow the original construction of Kohn and M\"{u}ller; the other cases will be discussed in Section 4.  

The total energy of the branched structure can be expressed as
\begin{equation}
 E=(1-\lambda)\sum_{i=0}^N\left[\lambda{}\mathcal{C}\alpha{}d_0\frac{\varepsilon_i}{2^{i+1}}+\frac{4\sigma_{AB}}{\varepsilon_i}\right] \label{simplified}
\end{equation}
and can be minimized with respect to $\varepsilon_i$ for fixed $d_0$, which is equivalent to minimizing with respect to $1/d_0$ for fixed $L$ as done in \citep{KM1,KM2,ChanConti1,ChanConti2}.
Since $\varepsilon_i$ appears only in the $i-$th term, minimization with respect to this parameter is relatively simple and the minimum is reached for 
\begin{equation}
 \varepsilon_i = \sqrt{\frac{8\sigma_{AB}}{\lambda{}\mathcal{C}\alpha{}d_0}}{(\sqrt{2})^i}.\label{epsopt}
\end{equation}
As expected, $\varepsilon_i$ rapidly increases to reduce the growth of the surface energy (\ref{surf_i}). After substituting (\ref{epsopt}) into (\ref{simplified}), the total energy is
\begin{equation}
 E=(1-\lambda)\sqrt{2\sigma_{AB}\lambda{}\mathcal{C}\alpha{}d_0}\sum_{i=0}^{N}\frac{1}{(\sqrt{2})^{i}}\stackrel{\rm def.}{=}\left[(1-\lambda)\sqrt{2\sigma_{AB}\lambda{}\mathcal{C}\alpha{}d_0}\right]a_N,
 \label{simplified_final}
\end{equation}
and the length of the martensite part of the crystal is 
\begin{equation*}
 L=\sum_{i=0}^N{L_i}=\sum_{i=0}^N\frac{(1-\lambda)\alpha{}d_0}{2^i\varepsilon_i}=
 (1-\lambda)\alpha^{3/2}{d_0}^{3/2}\sqrt{\frac{\lambda{}\mathcal{C}}{8\sigma_{AB}}}{\sum_{i=0}^N\frac{1}{(2\sqrt{2})^i}} 
\end{equation*}
\begin{equation}
 \stackrel{\rm def.}{=}\left[(1-\lambda)\alpha^{3/2}{d_0}^{3/2}\sqrt{\frac{\lambda{}\mathcal{C}}{8\sigma_{AB}}}\right]b_N,
 \label{length}
\end{equation}
where we introduced partial sums\begin{equation}
 a_N=\sum_{i=0}^N{\frac{1}{(\sqrt{2})^i}} {\mbox{\hspace{5mm} and \hspace{5mm}}} b_N=\sum_{i=0}^N{\frac{1}{(2\sqrt{2})^i}}.
\end{equation}
Regardless of the number of the branching generations, (\ref{length}) clearly gives the scaling $d_0\sim{L}^{2/3}$ predicted by  \cite{KM1,KM2}. According to (\ref{simplified}), the energy scales as $E\sim\sqrt{d_0}$. Consequently, the scaling of the energy with respect to the length is $E\sim{L}^{1/3}$, which is again in agreement with \citep{KM1,KM2}. Finally, by expressing $d_0$ from (\ref{length}) and substituting it into (\ref{simplified_final}), we can confirm that  
\begin{equation}
 E\sim{\sigma_{AB}^{2/3}\mathcal{C}^{1/3}L^{1/3}}\label{scalingE}
\end{equation}
as also predicted in \citep{KM1,KM2}. The explicit formulas for $d_0$ and $E$ are
\begin{equation}
 d_0=\left({\frac{8\sigma_{AB}}{\lambda{}\mathcal{C}}}\right)^{1/3}\frac{L^{2/3}}{\alpha{}(1-\lambda)^{2/3}}b_N^{-2/3} \label{DKM}
\end{equation}
and 
\begin{equation}
 E=2\left[\lambda{}(1-\lambda)^2\right]^{1/3}\sigma_{AB}^{2/3}\mathcal{C}^{1/3}L^{1/3}a_Nb_N^{-1/3}. \label{EKM}
\end{equation}

\subsection{Generalizations of the construction}\label{generalization}
In this subsection, we will propose three modifications of the construction outlined above, and discuss which of these modifications leads to a decrease of the energy for the construction, i.e., an improvement on the upper bound. The aim of this discussion is to show that our construction is sufficiently versatile to capture various effects that may appear in real branched microstructures.

\begin{figure}
 \includegraphics[width=\textwidth]{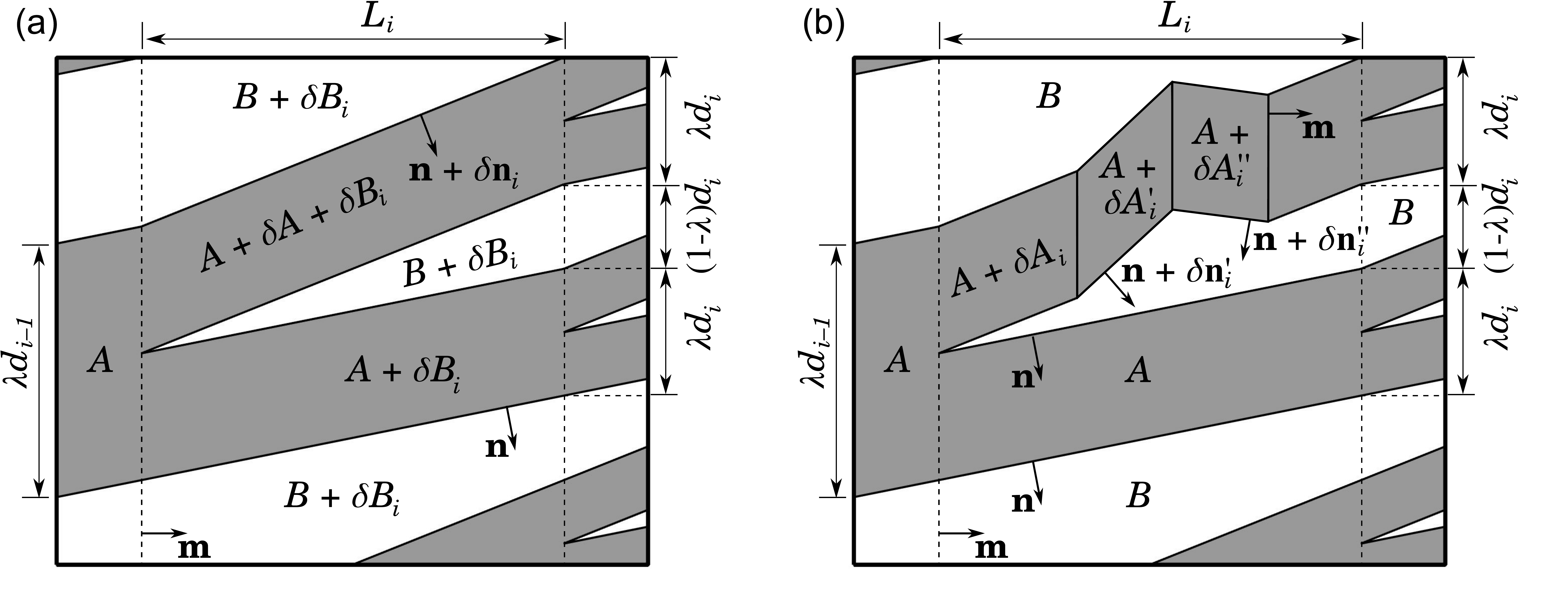}
 \caption{Modifications of the construction of the branching segment: (a) delocalization of the elastic energy; (b) curved interface with $\boldsymbol{\delta}\mathbf{A}_i'$ and $\boldsymbol{\delta}\mathbf{A}_i''$ being the elastic strains in different parts of the layer of variant $\mathbf{A}$, and $\boldsymbol{\delta}{\mathbf n}_i'$ and $\boldsymbol{\delta}{\mathbf n}_i''$ being the respective inclinations of the twinning planes.}\label{reduction}
\end{figure}

\begin{itemize}
 \item {\bfseries Delocalization of the elastic energy in the branching segment -- } One of the obvious unrealistic assumptions of the above construction is that the elastic strains are localized only in the $\mathbf{A}^{+}_i$ region while the rest of the branching segment is strain-free. For example, if we take a homogeneous deformation gradient $\boldsymbol{\delta}\mathbf{B}_i$ and assume that the deformation gradients in the branching segment are  (see Fig.~\ref{reduction}(a))
 \begin{equation}
  \mathbf{A}_i^{+}=\mathbf{A}+\boldsymbol{\delta}\mathbf{A}_i+\boldsymbol{\delta}\mathbf{B}_i,
 \end{equation}
 \begin{equation}
  \mathbf{A}_i^{-}=\mathbf{A}+\boldsymbol{\delta}\mathbf{B}_i,
 \end{equation}
 and 
 \begin{equation}
 \mathbf{B}_i^{-}=\mathbf{B}_i^{+}=\mathbf{B}_i^{*}=\mathbf{B}+\boldsymbol{\delta}\mathbf{B}_i,
 \end{equation}
the orientations of the twinning planes inside the branching segments (i.e., the geometry of the segment) remains unchanged. If, furthermore, 
\begin{equation}
\boldsymbol{\delta}\mathbf{B}_i=\delta_i{\mathbf a}\otimes{\mathbf m},
 \end{equation}
where $\delta_i$ is some scalar parameter, also the compatibility with the $(i-1)-$th layer and the $(i+1)-$th layer remains unbroken. The elastic energy of branched structure is minimal if
\begin{equation}
 \delta_i=-\frac{\lambda}{2}\varepsilon_i,
\end{equation}
and the total energy (\ref{EKM}) is then reduced $(1-\frac{\lambda}{2})^{1/3}$ times. 

As $\delta_i$ is proportional to $\varepsilon_i$, and $\varepsilon_i$ increases as $(\sqrt{2})^i$, also the homogeneous shear strains must increase in the branched structure in the vicinity of the interface. This may be the reason why the branching laminates in Fig.~\ref{branchingEx} appear slightly curved when approaching the habit plane. This, however, does not mean that the interfaces are curved inside of the branching segments (as discussed in the next point), rather it implies that the planar interfaces in the individual segments are getting more and more inclined in the deformed configuration, as the shear strains increase.

An even more significant delocalization of the elastic energy appears if we allow all the interfaces to tilt. As shown in the detailed construction given in the Appendix and discussed in Section 4, such a delocalization may lead to a significant reduction of the elastic energy. However, all these modifications of the branching segment just reduce the total energy of the branched structure by a scalar prefactor, i.e., they do not affect the scaling laws. It is also interesting to note that the tilting of all interfaces does not seem to occur for the experimentally observed branching, 
Fig.~\ref{experiment}. Possibly, this aspect of  real branched microstructures results not only from requirements of energy minimization, but also from requirements of kinematics and energy dissipation (cf., \citep{seiner}). Such a discussion, however, falls beyond the scope of this paper.

\item {\bfseries  Curved interfaces -- }Another obvious simplification of our construction is that we assume planar interfaces. However, as $\boldsymbol{\delta}\mathbf{A}_i=\varepsilon_i{\mathbf a}\otimes{\mathbf m}$, the parameter $\varepsilon_i$ can vary spatially within each segment, provided that the resulting strain field with this variation is a gradient field and that the connection to the $(i+1)-$th layer at the right-hand-side edge of the segment remains unchanged, as sketched in Fig.~\ref{reduction}(b). For example, it is possible to consider the variation $\varepsilon_i(x)=\varepsilon^0_i + \delta\varepsilon_i(x)$ for $0\leq{}x\leq{}L_i$.  (Indeed, notice that this variation is the gradient of the map $[\int_0^{\mathbf{x} \cdot \mathbf{m}} (\varepsilon_i(x) + \varepsilon_i^0) {\rm d}x ]\mathbf{a}$.)
The elastic energy of the branching segment is then 
\begin{equation}
E^{(i)}_{\rm elast.}=\frac{\lambda{}d_i}{2}
 \mathcal{C}\int_0^{L_i} \left( \varepsilon^0_i+\delta{}\varepsilon_i(x)\right)^2{\rm d}x.
\end{equation}
However, due to the condition
\begin{equation}
\int_0^{L_i}\delta{}\varepsilon_i(x){\rm d}x=0,
\end{equation}
it can be easily shown that the elastic energy is minimal for $\delta{}\varepsilon_i(x)\equiv{0}$ for all $x$.  The surface energy is also minimal for $\delta{}\varepsilon_i(x)\equiv{0}$, since the planar interface has the smallest area. Hence, we can conclude that the energy (\ref{simplified}) of the proposed construction is minimal for  planar interfaces. This is in a good agreement with experimental observations (Fig.~\ref{experiment}), where the interfaces appear to be tilted but not curved. 

\item {\bfseries  Volume fraction variations -- } Since the total energy of the branched structure depends on $\lambda$ as $\left[\lambda{(1-\lambda)^2}\right]^{1/3}$, it is obvious that this energy is very sensitive to small variations of $\lambda$, especially for $\lambda$ close to 0 or 1, where the derivative $\partial{E}/\partial{\lambda}$ goes to $\pm{}\infty$. This means that the energy release due to making $\lambda$ slightly closer to 0 or 1 may be much larger than the elastic energy to be paid for violating the the macro-scale compatibility condition (\ref{compat2}). Using a similar construction as in Subsection 3.1, one can show that compatibility for perturbed $\lambda$ can be achieved when the elastic energy density in the branched structure is increased by a term proportional to $(\delta\lambda)^2$. Due to geometry reasons, this energy increase cannot be localized just in the vicinity of the habit plane. Instead, it must be spread over the whole martensite region. In other words, the variation of the volume fraction may be beneficial only for small $L$.

\end{itemize}

\section{Discussion beyond the self-similar concept}

\subsection{Motivation}

The construction presented above gives a simplified approximation of the real branched microstructure and was done under several assumptions that need to be discussed in more detail. In this section, we comment on these assumptions and propose a modification of the construction to make the model mimic real microstructures more accurately.

The first questionable point in the self-similar construction is that it does not take into account the specific conditions in the  $0-$th layer. This layer does not need to have the same topology as the following layers, as it is not connected to any preceding layer. Hence, as also noted by \citep{ChanConti1}, it can consist of just a simple laminate with the width of the twins equal to $d_0$ 
(Fig.~\ref{L0_fig}). By realizing that the $0-$th layer has no reason for elastic strains, the total energy is significantly reduced. 

\begin{figure}
\centering
 \includegraphics[width=0.5\textwidth]{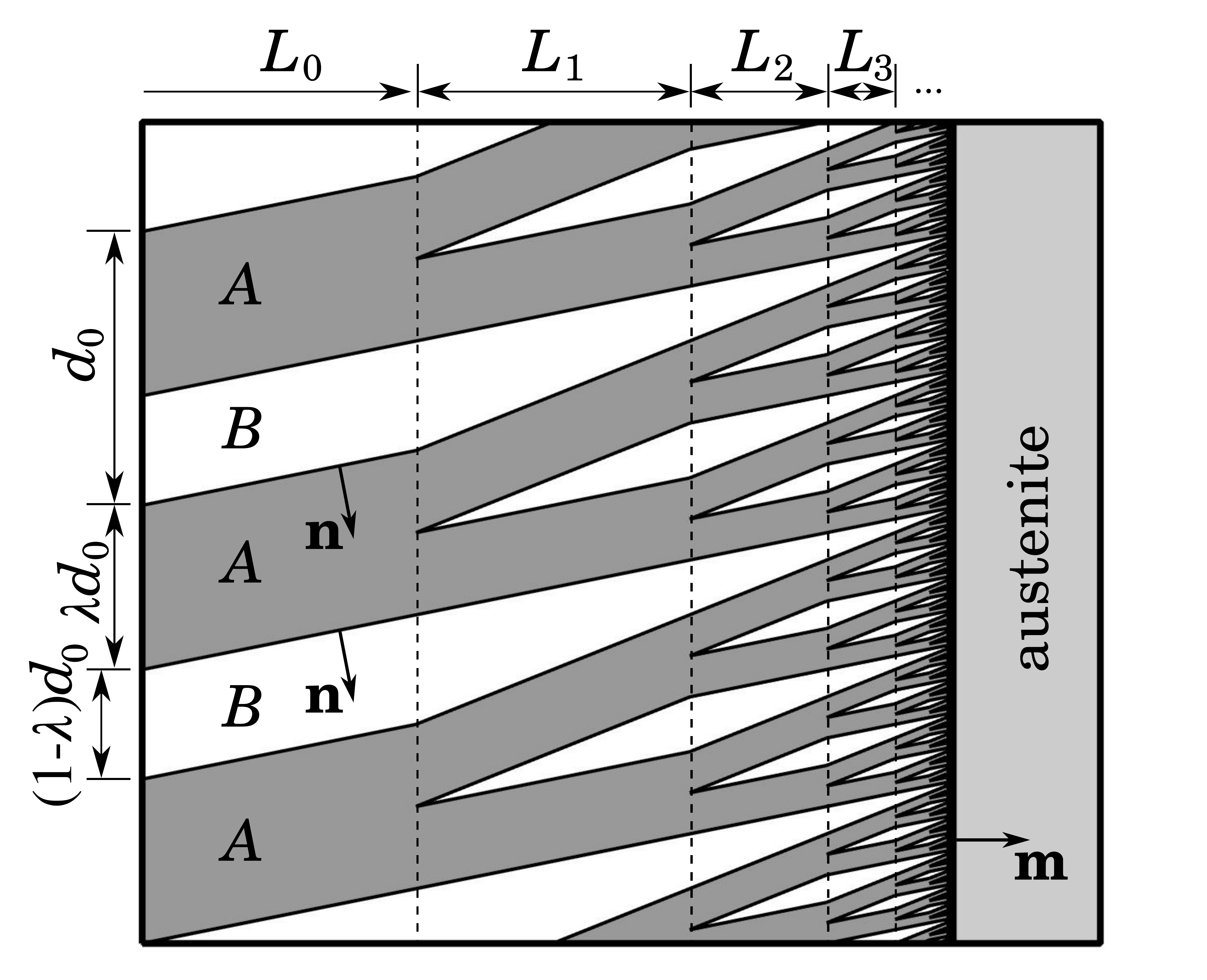}
 \caption{The branched structure after releasing the elastic energy from the $0-$th layer.}\label{L0_fig}
\end{figure}

The second questionable point relates to the number of the branching generations. As mentioned in Section 3.2, the scaling laws (\ref{DKM},\ref{EKM}) and the simple energy balance for branching microstructure (\ref{simplified}) are only meaningful for a certain range of branching generations $N$.  This range is bounded below by the minimal number of generations needed to justify neglecting the energy of the closure domains, and bounded above by the maximal number of generations for which the inclinations are small (i.e., each $\varepsilon_i < \ldots < \varepsilon_N$ should be $\ll 1$ so that the surface energy (\ref{surf_i}) is justified).   Whether the latter is larger than the former, so that there is a range of validity for this energy balance, depends on the detailed parameters. Let us notice that the $\varepsilon_i$ increases as $(\sqrt{2})^i$, which means by three orders of magnitude for 20 generations of branching, so the upper bound of the range can be very restrictive. 

At the same time, the number $N$ is not an a priori known material parameter, and so verifying that it falls into the certain range for the given set of the material parameters is obviously a very questionable approach.  \cite{ChanConti1,ChanConti2} solved this problem in an elegant manner by considering the branching to stop once $d_i/L_i\approx{}1$, i.e., when the interfaces are inclined by a certain angle\footnote{This condition can be modified to $d_i/L_i\approx{}c$, where $c$ is a small constant, to keep the assumption $\varepsilon_i\ll{}1$ satisfied. Such a modification does not qualitatively affect the result of \cite{ChanConti1,ChanConti2}.}. Then, the energy of the closure domains can be, up to a scalar factor, absorbed into the sum of energies of the branched structure.

From a more physical point of view, however, one expects the number of generations to arise from an energy balance comparing the branched structure to that of closure domains, and not from a purely geometrical condition (like the one put forth in \citep{ChanConti1,ChanConti2}). If $\varepsilon_i$ stays reasonably small, $E_{\rm elast.}+E_{\rm surf.}$ increases with the increasing number of branching generations, while the energy of the closure domains decreases. Apparently, there might be an optimum number of branching generations, and optimum twin width $d_0$ such that the total energy is minimal. Nevertheless, this minimum may not arise in the region of validity of the approximation (\ref{surf_i}). As a result, a more detailed model with an improved surface energy term is needed. 

Let us notice that the need for finite $N$ and the limited validity of (\ref{surf_i}) is not particular to any specific material parameters. It is geometrically impossible to construct an infinitely fine branching structure ($N\rightarrow\infty$) that does not cause the total length of interfaces in the structure to grow to infinity\footnote{Indeed, the total length of interfaces is always greater or equal than $\sum_{i=1}^N{L_i/d_i}$, and this sum is divergent unless $L_i/d_i\rightarrow{}0$. As $L_i/d_i\rightarrow{}0$, however, the interfaces between $\mathbf{A}+\boldsymbol{\delta} \mathbf{A}_i$ and $\mathbf{B}$ become more and more parallel to the habit plane, the length of the interfaces in one branching segment of the $i-$th layer becomes proportional to $d_i$, and the total sum of the length of interfaces grows to infinity.}. Due to (\ref{compat1}) and (\ref{deltaA}) the strain jump between $\mathbf{A}+\boldsymbol{\delta}\mathbf{A}_i$ and ${\mathbf B}$ is equal to ${\mathbf a}\otimes({\mathbf n}+\varepsilon_i{\mathbf m})$, and so it cannot diminish
 to zero for any $\varepsilon_i$, i.e., the interfaces are never smoothed out, which means there is always some non-zero surface energy associated with the interface. Hence, due to the length of interfaces going to infinity, the surface energy must diverge in the limit $N\rightarrow\infty$. As a consequence of this, there indeed should exist a minimum of energy corresponding to a specific number of the branching generations.

 \subsection{The extended model: main properties and a numerical test}
 
 The above motivation suggests that a more realistic construction of the branched structure should take into account releasing the elastic energy from the $0-$th layer, and the number of branching generations $N$ in this construction should be finite and should follow from energy minimization. To enable a realistic determination of $N$ from energy minimization, the current construction must be extended in two respects. First, we need to propose an upper bound estimate for the energy of the closure domains $G_{AB}$ depending on the same material parameters as used for calculating energy of the branched region. Second, we need to capture the increase of the surface energy in the branched structure in the limit $N\rightarrow\infty$, i.e., with increasing $\varepsilon_i$. With these two extensions, and  allowing the $0-$th layer to be unbranched, the construction becomes involved, and gaining any direct analytical insight into the properties of the energy minimizers of the construction becomes difficult. Nevertheless, several general conclusions can still be drawn, as shown in detail in the Appendix, where the extended construction is done under some simplifying assumptions. Moreover, under these assumptions, lower and upper bounds on the energy can be shown, and they collapse to the expected scaling law.  This is also discussed in the Appendix.  The assumptions are:
 \begin{enumerate}
  \item The elastic strains at the interface appear only in the martensite part of the crystal, and the elasticity is fully described by one isotropic elastic constant, which is the shear modulus $\mu$. The multi-well energy density $\varphi(\nabla{\mathbf y})$ is then assumed to be composed of two isotropic quadratic energy wells corresponding to the variants ${\mathbf A}$ and ${\mathbf B}$, which leads the form (\ref{assump2}). This elastic energy density is used for calculating both the elastic energy of the branched structure and of the closure domains.
  \item The surface energy terms are given by (\ref{surfE}); the constant $\sigma$ is assumed as universal, i.e. it applies both for the interfaces between the variants ${\mathbf A}$ and ${\mathbf B}$ and for the interfaces between these variants and the closure domains.
  \item The elastic energy in the branching segment is delocalized in a specific way (Fig.~\ref{BestAnsatz}) such that the strain jumps across all interfaces are the same. This delocalization leads to tilt of all interfaces and to heterogeneous strains in the region ${\mathbf B}_i^*$. Notice that the elastic energy of this branching segment is always smaller than the elastic energy of the original segment in Fig.~\ref{branched}; so the construction based on this segment leads to a better upper bound. 
 \end{enumerate}
 We will leave the details on this construction to the Appendix. Here  we will demonstrate that this extended model gives realistic predictions of length-scales and morphologies in real shape memory materials. The numerical procedure described in Subsection \ref{ssec:numProc} determines, for each given set of the most basic material parameters (${\mathbf A}$, ${\mathbf B}$, $\mu$ and $\sigma_{AB}=\sigma{}|{\mathbf a}|$) and the length of the martensite part of the crystal $L$, a set of optimized parameters $d_0^*$ (twin width in the $0-$th layer), $L_0^*$ (length of the $0-$th layer) and $N^*$ (the energy-minimizing number of branching generations).

For the numerical test, we study the branched microstructure shown in the optical micrograph in Fig.~\ref{branchingEx}.  The sample is a Cu-Al-Ni alloy (Cu$_{69}$Al$_{27.5}$Ni$_{3.5}$) undergoing the cubic-to-orthorhombic phase transformation, and the observed microstructure is a habit plane between austenite and a laminate of Type II twins. 
 
 From the lattice parameters of this alloy and the corresponding Bain matrices \citep{Bhattacharya}, the input parameters for the numerical procedure were calculated as 
 \begin{equation}\label{crys2}
\mathbf{A} = \left(
\begin{array}{ccc}
 1.0412 & 0.0468 & -0.0204 \\
 -0.0541 & 0.9154 & 0.0538 \\
 -0.0159 & -0.0464& 1.0411 \\
\end{array}
\right), \quad \mathbf{B} = \left(
\begin{array}{ccc}
 0.9150 & -0.0793 & 0.0215 \\
 0.0689 & 1.0385 & 0.0129 \\
 -0.0198 & -0.0503 & 1.0424 \\
\end{array}
\right),
\end{equation}
which resulted in
\begin{equation}\label{crys1}
\mathbf{n} = \left(\begin{array}{c} 0.6884 \\ 0.6884 \\ -0.2286 \end{array}\right), \quad \mathbf{m} =  \left(\begin{array}{c} 0.2611  \\ 0.7274 \\ -0.6346 \end{array}\right), \quad \mathbf{a} = \left(\begin{array}{c}  0.1833 \\  -0.1788 \\ 0.0057  \end{array}\right), \quad \lambda = 0.6992, \quad \alpha = 0.5642.
\end{equation}
From the angle between the traces of the habit plane and the twining planes observed in the micrograph (approximately 53$^\circ$), the cut plane was identified as $\mathbf{p} =( 0.0404, 0.9986, -0.0340)$ in the reference configuration. This is in a good agreement with the assumption that the crystal was cut approximately along the principal planes.

For the material parameters we note that the shear modulus in Cu-Al-Ni is strongly anisotropic and varies from 10 GPa to 100 GPa depending on the loading direction. As a realistic estimate for the characteristic shear modulus of martensite, we take $\mu = 70$ GPa.
 The interfacial parameter $\sigma_{AB} = \sigma | \mathbf{a}|$ is less certain: published results include $70$ mJ.m$^{-2}$  (\citep{shilo}, Cu-Al-Ni, curvature measurement), $187$ mJ.m$^{-2}$ (\citep{waitz}, Ni-Ti, first-principle calculation) and $530$ mJ.m$^{-2}$ (\citep{seiner}, Cu-Al-Ni microstructure scaling).  In addition, the total length of the martensite band can vary from a few microns to several millimeters.   Thus,  we take $\mu = 70$ GPa, $\sigma_{AB} = 100$ mJ.m$^{-2}$ and $L = 5$ mm for an explicit calculation of the laminate microstructure. For these material parameters, the optimized energy of the configurations $E^{(N)}$ with $N = 0,1,\ldots, 25$ branching generations was calculated with the result provided in Fig.~\ref{fig:Sigma200mu70}(a).   This gives $N^{\star} = 11$ as the optimal number branching generations.  The energy per unit depth and width for this energy minimizing configuration is $E^{\star} = E^{(N^{\star})} = 52.6$ J/m$^2$, the ratio of the unbranched length to total length is $L_0^{\star}/L  = 0.66$, and the twin width of the unbranched segment is $d_0^{\star} = 101.1$ $\mu$m. Notice that, from the energy $E^{\star}$, only approximately 3 J/m$^2$ is stored in the surface energy of the twins of the unbranched $0-$th layer. The rest is the energy of the branched structure, which means that the energy per unit area of a habit plane is more than two orders of magnitude higher than the surface energy of the twins. However, the branching is obviously strongly energetically preferred over the simple laminate microstructure, as the $E^{(N)}$ sharply decreases from $0$ to $N^\star$ in Fig.~\ref{fig:Sigma200mu70}(a).  
 Finally, a direct one-to-one comparison of the numerical model for these parameters and the experiment is given in Fig.~\ref{fig:Sigma200mu70}(b-c), and this is in striking agreement.

\begin{figure}
\centering
\includegraphics[width = 6.2in]{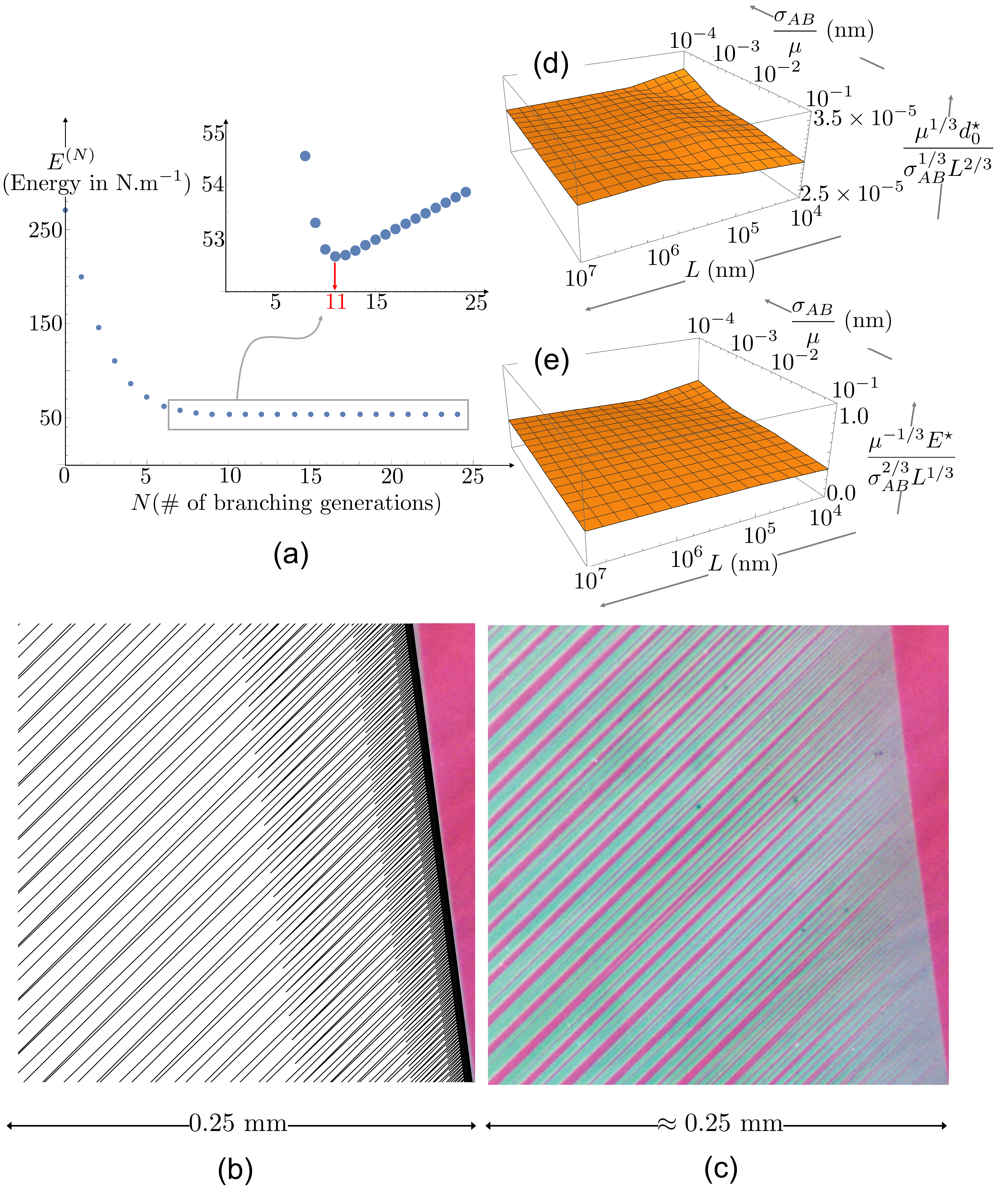}
\caption{Modeling the branched microstructure of CuAlNi (micrograph courtesy of  C.\;Chu). (a) The energy of branching microstructure for $\mu = 70$ GPa, $\sigma_{AB}  = 100$ mJ.m$^{-2}$, $L = 5$ mm, and for a given number of branching generations.  The minimum is achieved at $11$ branching generations.  This minimum yields the one-to-one model and experiment comparison in (b-c).  The parametric study in (d-e) reveals that the energy collapses to the Kohn and M\"{u}ller scaling over four orders of magnitude in geometric and material length scales.}
\label{fig:Sigma200mu70}
\end{figure}

 In order to explore the behavior of our construction, we also performed a parametric study for the `material length scale' $\sigma_{AB}/ \mu \in \{10^{-4},10^{-3},10^{-2},10^{-1} \}$ nm and the geometric length scale $L \in \{10^{4},10^{5},10^{6},10^{7}\}$ nm to address the entire range of plausible parameters. The results, i.e. the optimized values of $d^{\star}_0$, $L^{\star}_0$ (normalized with respect to the total length), $N^{\star}$ and the resulting energy $E^\star$ (normalized with respect to the elastic modulus) for each combination of parameters, are given in 
 Tab.~\ref{tab:interfaces}. 
   
The results in this table reveal several important properties of the optimized construction. Most importantly, the number of branching generations stays within reasonable limits
for the whole range of parameters. Just one generation of branching is observed for the highest $\sigma_{AB}/\mu$ ratio and the shortest geometric length, which is consistent with the assumption that the branching becomes more and more energetically preferred with decreasing $\sigma_{AB}$ and increasing length. The highest number $N^{\star}$ obtained throughout the study is 14. The twin width for the same set of parameters is $d_0^{\star}=66.1$ $\mu$m, which means that the finest laminate at the habit plane has the width of approximately $d_{14}^{\star}=4$ nm, i.e., still not below the lattice parameters for typical shape memory alloys, which is about ten times smaller. This means that, even for the highest number of branching generations, the continuum theory description may still be valid and well justified. 

\renewcommand{\arraystretch}{1.75}
\begin{table}[!ht]
	\centering

		\caption{A parametric study of energy-minimizing constructions of branching in  Cu$_{69}$Al$_{27.5}$Ni$_{3.5}$ in geometric lengths $L$ and material lengths $\sigma_{AB}/\mu$. The outputs of the study are the following optimized parameters: twin width in the unbranched region $d_0^{\star}$, the relative length of the unbranched region ${L^{\star}_0}/{L}$, and the number of branching generations $N^{\star}$. For each optimized construction, also the resulting energy $E^{\star}$ is shown.}\label{tab:interfaces}
		\begin{tabular}{p{3.5cm}p{3cm}p{3cm}p{3cm}p{2.5cm}}
			\hline \hline
		  \small{}	  &\small{$L = 10^{4}$ nm}				    &  \small{$L = 10^{5}$ nm}		          & \small{$L= 10^{6}$ nm} 		& \small{$L= 10^7$ nm} 	\\
			\hline	
			 {$\frac{\sigma_{AB}}{\mu} =10^{-4}$ nm}   & \small{$d_0^{\star} =0.7$ $\mu$m} & \small{$d_0^{\star} =3.1$ $\mu$m} 		  &\small{$d_0^{\star} = 14.3$ $\mu$m }& \small{$d_0^{\star} =66.1$ $\mu$m}           \\
			\small{}     & \small{$\frac{L^{\star}_0}{L} = 0.66$ }& \small{$\frac{L^{\star}_0}{L} = 0.66$} 		  &\small{$\frac{L^{\star}_0}{L} = 0.66$}  & \small{$\frac{L^{\star}_0}{L} = 0.66$} 	 	\\
	 \small{}		    & \small{$N^{\star}= 8$} & \small{$N^{\star}=10$} 		  &\small{$N^{\star}= 12$} & \small{$N^{\star}= 14$} 	 	\\
			    & \small{$\frac{E^{\star}}{\mu}=0.02$ nm }& \small{$\frac{E^{\star}}{\mu}=0.03$  nm} 		  &\small{$\frac{E^{\star}}{\mu}=0.07$ nm} & \small{$\frac{E^{\star}}{\mu}= 0.16$ nm}	 	\\		
			\hline		  
		{$\frac{\sigma_{AB}}{\mu} =10^{-3}$ nm}	    & \small{$d_0^{\star} =1.4$ $\mu$m} & \small{$d_0^{\star} =6.7$ $\mu$m }		  &\small{$d_0^{\star} =30.4$ $\mu$m} & \small{$d_0^{\star} = 142.5$ $\mu$m}           \\
			\small{  }     & \small{$\frac{L^{\star}_0}{L} = 0.65$}  &\small{$\frac{L^{\star}_0}{L} = 0.66$}  		  &\small{$\frac{L^{\star}_0}{L} = 0.66$} &\small{$\frac{L^{\star}_0}{L} = 0.66$}  	 	\\
			\small{ }    & \small{$N^{\star}= 6$} & \small{$N^{\star}=8$}		  &\small{$N^{\star}=10$} & \small{$N^{\star}=12$} 	 	\\	
			  		& \small{$\frac{E^{\star}}{\mu}=0.07$ nm} & \small{$\frac{E^{\star}}{\mu} =  0.16$ nm}		  &\small{$\frac{E^{\star}}{\mu}=0.35$ nm} & \small{$\frac{E^{\star}}{\mu}=0.75$ nm} 	 	\\
			\hline				
			{$\frac{\sigma_{AB}}{\mu} =10^{-2}$ nm}  & \small{$d_0^{\star} =3.0$ $\mu$m} & \small{$d_0^{\star} =14.5$ $\mu$m }		  &\small{$d_0^{\star} =66.5$ $\mu$m} & \small{$d_0^{\star} = 307.4$ $\mu$m }          \\
			\small{ }     & \small{$\frac{L^{\star}_0}{L} =0.67$} & \small{$\frac{L^{\star}_0}{L} = 0.65$} 		  &\small{$\frac{L^{\star}_0}{L} =0.66$} & \small{$\frac{L^{\star}_0}{L} = 0.66$ }	 	\\
			   \small{}  & \small{$N^{\star}=3 $} & \small{$N^{\star}=6$} 		  &\small{$N^{\star}=8$} & \small{$N^{\star}= 10$} 	 	\\	
			     	 & \small{$\frac{E^{\star}}{\mu}=0.34$ nm}& \small{$\frac{E^{\star}}{\mu}=0.74$ nm} 		  &\small{$\frac{E^{\star}}{\mu}=1.61$ nm} & \small{$\frac{E^{\star}}{\mu}= 3.47$ nm} 	 	\\
			\hline				  
			  {$\frac{\sigma_{AB}}{\mu} =10^{-1}$ nm}    & \small{$d_0^{\star} =6.3$ $\mu$m} & \small{$d_0^{\star} =29.9$ $\mu$m }		  &\small{$d_0^{\star} =144.9$ $\mu$m} & \small{$d_0^{\star} =665.3$ $\mu$m}           \\
			\small{}     & \small{$\frac{L^{\star}_0}{L} = 0.68$} & \small{$\frac{L^{\star}_0}{L} =0.67$} 		  &\small{$\frac{L^{\star}_0}{L} =0.65$} & \small{$\frac{L^{\star}_0}{L} = 0.66$} 	 	\\
			  \small{}   & \small{$N^{\star}=1 $} & \small{$N^{\star}=3$} 		  &\small{$N^{\star}= 6$} & \small{$N^{\star}=8$} 	 	\\	
			 & \small{$\frac{E^{\star}}{\mu}=1.44$ nm} & \small{$\frac{E^{\star}}{\mu}=3.39$ nm}		  &\small{$\frac{E^{\star}}{\mu}=7.44$ nm}& \small{$\frac{E^{\star}}{\mu}=16.16$ nm }	\\	\hline \hline
		\end{tabular}
		
\end{table}

The decisive factor for $N^{\star}$ appears to be ratio between the material and geometric lengths, $\sigma_{AB}/\mu{}L$. For example, all combinations of parameters with $\sigma_{AB}/\mu{}L=10^{-8}$ give $N^{\star}=8$. This is a natural result, as the ratio $\sigma_{AB}/\mu{}L$ measures, in some sense, how effectively the total energy is reduced with each branching generation. 
A more surprising result is obtained for the length of the unbranched segment $L_0$. We observe that, for the whole range of parameters explored, this length is approximately equal to two thirds of the total length. This is consistent with the fact that branching is typically experimentally observed just close to the habit planes, while many laminates in temperature-induced martensitic microstructures are unbranched.

In addition, we find that the data for the optimal unbranched twin width $d_0^{\star}$ and energy $E^{\star}/\mu$ nearly collapse to constants when these quantities are normalized by $\mu^{-1/3} \sigma_{AB}^{1/3}L^{2/3}$ and $\mu^{-2/3} \sigma_{AB}^{2/3} L^{1/3}$, respectively (Fig.~\ref{fig:Sigma200mu70}(d-e)).  This observation suggests that the scaling laws for our construction are
\begin{align}\label{scalings}
d_0^{\star} \sim \mu^{-1/3} \sigma_{AB}^{1/3} L^{2/3}, \quad E^{\star} \sim \mu^{1/3} \sigma_{AB}^{2/3} L^{1/3}.
\end{align}
These nontrivial scalings agree with the results of  Kohn and M\"{u}ller in the simplified (non-vectoral) setting for $\sigma_{AB}\rightarrow{0}$, but Kohn and M\"{u}ller's derivation (and other derivations in this direction) made somewhat coarser approximations for upper and lower bounds.  
We  show in the Appendix that the energy of the construction used for the above numerical test is bounded above by $C\mu^{1/3}\sigma^{2/3}L^{1/3}$, where $C$ is a constant dependent only on the crystallographic parameters (which means independent of $\mu$, $\sigma_{AB}$ and $L$), and that this bound holds not only in the $\sigma_{AB}\rightarrow{0}$ limit, but also for any $\sigma_{AB}<\mu{}L$, i.e., whenever the above introduced material length is smaller than the physical length of the martensite region. The fact that the rescaling of the results of the numerical test appears to collapse to nearly a constant, i.e., $d_0^{\star} \approx 3 \times 10^{-5} \mu^{-1/3} \sigma_{AB}^{1/3} L^{2/3}$ and $E^{\star} \approx 0.35  \mu^{1/3} \sigma_{AB}^{2/3} L^{1/3}$ for the entire range of experimentally/physically relevant geometric and material length scales suggests that the energy of our construction may be close to the $C\mu^{1/3}\sigma^{2/3}L^{1/3}$ upper bound.  

In the Appendix, we also present an ansatz-free lower bound for the full three-dimensional setting, constructed in the same spirit as  \cite{KM1,KM2} and  \cite{ChanConti1,ChanConti2}. The result is that the energy of the branched structure for $\sigma_{AB}<\mu{}L$ is bounded below by $c\mu^{1/3}\sigma^{2/3}L^{1/3}$, where $c$ is another constant, $0<c\leq{}C$.  This directly implies that, for $\sigma_{AB}\rightarrow{0}$, our construction gives the optimal energy scaling.  Nevertheless, neither the upper bound nor the lower bound  imply that the energy of a real branched microstructure with finite $N$ and small but finite $\sigma_{AB}$ should follow the $\mu^{1/3}\sigma^{2/3}L^{1/3}$ scaling. As seen in Tab.~\ref{tab:interfaces}, only one generation of branching is optimal (within our ansatz)  for $\sigma_{AB}/\mu{}L=10^{-5}$, i.e., in a setting where $\sigma_{AB}\ll\mu{}L$. For higher $\sigma_{AB}$, one can expect the simple laminate microstructure with the closure domains to be the optimal upper bound within our ansatz, and this simple laminate can be expected to exhibit the $E\sim\sigma_{AB}^{1/2}$ scaling with $\sigma_{AB}\rightarrow\mu{}L$. However, the lower bound still holds, which indicates that $c$ and $C$ must be relatively far away from each other, and so the conclusion on the scaling can be drawn indeed only for  $\sigma_{AB}\rightarrow{0}$. On the other hand, the scaling argument as formulated in Section \ref{scalingargument}  indeed holds: regardless of the prefactors, the branching construction is always energetically preferred over a simple 1$-$st order laminate in the $\sigma_{AB}\rightarrow{0}$ limit. This is, however, not  surprising, as  $\sigma_{AB}\rightarrow{0}$ implies $N^{\star}\rightarrow{\infty}$, and the detailed construction becomes nearly identical to the simple construction presented in Section 3, where $N$ was considered sufficiently large and the energy of the closure domains was neglected.

\section{Conclusions}

The main aim of this work was to construct a realistic model for the energy and length scales of a branched martensitic microstructure, directly applicable to real shape memory alloys, and to study the properties of this model, mainly in terms of its energy scaling. This required that the construction be done in a fully three-dimensional setting of non-linear elasticity. Motivated by the classical approach of  \cite{KM1,KM2}, we used a self-similar construction for the microstructure, and showed that the resulting energy upper bound gives the expected scaling $E\sim{}\mu^{1/3} \sigma_{AB}^{2/3} L^{1/3}$; thus, the fundamental scaling argument for small interfacial energy ($\sigma_{AB}\rightarrow{0}$) holds. Hence, the energy of the branched structures is, at least in the limit, lower than the energy of a simple laminate -- this result was expected from previous simplified constructions \citep{KM1,KM2,Capella_Otto_1,Capella_Otto_2,Dondl,Conti}, but here we proved its validity for a three dimensional construction applicable to  real alloys. 

The proposed construction is versatile, enabling several additional features to be incorporated into the model. One example is the more detailed construction introduced in Section 4 and developed in the Appendix; this construction takes into account the fact that the 0$-$th layer remains unbranched, does not neglect the energy of the closure domains, and, most importantly, it anticipates that the number of branching generations follows from energy minimization. Although this construction is still rather a rough upper bound, i.e., stress equilibrium is not imposed and the elastic strains are piecewise homogeneous, we show that the geometric parameters  -- the width of the twins, the number of the branching generations, the aspect ratios of the branching segments --  corresponding to the energy minimizer within the ansatz of the construction mimic very well experimental observations. This suggests that the delicate balance between elastic and surface energies is the dominant mechanism for the observed microstructures at the austenite-martensite interface in shape memory alloys.  Our conclusions give insight on how these structures can be manipulated.

The subsequent parametric study revealed that the detailed construction obeys relatively simple scaling laws for the energy and the twin widths throughout a broad range of physically admissible parameters, which covers $N^{\star}$ ranging from 1 to 14. This is a rather unexpected result, as the construction involves balancing of the energy between the branched microstructure and the closure domains.

\section*{Acknowledgment}

H.S. and B.B. acknowledge  financial support from the Czech Science Foundation [grant no. 18-03834S]. H.S. further thanks  J. William Fulbright Commission (Prague) and the Ministry of Education, Youth and Sports of the Czech Republic, grant program INTER-EXCELLENCE/INTER-ACTION [grant No. LTAUSA18199].  PP thanks the MURI program (FA9550- 16-1-0566) for support.
RDJ benefited from the support of NSF (DMREF-1629026), ONR (N00014-18-1-2766), MURI (FA9550-18-1-0095), the Medtronic Corp. and a Vannevar Bush Faculty Fellowship.

\addcontentsline{toc}{section}{References}

\appendix

\section{Construction and full optimization of a detailed model}

In this appendix, we present an example of a detailed construction of a branching microstructure, in which the deformation gradient is almost everywhere piecewise constant and rank-one compatible across interfaces. This construction includes the unbranched $0-$th layer and the closure domains, and will be fully optimized within a set of modeling assumptions (discussed below) that enables explicit forms for the optimal number of branching generations $N^{\star}$, the length of the unbranched segment $L^{\star}_0$ and subsequent branching segments $L^{\star}_1, \ldots, L^{\star}_{N^{\star}}$, and the optimal twin width $d^{\star}_0$.  These are obtained in terms of experimentally measurable quantities only: a characteristic modulus of martensite $\mu$, the crystallographic parameters $\mathbf{a}, \mathbf{b}, \mathbf{m}, \mathbf{n},\alpha = \sqrt{1- (\mathbf{n} \cdot \mathbf{m})^2}$ and $\lambda$ (see (\ref{compat1}-\ref{compat2})), the interfacial energy of a twin per unit length $\sigma_{AB} = \sigma |\mathbf{a}|$ (see equation (\ref{assump1}) below), and the length of the overall $AB$ laminate $L$.  We identify these as the \textit{experimental parameters} throughout the calculation below.  

\subsection{Formulation}

The detailed model reflects the growth of the surface energy with increasing $\varepsilon$. This increase appears, physically, due to two reasons: i) the interfaces are becoming longer, and ii) the interfaces are rotated from the crystallographically preferred twinning planes, which can be understood as formation of defects (steps) on the twinning plane that costs additional energy. Principally, this defect formation need not be a consequence of elastic strain.  However, it happens that the surface energy term modeled by relation (\ref{surfE})
 \begin{equation}\label{assump1}
E_{\text{surf.}}[\mathbf{y}] =  \int_{\Omega} \sigma |\nabla^2 \mathbf{y}| dx = \sigma \times  (``\text{length of the interface}")  \times (``\text{jump in the def.\;gradient}")
 \end{equation}
 does, in fact, reasonably account for these coupled effects since the jump in the deformation gradient does increase with increasing $\varepsilon$. Note that the interfacial energy per unit length of a twin obtained from experiments is $\sigma_{AB} = \sigma |\mathbf{a}|$  in this setting. Note further that, for constant deformation gradients ${\mathbf A}$ and ${\mathbf B}$ forming a compatible planar interface, the jump in the deformation gradient appearing in equation (\ref{assump1}) is equal to $|{\mathbf A}-{\mathbf B}|$, where $|\cdot|$ is the Frobenius norm. Since two such gradients must satisfy equation (\ref{compat1}) with some shearing vector ${\mathbf a}$, the Frobenius norm of ${\mathbf A}-{\mathbf B}$ is directly equal to $|{\mathbf a}|$.
 
We also use a simplified, but physically justified, model for the elastic energy density $\varphi(\mathbf{F})$ of the shape memory alloy to make the calculation fully explicit.  In this direction, it is natural to consider the energy as a linear elastic perturbation away from the deformation gradients $\mathbf{A}$ and $\mathbf{B}$ since branching microstructure should not deviate significantly from these deformation gradients.  The natural perturbation is akin to $\varphi_E$ in (\ref{varphiE}), but, given the lack of a precise characterization of the modulus tensor of martensite for most shape memory alloys, we instead introduce the single moduli approximation
\begin{align}\label{assump2}
\varphi(\mathbf{F})  \stackrel{\rm def.}{=} \frac{\mu}{2} \min\Big\{  |\sym \boldsymbol{\delta} \mathbf{A} |^2, |\sym \boldsymbol{\delta} \mathbf{B}|^2\Big\} \quad \text{ where } \quad \begin{cases}  \boldsymbol{\delta}\mathbf{A}  \stackrel{\rm def.}{=} \mathbf{F} - \mathbf{A} \\
\boldsymbol{\delta} \mathbf{B}  \stackrel{\rm def.}{=} \mathbf{F} - \mathbf{B}   \end{cases}
\end{align}
to make an explicit calculation as simple as possible. Here, we think of $\mu$ as a characteristic elastic modulus of the martensite phase.

\subsection{Derivation of the total energy}\label{ssec:derivation}

\begin{figure}[t!]
\centering
\includegraphics[width= 6.8 in]{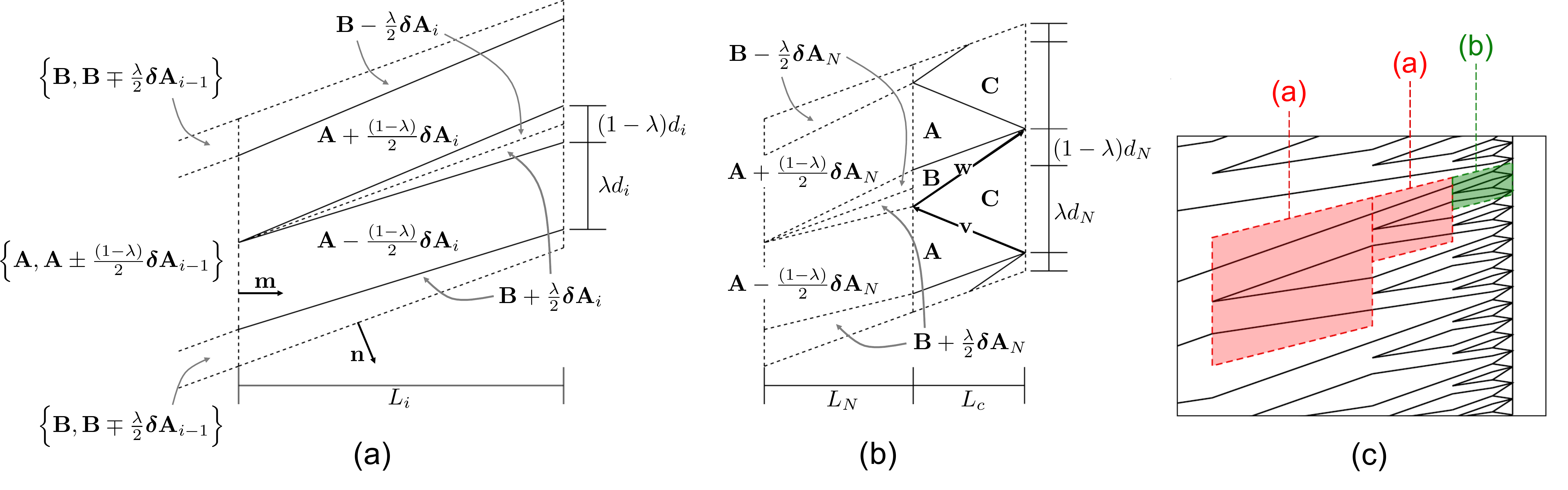}
\caption{Ansatz for the construction of branching microstructure in the detailed model.  (a).  Each branching layer is constructed with all the interfaces slightly tilted using the perturbations $\boldsymbol{\delta} \mathbf{A}_i = \varepsilon_i \mathbf{a} \otimes \mathbf{m}$.  The magnitude and sign of this perturbation in each layer is chosen to give the compatible piecewise homogenous deformations gradients with the smallest energy density $E_{\rm elast.}^{(i)}$;  (b). The geometry of a closure domain of arbitrary length $L_c > 0$.  This has a deformation gradient $\mathbf{C}$ (explicitly given below) that is simultaneously compatible with $\mathbf{A}$ along $\mathbf{v}$, $\mathbf{B}$ along $\mathbf{w}$, and $\mathbf{I}$ along $\mathbf{m}^{\perp}$ (in the plane spanned by $\mathbf{m}$ and $\mathbf{n}$); (c) placements of (a) and (b) segments in the construction.}
\label{BestAnsatz}
\end{figure}

The total energy (\ref{energy}) for the construction, computed using (\ref{assump1}) and (\ref{assump2}), has five distinct contributions 
\begin{equation}\label{0Result}
E = E_{0} + E_{\rm elast.}^{br.} + E_{\rm surf.}^{br.} + E_{\rm elast.}^{cl.} + E_{\rm surf.}^{cl.}
\end{equation}
obtained as follows:
\begin{itemize}
 \item {\it The unbranched $0$-th layer $-$} As discussed above, there is no benefit to branching in this layer; so we may use a simple $AB$ laminate here.   For the width of a twinning domain $d_0>0$ and length of the layer $0 \leq L_0 < L$, the surface energy of the $0$-th unit cell is then
 \begin{equation}
E_{\rm surf.}^{(0)} = 2 \sigma \frac{L_0}{\alpha} |\mathbf{A} - \mathbf{B}| = 2 \sigma |\mathbf{a}|\frac{L_0}{\alpha} = 2 \sigma_{AB}\frac{L_0}{\alpha}
\end{equation}
due to the rank-one compatibility condition (\ref{compat1}). There is no elastic energy in the $AB$ laminate.  Consequently, the total energy of the layer is
\begin{equation}\label{1Result}
E_0 = 2 \sigma_{AB}  \frac{L_0}{\alpha d_0}  = 2  (1-\lambda) \sigma_{AB} \varepsilon_0^{-1} 
\end{equation}
since there are $1/d_0$ unit cells in the layer.  (Here and below, we recall the definition (\ref{diLi}).) 
\item {\it The elastic energy of branching $-$}  To accommodate branching, we introduce a compatible, but strained, set of deformation gradients shown schematically in Fig.~\ref{BestAnsatz}(a).  These are $\mathbf{A} \pm \frac{(1-\lambda)}{2}\boldsymbol{\delta}\mathbf{A}_i$ each in an area $\lambda d_i L_i$ and $\mathbf{B} \pm \frac{\lambda}{2} \boldsymbol{\delta} \mathbf{A}_i$ each in an area $(1-\lambda) d_iL_i$ on the $2 d_i L_i$ cell.  As $\boldsymbol{\delta} \mathbf{A}_i = \varepsilon_i \mathbf{a} \otimes \mathbf{m}$, the total elastic energy of the cell is thus
\begin{equation}
\begin{aligned}
&E_{\rm elast.}^{(i)} = \frac{\mu}{4} \lambda(1-\lambda)L_i d_i |\sym(\boldsymbol{\delta} \mathbf{A}_i)|^2=  \frac{\mu}{4}\lambda(1-\lambda) g_2 L_i d_i \varepsilon_i^2 
\end{aligned}
\end{equation}
for  $g_2  \stackrel{\rm def.}{=} \frac{1}{2}(|\mathbf{a}|^2 + (\mathbf{a} \cdot \mathbf{m})^2)$.  To obtain the first equality, we argue on physical grounds that the tilt $\varepsilon_i$ will always remain sufficiently small so that the minimizing perturbations for the energy density (\ref{assump2}) are as formulated in the figure. Indeed, large tilt away from the crystallographic plane is never observed experimentally, and so we expect this to hold true also for energy minimizing configurations of the detailed model described herein. This leads directly to the \textit{geometric factor} $g_2$ as shown.  Hence, as there are $1/d_{i-1} = 1/(2d_i) =2^{i}/(2 d_0)$ cells in each layer, the elastic energy of the branched part of the domain is given by 
\begin{equation}
\begin{aligned}\label{2Result}
E_{\rm elast.}^{br.} &= \frac{\mu}{8} \lambda(1-\lambda)  g_2 \sum_{i = 1}^N L_i \varepsilon_i^2 = \frac{\mu}{8} \lambda(1-\lambda)^2 \alpha g_2 \Big(d_0 \sum_{i=1}^N 2^{-i} \varepsilon_i\Big).
\end{aligned}
\end{equation}
Notice that every layer of the proposed microstructure  in Fig.~\ref{BestAnsatz} satisfies the macroscopic compatibly condition with austenite, as  
 \begin{equation}\label{eq:moved}
 \begin{aligned}
 &\frac{\lambda}{2} \Big( \mathbf{A} - \frac{(1-\lambda)}{2} \boldsymbol{\delta}\mathbf{A}_i\Big) + \frac{\lambda}{2}  \Big( \mathbf{A} - \frac{(1-\lambda)}{2} \boldsymbol{\delta}\mathbf{A}_i\Big) + \frac{(1-\lambda)}{2} \Big( \mathbf{B} - \frac{\lambda}{2} \boldsymbol{\delta}\mathbf{A}_i\Big) + \frac{(1-\lambda)}{2}\Big( \mathbf{B} + \frac{\lambda}{2} \boldsymbol{\delta}\mathbf{A}_i\Big) \\
 &\qquad = \lambda \mathbf{A} + (1-\lambda) \mathbf{B} = \mathbf{I} + \mathbf{b} \otimes \mathbf{m},
 \end{aligned} 
 \end{equation}
 with the $0-$th unbranched layer having  $\boldsymbol{\delta} \mathbf{A}_0 = \mathbf{0}$.
\item {\it The surface energy of branching $-$} This calculation is more involved as there are many interfaces and subcases to consider for the cell problem in Fig.~\ref{BestAnsatz}(a): (i)  two interfaces of length $|\alpha^{-1} L_i \mathbf{n}^{\perp} + 2^{-1}(1-\lambda) d_{i} \mathbf{m}^{\perp}|$ with jump across the interface $|2^{-1} \boldsymbol{\delta} \mathbf{A}_i + \mathbf{A} - \mathbf{B}|$;  (ii) two of length $|\alpha^{-1} L_i \mathbf{n}^{\perp} - 2^{-1}(1-\lambda) d_{i} \mathbf{m}^{\perp}|$ with jump $|-2^{-1}\boldsymbol{\delta} \mathbf{A}_i + \mathbf{A} - \mathbf{B}|$;  (iii) two of length $\alpha^{-1}L_i$ with jump $|\lambda \boldsymbol{\delta} \mathbf{A}_i|$; (iv) and one of length $2d_i$ describing the transition from layer $i-1$ to layer $i$.  This latter interface has three potential subcases (as shown in Fig. \ref{BestAnsatz}(a)), but a straightforward calculation reveals that, if $\varepsilon_{i-1} \leq \varepsilon_i$, then each subcase has the exact same surface energy: $\sigma |\mathbf{a}|(1-\lambda) \lambda d_i \varepsilon_i$.   This monotonicity assumption is reasonable, as it is suggested from the optimization in the self-similar analysis (\ref{epsopt}). 
  In enforcing this assumption and combining (i-iv), the total surface energy of the cell problem is given by 
\begin{align}
E_{\rm surf.}^{(i)} &=  2 \sigma |\mathbf{a}| \alpha^{-1} L_i\Big( |\mathbf{n}+ \frac{\varepsilon_i}{2} \mathbf{m}|^2 +  |\mathbf{n} - \frac{\varepsilon_i}{2} \mathbf{m}|^2\Big)  + 2  \lambda \sigma |\mathbf{a}| \alpha^{-1} L_i \varepsilon_i +  \sigma |\mathbf{a}|(1-\lambda) \lambda d_i \varepsilon_i  \nonumber \\
&= 2 \sigma_{AB} (1-\lambda ) d_i \Big(\frac{2}{\varepsilon_i} + \lambda + \frac{1+\lambda}{2} \varepsilon_i  \Big).
\end{align}
The first term in the first equality is the result of a explicit calculation involving (i) and (ii) above, the second follows an explicit calculation involving (iii), and the third is (iv).   The second equality is then obtained directly from the first by substituting $L_i = (1-\lambda) \alpha d_i/\varepsilon_i$. Consequently and as there are $1/d_{i-1} = 1/(2d_i) =2^{i}/(2 d_0)$ cells in each layer, the total surface energy of the branched part of the domain is given by 
\begin{equation}\label{3Result}
E_{\rm surf.}^{br.} =  (1-\lambda )\sigma_{AB} \sum_{i = 1}^{N} \Big(\frac{2}{\varepsilon_i} + \lambda + \frac{1+\lambda}{2} \varepsilon_i  \Big).
\end{equation}
 \item {\it The elastic energy of closure domains $-$} After $N$ branching generations, we utilize closure domains to make the branching structure compatible with the austenite.  Following \cite{zhang}, we introduce a triangular region of length $L_c >0$ and deformation gradient $\mathbf{C}$ that is compatible with the $AB$ laminate and the austenite (see Fig.~\ref{BestAnsatz}(b)).  This satisfies
 \begin{equation}
\begin{aligned}
\mathbf{C} &= \mathbf{I} + \mathbf{b} \otimes \mathbf{m} + \lambda \varepsilon_c \mathbf{a} \otimes \mathbf{m} 
\end{aligned}
\end{equation}
for $\varepsilon_c   \stackrel{\rm def.}{=}  (1-\lambda)\alpha \frac{d_N}{L_c}$,
and it is compatible along the edges 
\begin{equation}
\begin{aligned}
&\mathbf{v} = \lambda d_N \mathbf{m}^{\perp} - \alpha^{-1} L_c\mathbf{n}^{\perp}  \quad (\text{with } \mathbf{A}),  \\
&\mathbf{w} = (1-\lambda) d_N \mathbf{m}^{\perp} + \alpha^{-1} L_c \mathbf{n}^{\perp} \quad (\text{with } \mathbf{B}).
\end{aligned} 
\end{equation}
As a result, exactly one half of the area of cell problem (i.e., $\frac{1}{2} L_c d_N$) has an elastic energy with energy density $\varphi(\mathbf{C})$.  The other half has zero elastic energy.  This results in an elastic energy for the cell of
\begin{equation}
\begin{aligned}
E^{(N+1)}_{elast.} &=  \frac{\mu}{4} L_c d_N \min_{\xi \in \{\lambda, -(1-\lambda)\}} |\sym\big(\mathbf{a}\otimes(\xi \mathbf{n} + \lambda \varepsilon_c \mathbf{m})\big)|^2\\
&=  \frac{\mu}{4} L_c d_N\Big( g_0 \check{\xi}^2 + 2 g_1\check{\xi} \lambda  \varepsilon_c +   g_2\lambda^2 \varepsilon_c^2 \Big), 
\end{aligned}
\end{equation}
for $g_0 \stackrel{\rm def.}{=}  \frac{1}{2} (|\mathbf{a}|^2 + (\mathbf{a} \cdot \mathbf{n})^2)$, $g_1\stackrel{\rm def.}{=}  \frac{1}{2} \big(|\mathbf{a}|^2(\mathbf{n} \cdot \mathbf{m}) + (\mathbf{a} \cdot \mathbf{m})(\mathbf{a} \cdot \mathbf{n})\big)$ and $g_2$ defined previously.  Here, $\check{\xi} \in  \{\lambda, -(1-\lambda)\}$ denotes the minimizer above. Since there are $1/d_N = 2^N/d_0$ cells in the closure domain layer, we conclude that 
\begin{align}\label{4Result}
E^{cl.}_{elast.} &= \frac{\mu}{4} (1-\lambda) \alpha \frac{d_0}{2^N} \Big( \frac{g_0\check{\xi}^2}{\varepsilon_c} + 2 g_1 \check{\xi} \lambda + g_2 \lambda^2 \varepsilon_c \Big).
\end{align}
\item  {\it The surface energy of closure domains $-$} This has at most\footnote{Technically, the last two contributions vanish in the case that we are dealing with a simple $AB$ laminate (i.e., the case $N=0$). We ignore this very slight difference in the formulas (\ref{surfClose}) and (\ref{5Result}), as it will not influence the results.} five contributions in the cell problem (Fig.~\ref{BestAnsatz}(b)): one of length $\alpha^{-1} L_c$ with jump $|\mathbf{A} - \mathbf{B}|$; one of length $|\mathbf{v}|$ with jump $|\mathbf{C} - \mathbf{A}|$; and one of length $|\mathbf{w}|$ of jump $|\mathbf{C} -\mathbf{B}|$; one of length $\lambda d_N$ with jump $\frac{(1-\lambda)}{2}|\boldsymbol{\delta} \mathbf{A}_N|$; and one of length $(1-\lambda) d_N$ with jump $\frac{\lambda}{2}|\boldsymbol{\delta} \mathbf{A}_N|$.   Consequently,  the cell problem satisfies
\begin{equation}
\begin{aligned}\label{surfClose}
E^{(N+1)}_{surf.} = \sigma_{AB} L_c \alpha^{-1} \Big(2 + \frac{\lambda}{(1-\lambda)} \varepsilon_c^2 \Big) + \sigma_{AB} \lambda(1-\lambda) d_N\varepsilon_N
\end{aligned}
\end{equation}
after some routine algebraic manipulations similar to the preceding calculations.  As there are $1/d_N = 2^N/d_0$ cell problems in the closure domain layer, we conclude
\begin{equation}
\begin{aligned}\label{5Result}
E^{cl.}_{surf.} &=  \sigma_{AB} \Big( 2 (1-\lambda)\frac{1}{\varepsilon_c}  + \lambda \varepsilon_{c}  +\lambda(1-\lambda) \varepsilon_N \Big) .
\end{aligned}
\end{equation}
\item  {\it The total length is $L$ $-$} Notice that the energies above are not described in terms of the lengths $L_0,\ldots,L_N, L_c$, but rather in terms of the aspect ratios $\varepsilon_0, \ldots, \varepsilon_N, \varepsilon_c$.  In light of this, we treat the aspect ratios $\varepsilon_i$ and $\varepsilon_c$ as the independent variables, and treat the lengths as dependent variables that obey the formulas $L_i = (1-\lambda) \alpha d_i/\varepsilon_i$ and $L_c = (1-\lambda) \alpha d_N/\varepsilon_c$.  There is an additional constraint:  The total length of the twinned martensitic  region including the closure domain is fixed to be $L$, i.e.,
\begin{equation}
L = \sum_{i = 0}^N L_i + L_c.
\end{equation}
This can equivalently be written in the form of a constraint on $\varepsilon_0$, i.e.,
\begin{equation}\label{6Result}
\varepsilon_0 \equiv \varepsilon_0(\varepsilon_1,\ldots, \varepsilon_N, \varepsilon_c, d_0) = \Big( \frac{L}{(1-\lambda) \alpha d_0} - \sum_{i=0}^N \frac{1}{\varepsilon_i 2^i} - \frac{1}{\varepsilon_c 2^N} \Big)^{-1},
\end{equation}
using the formulas for the lengths.  Here, $\varepsilon_0$ can take the value $\pm\infty$, which is equivalent to $L_0 = 0$ (i.e., no unbranched layer). On the other hand, if the fixed length $L$ is a finite number, $\varepsilon_0$ can be never equal to zero, which would mean $L_0$ going to infinity. 
\end{itemize}
In collecting the calculations (\ref{1Result}), (\ref{2Result}), (\ref{3Result}), (\ref{4Result}), (\ref{5Result}), (\ref{6Result}), the total energy of branching microstructure (\ref{0Result}) is thus 
\begin{equation}\label{energyParam}
E \equiv E(\varepsilon_1,\ldots, \varepsilon_N, \varepsilon_c, d_0) =  \tilde{E}(\varepsilon_0(\varepsilon_1,\ldots, \varepsilon_N, \varepsilon_c, d_0), \varepsilon_1, \ldots, \varepsilon_N, \varepsilon_c, d_0).
\end{equation}
These parameters can all be freely optimized. 

\subsection{Optimization of the parameters}\label{ssec:optimization}

We make the following observation regarding the aspect ratios for any $d_0 >0$:
\begin{itemize} 
\item {\it For the closure domains $-$} The conditions $\frac{\partial E}{ \partial \varepsilon_c} = \frac{\partial \tilde{E}}{\partial \varepsilon_0} \frac{\partial \varepsilon_0}{\partial \varepsilon_c}  + \frac{\partial \tilde{E}}{\partial \varepsilon_c}= 0$ and  $\varepsilon_c > 0$ are equivalent to 
\begin{align}\label{epcd0}
\varepsilon_c =  \sqrt{\frac{2\sigma_{AB} (2^N-1) + \frac{\mu}{4} \check{\xi}^2 \alpha g_0 d_0 }{\sigma_{AB} \frac{\lambda}{1-\lambda} 2^N + \frac{\mu}{4} \lambda^2 \alpha g_2 d_0}}  \stackrel{\rm def.}{=}  \varepsilon_c(d_0).
\end{align}
\item {\it For most of the branching layers $-$} If $i \in \{ 1,\ldots, N-1\}$, the conditions $\frac{\partial E}{ \partial \varepsilon_i} = \frac{\partial \tilde{E}}{\partial \varepsilon_0}  \frac{\partial \varepsilon_0}{\partial \varepsilon_i} + \frac{\partial \tilde{E}}{\partial \varepsilon_i}= 0$ and $\varepsilon_i > 0$ are equivalent to 
\begin{equation}\label{epid0}
\varepsilon_i = \sqrt{\frac{16 \sigma_{AB} (2^i-1)}{\mu \lambda (1-\lambda) \alpha g_2 d_0 + (1+\lambda) \sigma_{AB}2^{i+2}} } \stackrel{\rm def.}{=}  \varepsilon_i(d_0).
\end{equation}
Here, $0< \varepsilon_1(d_0) < \varepsilon_2(d_0) < \ldots < \varepsilon_{N-1}(d_0) <2 $ for all  $d_0> 0$.
\item {\it For the final branching layer $-$} The conditions  $\frac{\partial E}{ \partial \varepsilon_N} = \frac{\partial \tilde{E}}{\partial \varepsilon_0}  \frac{\partial \varepsilon_0}{\partial \varepsilon_N} + \frac{\partial \tilde{E}}{\partial \varepsilon_N}= 0$ and $\varepsilon_N > 0$ are equivalent to 
\begin{equation}\label{epNd0}
 \varepsilon_N = \sqrt{\frac{16 \sigma_{AB} (2^N + \frac{\lambda}{2} -1)}{\mu \lambda (1-\lambda) \alpha g_2 d_0 + (1+\lambda) \sigma_{AB}2^{N+2}}}  \stackrel{\rm def.}{=}  \varepsilon_N(d_0).
\end{equation}
Here, $0< \varepsilon_{N-1}(d_0) < \varepsilon_{N}(d_0) <  2$ for $\varepsilon_{N-1}(\cdot)$ in (\ref{epid0}) and for all $d_0 >0$.  
\end{itemize}
These results follow quite directly from expanding out the partial derivatives for $\tilde{E}$ using the explicit definitions of the various energies involved.  As a consequence, the energy (\ref{energyParam}) achieves a critical point on the domain $\varepsilon_1,\ldots, \varepsilon_N, \varepsilon_c,d_0 >0$ if and only if 
\begin{equation}\label{critical}
(\varepsilon_1, \ldots, \varepsilon_N, \varepsilon_c) = (\varepsilon_1(d_0), \ldots, \varepsilon_N(d_0), \varepsilon_c(d_0)), \quad \frac{\partial E}{\partial d_0} =  \frac{\partial \tilde{E}}{\partial \varepsilon_0}  \frac{\partial \varepsilon_0}{\partial d_0} + \frac{\partial \tilde{E}}{\partial d_0}= 0
\end{equation}
for some $d_0 > 0$.  By an explicit calculation, the latter is then solved if and only if 
\begin{align}\label{fd0}
f_N(d_0) &  \stackrel{\rm def.}{=}  -\frac{2 \sigma_{AB} L}{\alpha d_0^2}  + \frac{\mu}{4}  \frac{(1-\lambda) \alpha}{2^N} \Big( \frac{\check{\xi}^2g_0}{\varepsilon_c(d_0)} +2\check{\xi} \lambda g_1 + \lambda^2g_2 \varepsilon_c(d_0) \Big) + \frac{\mu}{8} \lambda(1-\lambda)^2\alpha g_2  \sum_{i=1}^N 2^{-i} \varepsilon_i(d_0)    
\end{align}
vanishes for some $d_0 >0$.  This function is both continuous and strictly increasing (i.e., $f_N' >0$) on the positive real numbers. It also has the limiting behavior $\lim_{d_0 \searrow 0} f_N(d_0) = -\infty$  and\footnote{The strict inequality in  (\ref{limInfinity}) is due to the fact that $|\mathbf{a}| \neq 0$ and $\alpha \neq 0$, i.e., that fact that $\mathbf{n}$ and $\mathbf{m}$ are not parallel.  Indeed, these facts make it quite easy to see that $g_0 g_2 > g_1^2$.}
\begin{equation}\label{limInfinity}
\lim_{d_0 \nearrow \infty} f_N(d_0) = \frac{\mu}{2^{N+1}} (1-\lambda)\lambda \alpha (|\check{\xi}| \sqrt{g_0 g_2} + \check{\xi} g_1) > 0.
\end{equation}
As such, there is a unique twin width satisfying
\begin{equation}\label{d0N}
d_0^{(N)} > 0, \quad f_N(d^{(N)}_0) = 0.
\end{equation}
It is easily obtained numerically given the experimental parameters and $N$.  Finally, we need to check that the assumed monotonicity of the $\varepsilon_i$ remains preserved for the optimized quantitites, as we did not enforce it during the optimization.  {The validity of this assumption can be seen by replacing $2^{i}$ in (\ref{epid0}) with a parameter $t$, differentiating with respect to this parameter, and observing that the derivative is positive.  The monotonicity for $\varepsilon_N$ compared to $\varepsilon_{N-1}$ follows from the monotonicity of the $\varepsilon_i$.  Finally, the upper bounds are deduced by noticing that $\varepsilon_i(d_0) \leq \varepsilon_i(0)$ for all $d_0 > 0$ and $i = 1,\ldots, N$ and using the fact that $\lambda \in (0,1)$.}

\bigskip
The results (\ref{critical}) and (\ref{d0N}) furnish that  $(\varepsilon_1(d_0^{(N)}), \ldots, \varepsilon_N(d_0^{(N)}), \varepsilon_c(d_0^{(N)}), d_0^{(N)})$ is the unique global minimizer to  (\ref{energyParam}) on the domain $\varepsilon_1,\ldots, \varepsilon_N, \varepsilon_c,d_0 >0$.\footnote{Indeed, it is the unique critical point on this domain.  Further, a laborious but straightforward calculation reveals that the total energy blows up in the limit that any of the parameters approaches $0$ or $\infty$.  That is, for any $t \in \{ d_0, \varepsilon_1, \ldots, \varepsilon_N, \varepsilon_c\}$, we find that $\lim_{t \searrow 0} E = \infty$ and $\lim_{t \nearrow \infty} E = \infty.$  This guarantees that the critical point is energy minimizing.}  We write
\begin{align}\label{EN}
E^{(N)}   &\stackrel{\rm def.}{=}  E( \varepsilon_1(d_0^{(N)}), \ldots, \varepsilon_N(d_0^{(N)}), \varepsilon_c(d_0^{(N)}), d_0^{(N)}) = \inf_{\varepsilon_1, \ldots, \varepsilon_N, \varepsilon_c, d_0 >0} E(\varepsilon_1, \ldots, \varepsilon_N,\varepsilon_c, d_0).
\end{align}  
The energy of these configurations satisfies $E^{(N)} \rightarrow \infty$ as $N \rightarrow \infty$ due to the growth in surface energy with increasing branching generations.  Consequently, the configuration of minimal energy occurs at a finite number of branching generations $N^{\star} \in \{0,1,2, \ldots \}$ such that   
\begin{equation}\label{EStar}
E^{(N^{\star})} = \inf_{N \in \{0,1,2, \ldots \} }  E^{(N)} \stackrel{\rm def.}{=} E^{\star}.
\end{equation} 
By studying the regime $N \gg 1$, where there are very fine number of branching generations, we obtain a cutoff which guarantees that this optimum $N^{\star}$ is not \textit{too large}, in the sense that it can be bounded from above by a quantity that depends only on the experimental parameters.   This observation gives us a formal guarantee for a numerical procedure to evaluate the optimum $N^{\star}$.

To develop this result, notice that the equality (\ref{d0N}) furnishes the inequality
\begin{equation}
\frac{32 \sigma_{AB}  L}{\mu \lambda(1-\lambda)^2\alpha g_2} \geq  \varepsilon_1(d_0^{(N)}) (d^{(N)}_0)^2, \quad N \geq 1.
\end{equation}
It then follows  from the formula (\ref{epid0}) that for $N \geq 1$, 
\begin{equation}\label{CStar}
d_0^{(N)} \leq  \max\Big\{ \frac{4 \sigma_{AB}^{1/3} L^{2/3}}{\mu^{1/3}\lambda^{1/3} (1-\lambda)\alpha^{1/3} g_2^{1/3}}, \frac{4 \sqrt{2} \sigma_{AB}^{1/2} L^{1/2} (1+\lambda)^{1/4}}{\mu^{1/2} \lambda^{1/2}(1-\lambda)\alpha^{1/2} g_2^{1/2}} \Big\}  \stackrel{\rm def.}{=}  C^{\star} .
\end{equation}
Consequently, whenever there is an integer $i^{\star} \leq N$ such that $\sigma_{AB} 2^{i^{\star}+2} (1+\lambda)\gg \mu \lambda (1-\lambda) \alpha g_2 C^{\star}$, the branching construction becomes  self-similar
\begin{align}\label{selfsimilar}
&\varepsilon_{i^{\star}}(d_0^{(N)}) \approx \varepsilon_{i^{\star}+1}(d_0^{(N)}) \approx \ldots \approx \varepsilon_N(d_0^{(N)}) \approx 2\sqrt{\frac{1}{1+\lambda}}, \quad \varepsilon_c(d_0^{(N)}) \approx \sqrt{\frac{2(1-\lambda)}{\lambda}}.
\end{align}
This regime \textit{cannot} be energy minimizing.  Indeed, it follows that $\frac{d_0}{\sigma_{AB}}f_{N+1}(d_0) = \frac{d_0}{\sigma_{AB}}f_{N}(d_0) + O(\frac{\mu(1-\lambda) \alpha g_2 C^{\star}}{2^{N+2} \sigma_{AB}(1+\lambda)}) \approx \frac{d_0}{\sigma_{AB}}f_{N}(d_0)$ for all $d_0\in (0, C^{\star})$ and for all $N \geq i^{\star}$.   In other words, $f_{N+1}(d_0) \approx f_{N}(d_0)$ in this regime.  Consequently, $d_0^{(i^{\star})} \approx d_0^{(i^{\star}+1)} \approx \ldots \approx d_0^{(N)} \approx d_0^{(N+1)} \approx \ldots$, and the energy landscape simplifies.  We observe that  
\begin{equation}
\begin{aligned}
&E^{(N+1)} \approx E^{(N)} + (1-\lambda )\Big(2(1+\lambda)^{1/2} + \lambda \Big)\sigma_{AB}  \quad \text{ if } N \geq i^{\star} \;\; \text{ and } \;\;   2^{i^{\star}} \gg \frac{\mu \lambda(1-\lambda)\alpha g_2 C^{\star}}{4 \sigma_{AB}  (1+\lambda)}
\end{aligned}
\end{equation}
due to three observations: 1).\;the elastic energy for closure domains and the additional branched layer is negligible in this regime, 2).\;the $0-$th layer contribution and surface energy of closure domains is unchanged, and 3).\;the surface energy of branching gets an additional term (i.e., an $N+1$ layer term (\ref{3Result})) that yields the positive contribution as shown given the self-similarity (\ref{selfsimilar}).  Consequently, the optimal number of branching generations belongs to the compact set $N^{\star} \in \{ 0, 1,\ldots, i^{\star}\}$.  Hence, we only need to compute the configurations with energy $E^{(N)}$ for $N \in \{ 0, \ldots, i^{\star}\}$ to be assured of calculating the configuration of optimal energy $E^{(N^{\star})} = E^{\star}$.

As a final remark before summarizing, note that the aspect ratio of the unbranched segment, which is given in terms of all the other segments via (\ref{6Result}), should be positive, i.e., 
\begin{equation}\label{epi0Star}
\varepsilon_0^{\star}  \stackrel{\rm def.}{=} \varepsilon_0\big(\varepsilon_1(d_0^{(N^{\star})}), \ldots, \varepsilon_{N^{\star}}(d_0^{(N^{\star})}), \varepsilon_c(d_0^{(N^{\star})}), d_0^{(N^{\star})}\big) > 0.
\end{equation}
We have essentially ignored this inequality constraint in the optimization above. However, we can easily verify that this inequality holds whenever the branching is energetically preferred over a simple laminate. A related question is whether it is indeed always energetically beneficial for the construction to create such an unbranched layer, i.e., whether $\varepsilon_0^{\star}$ can reach $\pm\infty$  The following proposition fully solves these issues. 

\begin{proposition} 
Let $N^{\star}$, $\varepsilon_0^{\star}$, $L_0^{\star}$, $L_c^{\star}$ and $E^{\star}$ be associated to the energy minimizing configuration, as defined in Section \ref{ssec:optimization}. If $N^{\star} \geq 1$, then $0<\varepsilon^{\star}_0<\infty$.  
\end{proposition}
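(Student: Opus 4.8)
The plan is to reduce the claim to the single inequality $X^{\star}>0$ and then prove it by contradiction using the global minimality (\ref{EStar}) of $N^{\star}$. First I would record, from (\ref{6Result}) together with $L_0=(1-\lambda)\alpha d_0/\varepsilon_0$, that the assertion $0<\varepsilon_0^{\star}<\infty$ is exactly the statement $X^{\star}>0$, where $X^{\star}=\frac{L}{(1-\lambda)\alpha d_0^{\star}}-\sum_{i=1}^{N^{\star}}\frac{1}{2^i\varepsilon_i^{\star}}-\frac{1}{2^{N^{\star}}\varepsilon_c^{\star}}$ and $\varepsilon_0^{\star}=1/X^{\star}$; equivalently, by (\ref{1Result}), it says the $0$-th layer carries strictly positive surface energy $E_0^{\star}=2(1-\lambda)\sigma_{AB}X^{\star}>0$. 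Both undesirable alternatives, $\varepsilon_0^{\star}\le 0$ (i.e. $X^{\star}<0$) and $\varepsilon_0^{\star}=\pm\infty$ (i.e. $X^{\star}=0$), are captured by the single hypothesis $X^{\star}\le 0$, which I assume for contradiction.

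The engine of the proof is a \emph{merge} that converts the minimizing $N^{\star}$-configuration into an admissible competitor with $N^{\star}-1$ branching generations: absorb the first branching layer into the unbranched $0$-th layer by setting $\tilde d_0=d_0^{\star}/2$, $\tilde\varepsilon_j=\varepsilon_{j+1}^{\star}$ for $j=1,\dots,N^{\star}-1$, $\tilde\varepsilon_c=\varepsilon_c^{\star}$, and assigning the new unbranched layer length $\tilde L_0=L_0^{\star}+L_1^{\star}$. I would first verify that this is an admissible configuration in the sense of (\ref{EN}) (all inherited aspect ratios are positive and the total length is again $L$), and that, layer by layer, its branching layers $1,\dots,N^{\star}-1$ and its closure domain reproduce \emph{exactly} the elastic and surface energies (\ref{2Result}), (\ref{3Result}), (\ref{4Result}), (\ref{5Result}) of the original layers $2,\dots,N^{\star}$ and closure domain. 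Consequently $\tilde E^{(N^{\star}-1)}\ge E^{(N^{\star}-1)}$ by (\ref{EN}), and the energy gap between the original and the competitor collapses to the original $0$-th and first layers minus the new merged layer.

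I would then evaluate this gap using (\ref{1Result}), (\ref{2Result}), (\ref{3Result}). Writing $E_0^{\rm orig}=2(1-\lambda)\sigma_{AB}X^{\star}$ and $E_0^{\rm new}=2(1-\lambda)\sigma_{AB}\big(2X^{\star}+1/\varepsilon_1^{\star}\big)$, the twin-wall and transition surface terms cancel and the difference reduces to
\[ E^{(N^{\star})}-\tilde E^{(N^{\star}-1)}=-2(1-\lambda)\sigma_{AB}X^{\star}+(1-\lambda)\sigma_{AB}\Big(\lambda+\tfrac{1+\lambda}{2}\varepsilon_1^{\star}\Big)+\tfrac{\mu}{16}\lambda(1-\lambda)^2\alpha g_2\, d_0^{\star}\varepsilon_1^{\star}. \]
Under the assumption $X^{\star}\le 0$ the first term is non-negative and the remaining two are strictly positive, so $E^{(N^{\star})}>\tilde E^{(N^{\star}-1)}\ge E^{(N^{\star}-1)}$. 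Since $N^{\star}\ge 1$ makes $N^{\star}-1$ an admissible competitor, this contradicts the minimality $E^{(N^{\star})}\le E^{(N^{\star}-1)}$ coming from (\ref{EStar}). Hence $X^{\star}>0$, i.e. $0<\varepsilon_0^{\star}<\infty$.

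The step I expect to be the main obstacle is the exact bookkeeping in the second paragraph, namely the claim that the shared parts match identically. This rests on the fact (established in deriving (\ref{3Result})) that an inter-layer transition contributes a surface energy depending only on the $(d_i,\varepsilon_i)$ of the \emph{target} layer, not on the source, provided the monotonicity $\varepsilon_{i-1}\le\varepsilon_i$ holds; one must check this monotonicity persists after the merge, with the unbranched source playing the role of an effective $\varepsilon=0$. One must likewise confirm the closure-domain coupling to the last branching layer (the $\lambda(1-\lambda)\varepsilon_N$ term in (\ref{5Result})) is inherited correctly as $\lambda(1-\lambda)\tilde\varepsilon_{N^{\star}-1}=\lambda(1-\lambda)\varepsilon_{N^{\star}}^{\star}$. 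Once this identification is secured, the final algebra is short and the sign is forced by $X^{\star}\le 0$.
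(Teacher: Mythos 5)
Your proof is correct, and it follows the same underlying strategy as the paper: assume $L_0^{\star}\leq 0$ (equivalently your $X^{\star}\leq 0$) and contradict minimality by exhibiting a competitor with $N^{\star}-1$ branching generations obtained by eliminating the first branched layer. The difference is in execution, and yours is arguably tighter. The paper's proof rests on the qualitative inequality (\ref{epsilon_ineq}) -- that the unbranched layer costs less energy per unit length than the first branched layer -- and then splits into two geometric cases, $|L_0^{\star}|\leq L_1^{\star}$ (replace the first branched layer by unbranched material) and $|L_0^{\star}|>L_1^{\star}$ (remove it, shorten the $0$-th layer, and halve $d_0$). Your single merge ($\tilde{d}_0=d_0^{\star}/2$, $\tilde{\varepsilon}_j=\varepsilon_{j+1}^{\star}$, $\tilde{\varepsilon}_c=\varepsilon_c^{\star}$) handles both cases at once, because admissibility in the infimum (\ref{EN}) only requires the free parameters $\tilde{\varepsilon}_1,\ldots,\tilde{\varepsilon}_{N^{\star}-1},\tilde{\varepsilon}_c,\tilde{d}_0$ to be positive; the sign of the induced $\tilde{L}_0=L_0^{\star}+L_1^{\star}$ is irrelevant to the comparison with $E^{(N^{\star}-1)}$. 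I verified your bookkeeping: the relabelled layers and the closure domain reproduce (\ref{2Result})--(\ref{5Result}) term by term, the induced constraint value is indeed $\tilde{X}=2X^{\star}+1/\varepsilon_1^{\star}$, the $2/\varepsilon_1^{\star}$ twin-wall terms cancel, and the stated gap formula is exact, so the sign is forced by $X^{\star}\leq 0$ and minimality over $N$ in (\ref{EStar}) is violated. Two small remarks. First, (\ref{6Result}) as printed starts its sum at $i=0$, which is circular; the correct constraint starts at $i=1$, and you read it that way. Second, in the edge case $N^{\star}=1$ the competitor has $N=0$ generations, and by the paper's own footnote the closure coupling term $\lambda(1-\lambda)\varepsilon_N$ in (\ref{5Result}) then vanishes rather than equaling $\lambda(1-\lambda)\varepsilon_1^{\star}$ as your inheritance rule would suggest; this discrepancy only lowers the competitor's energy further, so the contradiction survives -- but it is worth stating explicitly since you flagged exactly this point as the potential obstacle.
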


\begin{proof}
Notice that $\varepsilon^{\star}_0$ and $L^{\star}_0$ are related \emph{via} (\ref{1Result}). In general, $\varepsilon^{\star}_0$ can take values $\varepsilon^{\star}_0\in(-{\infty},+\infty)$, with negative $\varepsilon^{\star}_0$ leading to negative $L_0^{\star}$, and $\varepsilon^{\star}_0=\pm{\infty}$ meaning $L_0^{\star}=0$.  We already excluded the case $\varepsilon^{\star}_0=0$, as the prescribed length $L$ must be a finite number, so it remains to be proved that $\varepsilon^{\star}_0$ is finite and non-negative.
 
For the sake of a contradiction, suppose $\varepsilon^{\star}_0\in\left\{(-\infty,0),+\infty\right\}$, which means $L^{\star}_0\in{}(-\infty,0]$, i.e., the non-branching layer is either absent, or has negative length. Then, the construction within our ansatz may consist of regions with positive energy contribution to $E^{\star}$, and regions with negative energy contributions to $E^{\star}$, such that the total length is equal $L$, as sketched for example in Figs.\ref{epsilon_0}(a,c). As $N^{\star}\geq{1}$, the regions with positive contributions include at least the first branching generation, which means there is a layer of length $L^{\star}_1$ between the $0-$th layer and the closure domains (length $L^{\star}_c$). Regardless of the material parameters and the aspect ratios $\varepsilon_{0}^{\star}$ and $\varepsilon_{1}(d_0^{(N^\star)})$, the energy of the $0-$th unbranched layer per unit length is always lower than the energy of the first branched layer per unit length, as the branched layer includes two times more interfaces and additional elastic strains, i.e.,
\begin{equation}
 \frac{|E^{\star}_0|}{|L^{\star}_0|}<\frac{|E^{\star}_1|}{|L^{\star}_1|} \label{epsilon_ineq}
\end{equation}
whenever $|L_0^{\star}|>0$. Here, $E^{\star}_0$ is calculated from $L_0^{\star}$ using (\ref{1Result}), and $E^{\star}_1$ is the sum of the surface energy and elastic energy of the $1-$st branching layer for optimized $\varepsilon^{\star}_1$. We discuss now two separate cases:
\begin{itemize}
\item{}$|L^{\star}_0|{\leq}L^{\star}_1$, (Fig.~\ref{epsilon_0}(a)). Due to (\ref{epsilon_ineq}), replacing the  first branched layer by an unbranched layer of length $L_1^{\star}-|L_0^{\star}|$ always reduces energy, and thus, the construction shown in (Fig.~\ref{epsilon_0}(b)) has the same total length $L$ but lower energy, which contradicts the optimality of $N^{\star}$, $\varepsilon_0^{\star}$, $L_0^{\star}$, $L_c^{\star}$ and $E^{\star}$. The same reasoning can be obviously used also for $L^{\star}_0=0$ (i.e., $\varepsilon_0^{\star}=\pm\infty$) and  $L^{\star}_0=L^{\star}_1$.
\item{}$|L^{\star}_0|{>}L^{\star}_1$, (Fig.~\ref{epsilon_0}(c)). Due to (\ref{epsilon_ineq}), removing completely the first branched layer and making the length of the $0-$th layer shorter by $L_1^{\star}$ is an energy reducing operation. Also making $d_0$ two times smaller reduces the energy of the unbrached layer for $L_0<0$. Hence, the construction in Fig.~\ref{epsilon_0}(d) has, again, lower energy than the one in Fig.~\ref{epsilon_0}(c), and the proof is completed.  
\end{itemize}
\end{proof}

\begin{figure}[htb]
\centering
 \includegraphics[width=\textwidth]{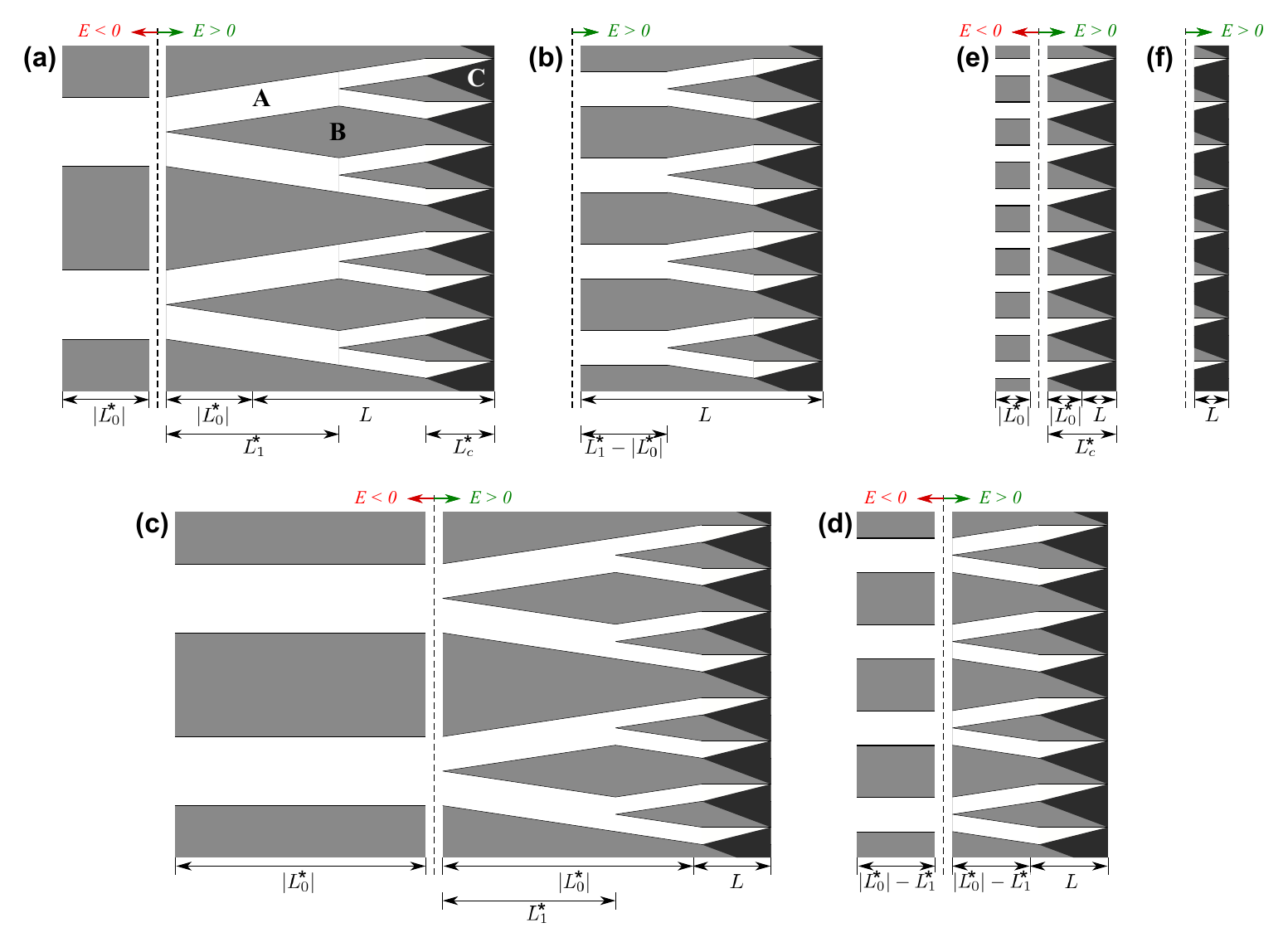}
 \caption{On the energy minimizing configuration and the constraint on $\varepsilon_0^{\star}$: (a,c,e) possible constructions with $\varepsilon_0^\star<0$; (b,d,f) alternative construction with the same total lengths $L$ as in (a,c,e) but lower energies. The dashed line always separates the parts of the construction with a negative energy from those with a positive energy.}\label{epsilon_0}
\end{figure}

For completeness, let us also briefly discuss the case when $N^{\star}=0$. The construction, in this case, consists only of the unbranched layer (a simple laminate) and closure domains. If the surface energy is very high, or the length $L$ is very small, the energy-minimizing configuration within our ansatz can be indeed the one sketched in Fig.~\ref{epsilon_0}(e), with negative $L_0^{\star}$. However, it is also directly seen that there are other constructions beyond our ansatz that have lower energy; for example, the one in Fig.~\ref{epsilon_0}(f). This construction is perfectly valid as it satisfies the required boundary conditions and compatibility conditions at all interfaces.  Thus, the possible result $N^{\star}=0$ and $\varepsilon_0^{\star}<0$ indicates that our ansatz fails to give physically realistic construction, but it also call into question `twinning' as a mechanism to minimize energy for the given experimental parameters. Consequently, it seems an unlikely result in the physically relevant regime of experimental parameters.

\subsection{A numerical procedure for optimality}\label{ssec:numProc}
The optimization above provides a procedure to numerically compute the energy minimizing microstructure given the experimental parameters.  To summarize:
\begin{itemize}
\item \textit{Check if laminate microstructure is energy minimizing $-$} Solve for $d_0^{(0)}$ in (\ref{d0N}) and check if $\varepsilon_0^{simp.} \stackrel{\rm def.}{=}  \varepsilon_0(\varepsilon_c(d_0^{(0)}), d_0^{(0)})$ is positive (see (\ref{6Result}) and (\ref{epcd0}) for the evaluation).  If it is, then proceed.  If not, then $AB$ laminate microstructure is not energy minimizing for the given parameters.
\item \textit{The number of branching generations to evaluate $-$} Compute $C^{\star}$ in (\ref{CStar}), and choose an positive integer $i^{\star}$ such that $2^{i^{\star}} \gg  \frac{\mu \lambda(1-\lambda)\alpha g_2 C^{\star}}{4 \sigma_{AB}  (1+\lambda)}$.
\item \textit{The energy landscape $-$} Solve for $d_0^{(N)}$ in (\ref{d0N}) and evaluate the energy $E^{(N)}$ in (\ref{EN})  for each $N \in \{ 0,1,\ldots, i^{\star}\}$.  
\item \textit{The optimal number of branching generations $-$} This is given by the $N^{\star} \in \{ 0,1,\ldots, i^{\star}\}$, which gives the minimal energy for the evaluation above.
\item \textit{The optimal twin width of the unbranched segment $-$} This is given by $d_0^{\star}  \stackrel{\rm def.}{=} d_0^{(N^{\star})}$.
\item \textit{The optimal lengths of each segment of the construction $-$}  The unbranched segment length is $L_0^{\star}  \stackrel{\rm def.}{=} \tfrac{(1-\lambda) \alpha d_0^{\star}}{ \varepsilon_0^{\star}}$ for $\varepsilon_0^{\star}$ in (\ref{epi0Star}), the branched segment lengths are $L_i^{\star} \stackrel{\rm def.}{=} \tfrac{(1-\lambda) \alpha d_0^{\star}}{ 2^{i} \varepsilon_i(d_0^{\star})}$ for $i \in \{1,\ldots, N^{\star}\}$ and for $\varepsilon_i(d_0^{\star})$  evaluated via (\ref{epid0}) or (\ref{epNd0}), and the closure domain length is $L_c^{\star} \stackrel{\rm def.}{=} \tfrac{(1-\lambda) \alpha d_0^{\star}}{ 2^{N^{\star}} \varepsilon_c(d_0^{\star})}$ for $\varepsilon_c(d_0^{\star})$ in (\ref{epcd0}).
\item \textit{The total energy $-$}  This is given by $E^{\star}  = E^{(N^{\star})}$, and it satisfies the minimality condition
\begin{equation}\label{infEstar}
E^{\star} = \inf_{N \in \{0, 1, \ldots\}}\Big\{ \inf_{\varepsilon_1,\ldots, \varepsilon_N, \varepsilon_c, d_0 > 0} E(\varepsilon_1, \ldots, \varepsilon_N, \varepsilon_c, d_0) \Big\}.
\end{equation}.  
\end{itemize}

\subsection{Energy scaling for the construction $-$ an upper bound}

 We now discuss the energy scaling of the detailed construction, i.e., the construction with the unbranched $0-$th layer, finite $N$ and closure domains. Our aim is to show that $E^{\star}$ follows the $E^{\star}\sim \mu^{1/3} \sigma_{AB}^{2/3} L^{1/3}$ scaling in the $\sigma_{AB}\rightarrow{}0$ limit; thus ensuring that our construction is consistent with the scaling argument in Section \ref{scalingargument}. 
It is not easy to argue this directly. Instead, we bound the energy $E^{\star}$ from above in this section, and, in the next section, prove an ansatz-free lower bound on the energy (under mild additional assumptions). The two bounds give the same scaling for sufficiently small (i.e., physically relevant) $\sigma_{AB}$, which means that also $E^{\star}$ follows this scaling at least in the considered limit.  

By this approach, we prove not only that $E^{\star}$ scales as $\mu^{1/3} \sigma_{AB}^{2/3} L^{1/3}$ for $\sigma_{AB}\rightarrow{0}$, but also that the energy of any even more detailed or more physically realistic construction (i.e., constructions with energy lower than $E^{\star}$) would adopt this scaling in the considered limit. Hence, it is justified to assume that the energy of the real microstructures appearing at the habit planes should also depend on the elastic constants, the surface energy and the length of the laminate following a similar scaling law, when the surface energy goes to zero.
 
For the upper bound, we replace the configuration of energy $E^{\star}$ with a configuration of higher energy $\hat{E}$ but simpler structure.  We then estimate the energy $\hat{E}$ from above to obtain a suitable bound.  As we will be using estimates in this section, we take $C >0$ to represent a constant that can depend on the crystallographic parameters in (\ref{compat1}-\ref{compat2}), but not on $\sigma_{AB}$, $\mu$ or $L$.  This constant can change from line to line.  
 
We assume the experimental parameters are such that laminate $AB$ microstructure is energy minimizing, in the sense that $\varepsilon_{0}^{simp.}$ from the numerical procedure in Section \ref{ssec:numProc} is positive.  This gives a physical configuration with energy $E^{\star}$.  We replace this configuration with one of higher energy via the following construction similar to Chan and Conti's \citep{ChanConti2}: We introduce a parameter $\theta \in (1/4,1/2)$ and optimize over twin widths $d_0 \in (0, \tfrac{L(1-\theta)}{(1-\lambda)\alpha}]$ under the construction for branching generations and aspect ratios
\begin{equation}
\begin{aligned}\label{hatDefs}
&\hat{N}(d_0)  \stackrel{\rm def.}{=} \max  \Big\{ i  \in \mathbb{N} \colon \frac{(1-\lambda) \alpha d_0}{2^i (1-\theta)\theta^i L } \leq 1  \Big\} ,\\
&\hat{\varepsilon}_i(d_0)  \stackrel{\rm def.}{=} \frac{(1-\lambda) \alpha d_0}{(1-\theta)2^i \theta^i L}  , \quad i = 0,\ldots, \hat{N}(d_0), \quad \hat{\varepsilon}_c(d_0) \stackrel{\rm def.}{=} \frac{(1-\lambda) \alpha d_0}{2^{\hat{N}(d_0)} \theta^{\hat{N}(d_0)+1} L}.
\end{aligned}
\end{equation}
The layer lengths for this self-similar construction are therefore  $L_i = (1-\theta) \theta^i L$ and $L_c = \theta^{\hat{N}(d_0)+1} L$, guaranteeing that $\sum_{i = 0}^{\hat{N}(d_0)} L_i + L_c = L$.  This implies that the constraint on the aspect ratio in (\ref{6Result}) holds trivially, i.e.,  $\hat{\varepsilon}_0(d_0) = \varepsilon_0(\hat{\varepsilon}_1(d_0), \ldots, \hat{\varepsilon}_{\hat{N}(d_0)}(d_0), \hat{\varepsilon}_c(d_0))$. Consequently, this family of configurations is consistent with our ansatz for branching microstructure, and it has energy
\begin{equation}
\begin{aligned}\label{hatEd0}
\hat{E}(d_0)  &\stackrel{\rm def.}{=} E(\hat{\varepsilon}_1(d_0), \ldots, \hat{\varepsilon}_{\hat{N}(d_0)}(d_0), \hat{\varepsilon}_c(d_0), d_0) \\
&\stackrel{\rm def.}{=} \hat{E}_0(d_0)  + \hat{E}_{elast.}^{br.}(d_0) + \hat{E}_{surf.}^{br.}(d_0) + \hat{E}_{elast.}^{cl.}(d_0) + \hat{E}_{surf.}^{cl.}(d_0)
\end{aligned}
\end{equation}
as derived in Section \ref{ssec:derivation}.  

To estimate the different terms in the energy (\ref{hatEd0}), it is useful to notice that the definitions in (\ref{hatDefs}) furnish the aspect ratio estimates and identity 
\begin{equation}\label{ests}
\begin{aligned}
&0< \hat{\varepsilon}_{i}(d_0) \leq 1 \quad \text{ for } \quad i = 1,\ldots, \hat{N}(d_0) -1, \\
&2\theta <  \hat{\varepsilon}_{\hat{N}(d_0)}(d_0) \leq 1 \quad \text{ and } \quad  \frac{\theta}{1-\theta} \hat{\varepsilon}_c(d_0) = \hat{\varepsilon}_{\hat{N}(d_0)}(d_0).
\end{aligned}
\end{equation}
Thus, in collecting the surface energy terms (\ref{1Result}), (\ref{3Result}) and (\ref{5Result}), we observe that each layer is the sum of at most three terms: an $O(\hat{\varepsilon}_{i}(d_0)^{-1})$ term, an $O(1)$ term and an $O(\hat{\varepsilon}_i(d_0))$ term\textemdash all proportional to $\sigma_{AB} = \sigma |\mathbf{a}|$.  The $O(\hat{\varepsilon}_{i}(d_0)^{-1})$ terms dominate since the aspect ratios are bounded above by unity (\ref{ests}). It therefore follows that
\begin{equation}
\begin{aligned}
\hat{E}_0(d_0) + \hat{E}_{surf.}^{br.}(d_0) + \hat{E}_{surf.}^{cl.}(d_0) \leq C\sigma_{AB} \Big(\sum_{i = 0}^{\hat{N}(d_0)} \frac{1}{\hat{\varepsilon}_i(d_0)} \Big) \leq C\sigma_{AB} \Big(\sum_{i = 0}^{\hat{N}(d_0)}  2^{i} \theta^{i} \Big)\frac{L}{d_0}.
\end{aligned}
\end{equation}
In addition, as $\hat{\varepsilon}_{\hat{N}(d_0)}(d_0)$ is bounded away from zero and infinity (given (\ref{ests})), we can bound the terms in $\hat{E}_{elast.}^{cl.}(d_0)$ by a quantity proportional to $\hat{\varepsilon}_{\hat{N}(d_0)}(d_0)$ and conclude that 
\begin{equation}
\begin{aligned}
&\hat{E}_{elast.}^{br.}(d_0)  + \hat{E}_{elast.}^{cl.}(d_0) \leq C \mu  d_0 \Big(\sum_{i=0}^{\hat{N}(d_0)} \frac{\hat{\varepsilon}_i(d_0)}{2^{i}} \Big) \leq C \mu \Big( \sum_{i=0}^{\hat{N}(d_0)} \frac{1}{4^{i} \theta^i} \Big) \frac{d_0^2}{L}.\\
\end{aligned}
\end{equation}
Since $\theta \in (1/4,1/2)$, the two series in these estimates converge if we replace $\hat{N}(d_0)$ with $\infty$, yielding a further upper bound estimate of these energies.   

By combining all these estimates and substituting into (\ref{hatEd0}), we obtain the overall estimate  for this family of constructions
\begin{equation}
\begin{aligned}\label{Ehatd0Est}
\hat{E}(d_0) \leq C \Big(\sigma_{AB} \frac{L}{d_0} +  \mu \frac{d_0^2}{L} \Big).
\end{aligned}
\end{equation}
In taking the infimum of both sides of (\ref{Ehatd0Est}) amongst twin widths $d_0 \in (0, \tfrac{L(1-\theta)}{(1-\lambda)\alpha}]$, we then obtain
\begin{equation}
\begin{aligned}
\hat{E}\stackrel{\rm def.}{=}  \inf \Big\{ \hat{E}(d_0) \colon d_0 \in (0, \tfrac{L(1-\theta)}{(1-\lambda)\alpha}]\Big\}  \leq C \min \Big\{ \mu^{1/3} \sigma_{AB}^{2/3} L^{1/3}, \mu L \Big\} .
\end{aligned}
\end{equation}
Finally, $E^{\star} \leq \hat{E}$ due to (\ref{infEstar}).  So we conclude that the optimal configuration within our ansatz has energy  
\begin{equation}
\begin{aligned}\label{keyUpper}
E^{\star} \leq C(\mathbf{a}, \mathbf{m}, \mathbf{n}, \lambda, \alpha)  \min \Big\{ \mu^{1/3} \sigma_{AB}^{2/3} L^{1/3}, \mu L \Big\} .
\end{aligned}
\end{equation}
As this is the key result, we emphasize here that the constant $C >0$ depends on the crystallographic parameters. 

The latter term in the upper bound is the minimizer only in the case that $\mu L \leq \sigma_{AB}$.  This inequality is unlikely to hold for typical shape memory alloys, and so generically this upper bound suggests the desired scaling $E^{\star} \leq C\mu^{1/3} \sigma_{AB}^{2/3} L^{1/3}$. Note that $E^{\star}$ is an upper bound to the energy of the real branched microstructure $E$, and thus, $E \leq C\mu^{1/3} \sigma_{AB}^{2/3} L^{1/3}$.

\subsection{Energy scaling and optimality of the construction $-$ a lower bound}\label{ssec:LBSec}
We now turn to an ansatz-free lower bound on the energy that, combined with the upper bound previously, gives the energy scaling $\sim \mu^{1/3} \sigma_{AB}^{2/3}L^{1/3}$ for branching microstructure in the physically relevant regime of parameters.  The starting point here is the study of the elastic energy (\ref{energy}) after a convenient normalization and with some additional assumptions on the reference configuration, deformation and the structure of the elastic energy density.  Precisely, we study the elastic energy 
\begin{equation}\label{energyLB}
\mathcal{E}^{\hat{\sigma}}(\mathbf{y}, \Omega_{L,H}) \stackrel{\rm def.}{=} \int_{\Omega_{L,H}}\Big( \hat{\varphi}(\nabla \mathbf{y}) + \hat{\sigma} |\nabla^2 \mathbf{y}| \Big) dx
\end{equation}
under the following hypotheses: 
\begin{itemize}
\item \textit{The reference configuration $-$} $\Omega_{L,H} \subset \mathbb{R}^3$ is a parallelepiped domain of square cross-section $H^2$ and length $L$ parallel to the habit plane normal $\mathbf{m}$ (Fig.~\ref{fig:ChangeVar} top-left).  Without loss of generality, we take $\mathbf{m} = \mathbf{e}_1$ and $\mathbf{m}^{\perp} = \mathbf{e}_2$ for $\{\mathbf{e}_1, \mathbf{e}_2, \mathbf{e}_3\}$ the standard basis on $\mathbb{R}^3$.
\item \textit{A normalization $-$} The energy density $\hat{\varphi}(\mathbf{F}) \stackrel{\rm def.}{=} 2\varphi(\mathbf{F})/\mu$ and interfacial parameter $\hat{\sigma} \stackrel{\rm def.}{=} 2\sigma_{AB}/(|\mathbf{a}| \mu)$ have been normalized by the characteristic modulus $\mu$ of the martensite phase.
\item \textit{The crystallographic theory $-$} The compatibility conditions (\ref{compat1}-\ref{compat2}) hold.  In addition, we assume that the crystallographic parameters satisfy the condition\footnote{This is a benign condition.  Conventional shape memory alloys are nearly volume preserving, and the transformations $\mathbf{A}, \mathbf{B}$ are $\approx \mathbf{I}$.  Consequently, $\mathbf{a}$ should be nearly parallel to $\mathbf{n}^{\perp}$ (see for example (\ref{crys1})) and $\mathbf{I} + \mathbf{b} \otimes \mathbf{m} = \lambda \mathbf{A} + (1-\lambda) \mathbf{B} \approx \mathbf{Id}$.  This yields the result that $|(\mathbf{I} + \mathbf{b} \otimes \mathbf{m}) \mathbf{n}^{\perp} \cdot \mathbf{a}|$ is to leading order $\approx |\mathbf{a}|$ in these alloys, which is far from zero.} $(\mathbf{Id} + \mathbf{b} \otimes  \mathbf{m}) \mathbf{n}^{\perp} \cdot \mathbf{a} \neq 0$ for $\mathbf{n}^{\perp} \stackrel{\rm def.}{=} -(\mathbf{n} \cdot \mathbf{e}_2) \mathbf{e}_1 + (\mathbf{n} \cdot \mathbf{e}_1) \mathbf{e}_2$.  For future reference, we also define the direction $\boldsymbol{\nu}  \stackrel{\rm def.}{=} \frac{(\mathbf{Id} + \mathbf{b} \otimes \mathbf{m}) \mathbf{n}^{\perp} }{|(\mathbf{Id} + \mathbf{b} \otimes \mathbf{m}) \mathbf{n}^{\perp}|}$.  This is well-defined due to the added hypothesis here.
\item \textit{A two-well structure $-$} We fix a temperature \textit{below} the transition temperature and assume the energy density $\hat{\varphi}(\mathbf{F})$ is minimized and equal to zero on $K  \stackrel{\rm def.}{=}SO(3) \mathbf{A} \cup SO(3) \mathbf{B}$ and satisfies $\hat{\varphi}(\mathbf{F}) \geq \text{dist}^2(\mathbf{F},K)$.  
\item \textit{Boundary conditions on the deformation $-$} We study the energy (\ref{energyLB}) subject to continuous deformations with bounded energy (i.e., $\mathbf{y} \in C(\overline{\Omega}_{L,H}, \mathbb{R}^3) \cap W^{2,1}(\Omega_{L,H},\mathbb{R}^3)$)\footnote{It is possible to relax the interfacial energy studied here to include deformation gradients $\nabla \mathbf{y}$ of \textit{bounded variation}, i.e., those deformations gradients consistent with the formalism of our construction (\ref{assump1}).  However, rather than introduce the additional mathematical machinery of BV-functions, we simply note that the lower bound \textit{does not} change under this relaxation.}.  In addition, we assume that these deformations satisfy the boundary conditions on the left and right boundary: $\mathbf{y}(\mathbf{x}) = (\mathbf{I} + \mathbf{b} \otimes \mathbf{m}) \mathbf{x}$ for $\mathbf{x} \cdot \mathbf{m} \in \{0, L\}$.  We call this space of deformations $\mathcal{M}$.  
\end{itemize}

We show rigorously in the next subsection \ref{ssec:LowerBoundProof} that the infimum of this energy satisfies
\begin{align}\label{mainEst}
\mathcal{E}^{\star}(\hat{\sigma},L,H)  \stackrel{\rm def.}{=}  \inf \Big\{ \mathcal{E}^{\hat{\sigma}}(\mathbf{y}, \Omega_{L,H}) \colon \mathbf{y} \in \mathcal{M} \Big\}  \geq c H^3 \min  \Big\{ \frac{\hat{\sigma}^{2/3} L^{1/3}}{H}, \frac{H}{L},  \frac{L}{H} \Big\} 
\end{align}
for some constant $c \equiv c(\mathbf{a}, \mathbf{b}, \mathbf{n}, \mathbf{m}, \lambda, \alpha) > 0$ that depends only on the crystallographic parameters.  The proof of this result is based in an essential way on the results of \cite{ChanConti2}, who studied an analogous two dimensional problem with two energy wells and boundary conditions on all sides.  Thus, as it relates to actual microstructure in bulk shape memory alloys, there are limitations to this result that merit discussion. 

 In particular, owing to the symmetry transformation from austenite to martensite, most shape memory alloys have more than two martensitic wells below the transition temperature.  However, our proof relies on the fact that we consider two and only two wells.  If one were to actually approach the lower bound using the correct number of energy wells, then one would have to account for the possibility that the additional well(s) may lead to microstructure that reduces the overall energy\textemdash possibly to the point where the estimate (\ref{mainEst}) no longer applies.  In addition, our proof relies on the fact that we have boundary conditions (i.e., the austenite meeting the martensite) on two sides.  In contrast, typical nucleation of martensite occurs as propagation of a ``front".  The martensite takes advantage of free-surfaces\textemdash at, say, corners, defects, or grain boundaries\textemdash to propagate in a manner that usually avoids more than one austenite-martensite interface for the transforming band.  For these two reasons, our result here is far from the definitive characterization of the lower bound energy of the austenite-martensite interface in bulk shape memory alloys.  Nevertheless, we can use this estimate to derive that the optimal construction (Section \ref{ssec:numProc}) satisfied $E^{\star} \sim \mu^{1/3} \sigma_{AB}^{2/3} L^{1/3}$ in the $\sigma_{AB}\rightarrow{0}$ limit. 
 
In this direction, notice that the elastic energy density $\varphi(\mathbf{F})$ in (\ref{assump2}), for which we developed this construction, is a linearized variant of an elastic energy density $\frac{\mu}{2} \text{dist}^2(\mathbf{F} ,K)$ under the assumption $\mathbf{A}, \mathbf{B} \approx \mathbf{I}$.   While this energy density is not technically a $\frac{\mu}{2} \hat{\varphi}(\mathbf{F})$ satisfying the above hypothesis, the distinction is for a range of transformations $\mathbf{A},\mathbf{B}$ and perturbations $\boldsymbol{\delta} \mathbf{A}, \boldsymbol{\delta} \mathbf{B}$ that are  atypical of shape memory alloys.  We treat the typical case below.
 
 The construction giving energy $E^{\star}$ is of length $L$ (parallel to the habit plane normal $\mathbf{m}$) and unit depth and width.  It is compatible with the austenite on one side, and, according to (\ref{eq:moved}), satisfying the macroscopic compatibility conditions everywhere in its interior.  Thus, by taking this configuration, adding an extra unbranched layer of length $L$ and repeating the branching construction (using the same branching segments and the same optimized parameters) in the reverse direction in this additional layer, we obtain a configuration of length $2L$ (parallel to $\mathbf{m}$)  and with energy $2 E^{\star}$.  This configuration satisfies the boundary conditions consistent with the hypotheses for the above lower bound (i.e., this modification is now compatible with the austenite on both sides).  Thus, we apply the estimate (\ref{mainEst}) to the energy $E^{\star}$ for \textit{typical} shape memory alloys to obtain 
\begin{align}\label{LBImport}
H^2 E^{\star} \geq \frac{\mu}{4}  \mathcal{E}^{\star}\Big(\frac{2\sigma_{AB}}{\mu |\mathbf{a}|},L,H\Big)  \geq  c H^3 \min \Big\{ \frac{\mu^{1/3} \sigma_{AB}^{2/3} L^{1/3}}{H}, \mu \frac{H}{L}, \mu \frac{L}{H} \Big\} 
\end{align}
(recalling the normalization for $\hat{\sigma}$).  Here, the first inequality is since the minimization for $\mathcal{E}^{\star}$ is ansatz-free whereas the minimization for $E^{\star}$ is not.  In addition, the constant in the lower bound depends only on the crystallographic parameters $c \equiv c(\mathbf{a}, \mathbf{b}, \mathbf{n}, \mathbf{m}, \lambda, \alpha) > 0$.

Finally, in typical shape memory alloys, we expect the non-dimensionalized parameter $\frac{\sigma_{AB}}{\mu L}$ to fall between a coarse lower and upper bound of $10^{-11}$ and  $10^{-5}$ (see Tab.1).  We further expect the aspect ratio $H/L$ to not deviate significantly from being $O(1)$. In particular, we expect the inequality $\big(\frac{\sigma_{AB}}{\mu L} \big)^{2/3} \leq \min \big\{1, \frac{H^2}{L^2} \big\}$ to hold trivially in these materials.   As a consequence of these expectations, minimization of the upper and lower bounds in (\ref{keyUpper}) and (\ref{LBImport}) yields the scaling result
\begin{align}
c \mu^{1/3} \sigma_{AB}^{2/3} L^{1/3} \leq  E^{\star} \leq C \mu^{1/3} \sigma_{AB}^{2/3} L^{1/3}, 
\end{align}
since $0<c\leq C$ are constants depending only on the crystallographic parameters. Hence, $E^{\star}$ indeed follows the desired scaling in the $\sigma_{AB}\rightarrow{}0$ limit.

\subsection{The proof of the lower bound}

\label{ssec:LowerBoundProof}
\begin{figure}[t!]
\centering
\includegraphics[width = 3.8in]{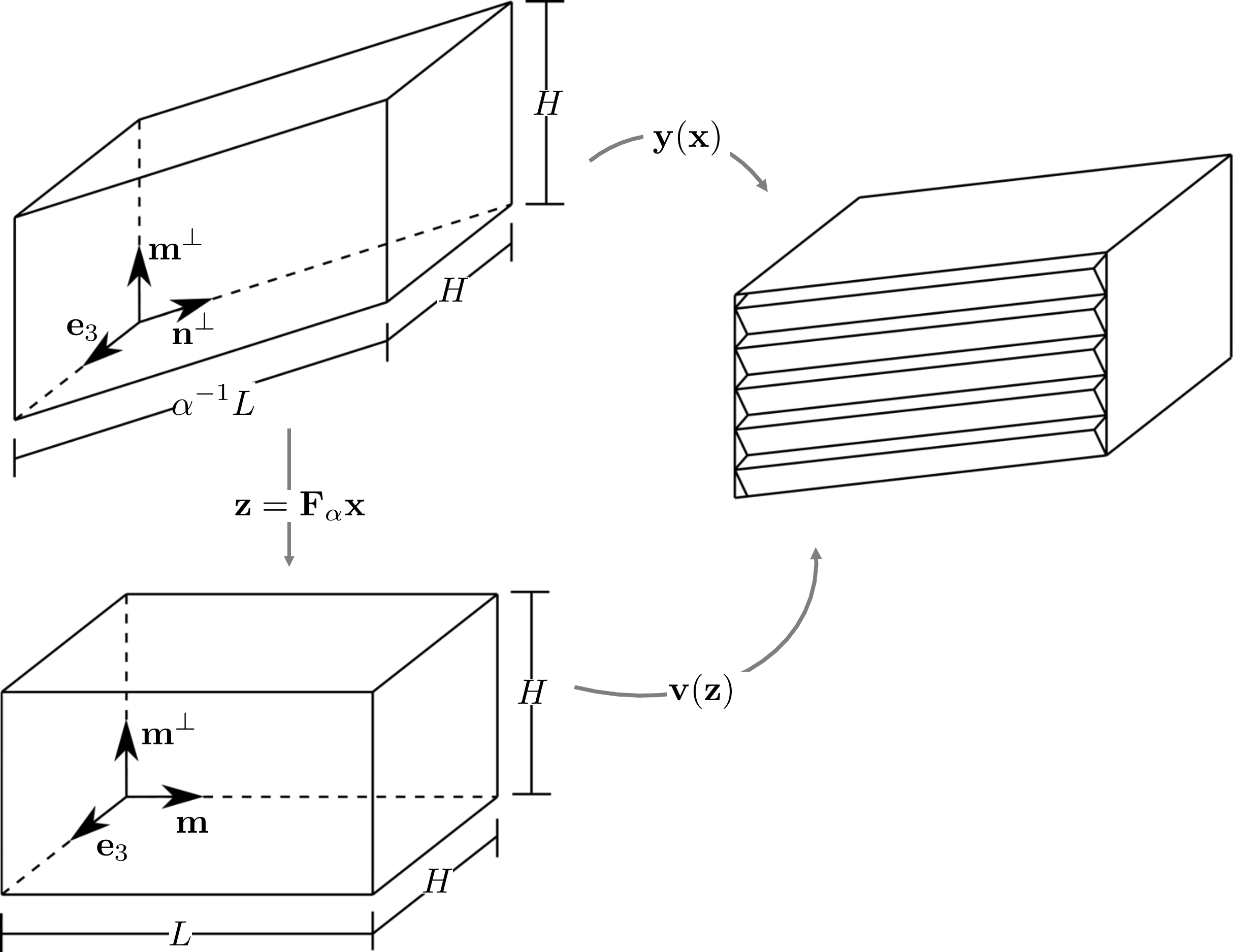}
\caption{A change of variables which allows us to consider the energy minimization problem $\mathcal{E}^{\star}(\hat{\sigma}, L,H)$ amongst deformations which map from a rectangular prism reference configuration.}
\label{fig:ChangeVar}
\end{figure}

\subsubsection{Formulation}  

In this final part of the Appendix, we prove the lower bound (\ref{mainEst}) for the energy $\mathcal{E}^{\hat{\sigma}}(\mathbf{y}, \Omega_{L,H})$ in (\ref{energyLB}) under the stated hypotheses in Section \ref{ssec:LBSec}.

In what follows, we find it convenient to reformulate the energy minimization problem $\mathcal{E}^{\star}(\hat{\sigma}, L, H)$ so that the reference configuration is a rectangular prism as show in the lower-left of Fig.~\ref{fig:ChangeVar}.  Observe that by a uniform simple shear of $\Omega_{L,H}$, we can achieve the rectangular prism $\mathcal{R}_{L,H}$ depicted in the figure.  The simple shear is given by 
\begin{equation}
\begin{aligned}\label{eq:tFDef}
\mathbf{F}_{\alpha} \stackrel{\rm def.}{=}  \mathbf{Id} -  \alpha^{-1} \big(\mathbf{n} \cdot \mathbf{m}\big) \mathbf{m}^{\perp} \otimes \mathbf{m}, \quad \text{ with } \quad \mathbf{F}_{\alpha}^{-1} =  \mathbf{Id} + \alpha^{-1} \big(\mathbf{n} \cdot \mathbf{m}\big) \mathbf{m}^{\perp} \otimes \mathbf{m}.
\end{aligned}
\end{equation} 
Thus, we let $\mathbf{z} \stackrel{\rm def.}{=}  \mathbf{F}_{\alpha} \mathbf{x}$, and we associate to any deformation $\mathbf{y} \colon \Omega_{L,H} \rightarrow \mathbb{R}^3$ a function $\mathbf{v} \colon \mathcal{R}_{L,H} \rightarrow \mathbb{R}^3$ defined by $\mathbf{v}(\mathbf{z}(\mathbf{x})) \stackrel{\rm def.}{=} \mathbf{y}(\mathbf{x})$ for $\mathbf{x} \in \Omega_{L,H}$.  Since $\det \mathbf{F}_{\alpha} = 1$, this change of variables yields the identity  
\begin{equation}
\begin{aligned}
\mathcal{E}^{\hat{\sigma}}(\mathbf{y}, \Omega_{L,H}) = \tilde{\mathcal{E}}_{el}(\mathbf{v}, \mathcal{R}_{L,H}) + \tilde{\mathcal{E}}_{int}^{\hat{\sigma}}(\mathbf{v},\mathcal{R}_{L,H}) \stackrel{\rm def.}{=}  \tilde{\mathcal{E}}^{\hat{\sigma}} (\mathbf{v}, \mathcal{R}_{L,H}),
\end{aligned}
\end{equation}
where the elastic part and interfacial part, after the change of variables, take the forms
\begin{equation}
\begin{aligned}\label{eq:EintBound}
&\tilde{\mathcal{E}}_{el}(\mathbf{v}, \mathcal{R}_L) \stackrel{\rm def.}{=}  \int_{\mathcal{R}_{L,H}} \hat{\varphi}\big( (\nabla \mathbf{v}) \mathbf{F}_{\alpha}) dz, \\
&\tilde{\mathcal{E}}_{int}^{\sigma}(\mathbf{v}, \mathcal{R}_L) \stackrel{\rm def.}{=} \hat{\sigma} \int_{\mathcal{R}_{L,H}}  \sqrt{\Big( v_{i,lm} (F_{\alpha})_{lj} (F_{\alpha})_{mk} \Big) \Big( v_{i,l'm'} (F_{\alpha})_{l'j}(F_{\alpha})_{m'k} \Big) } dz   \geq \hat{\sigma} c \alpha^2 \int_{\mathcal{R}_{L,H}} |\nabla^2 \mathbf{v}| dz, 
\end{aligned}
\end{equation}
respectively.  For the latter, the identity uses index notation with repeated indices summed, and the constant $c > 0$ in the lower bound is universal.  A proof of this lower bound is provided in Proposition \ref{appendProp} below.  

Hence, the variational problem $\mathcal{E}^{\star}(\hat{\sigma}, L, H)$ in (\ref{mainEst}) can be reformulated by studying the energy $\tilde{\mathcal{E}}^{\hat{\sigma}}(\mathbf{v}, \mathcal{R}_{L,H})$ subject to functions of the form 
\begin{equation}
\begin{aligned}
\mathcal{M}_{\alpha} &\stackrel{\rm def.}{=}  \Big\{ \mathbf{u} \in W^{2,1}(\mathcal{R}_{L,H}, \mathbb{R}^3) \cap C(\overline{\mathcal{R}}_{L,H}, \mathbb{R}^3) \colon \mathbf{u}(\mathbf{z}) = (\mathbf{Id} + \mathbf{b} \otimes \mathbf{m}) \mathbf{F}_{\alpha}^{-1} \mathbf{z} ,\;\; \mathbf{z} \cdot \mathbf{m} \in \{ 0, L \} \Big\} .
\end{aligned}
\end{equation}
Specifically,  $\mathcal{E}^{\star}(\hat{\sigma}, L, H)$ is given equivalently by variational problem
\begin{equation}
\begin{aligned}
\mathcal{E}^{\star}(\hat{\sigma}, L, H)= \inf \Big\{ \tilde{\mathcal{E}}^{\hat{\sigma}}(\mathbf{v}, \mathcal{R}_{L,H}) \colon \mathbf{v} \in \mathcal{M}_{\alpha} \Big\} .
\end{aligned}
\end{equation}

\subsubsection{Main result on the lower bound.}
The lower bound energy estimate in (\ref{mainEst}) is obtained as a consequence of the following theorem:
\begin{theorem}\label{lbTheorem}
Assume the energy $\mathcal{E}^{\hat{\sigma}}(\mathbf{y}, \Omega_{L,H})$ in (\ref{energyLB}) under the stated hypotheses in Section \ref{ssec:LBSec}.  Then, there exists a universal constant $c_{\star}> 0$ such that the minimal energy $\mathcal{E}^{\star}(\hat{\sigma}, L, H)$ satisfies
\begin{equation}
\begin{aligned}\label{eq:lbBound}
&\mathcal{E}^{\star}(\hat{\sigma}, L, H) \geq c_{\star} H^3 \min \Big\{   \bar{c}_{K}  \frac{\hat{\sigma}^{2/3} L^{1/3}}{H},  \hat{c}_{K}  \frac{H}{L}, \hat{c}_{K} \frac{L}{H} \Big\} \\
\end{aligned}
\end{equation}
for $ \bar{c}_{K} \stackrel{\rm def.}{=}  \alpha^{10/3} c_{K}^{4/3}$,  $\hat{c}_{K} \stackrel{\rm def.}{=}  \alpha^2 c_{K}^{2}$ and for $c_{K}$ depending on the crystallographic parameters via 
\begin{equation}
\begin{aligned}\label{eq:cKDef}
c_{K} \stackrel{\rm def.}{=}   \min_{\mathbf{G} \in K} \max \Big\{ \alpha \big|\boldsymbol{\nu} \cdot \mathbf{G} \mathbf{n}^{\perp} - |(\mathbf{Id} + \mathbf{b} \otimes \mathbf{m}) \mathbf{n}^{\perp}|\big|, \big|\boldsymbol{\nu} \cdot (\mathbf{G} - \mathbf{Id}) \mathbf{m}^{\perp} \big|, \big|\boldsymbol{\nu} \cdot (\mathbf{G} - \mathbf{Id} ) \mathbf{e}_3\big|\Big\} > 0.
\end{aligned}
\end{equation}
\end{theorem}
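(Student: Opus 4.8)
The plan is to reduce the three–dimensional, frame–indifferent problem to a scalar two–well branching problem to which the two–dimensional estimate of \cite{ChanConti2} applies, funneling all crystallographic information into the single constant $c_K$. Working on the sheared reference domain $\mathcal{R}_{L,H}$ set up above, I would first record two facts: the interfacial term already dominates $\hat{\sigma} c \alpha^2 \int_{\mathcal{R}_{L,H}} |\nabla^2 \mathbf{v}|$ by Proposition \ref{appendProp}, and the elastic term controls $\int \mathrm{dist}^2((\nabla \mathbf{v}) \mathbf{F}_{\alpha}, K)$ since $\hat{\varphi}(\mathbf{F}) \geq \mathrm{dist}^2(\mathbf{F}, K)$. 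The essential device is the scalar field $u \stackrel{\rm def.}{=} \boldsymbol{\nu} \cdot \mathbf{v}$: its directional derivatives along $\mathbf{n}^{\perp}$, $\mathbf{m}^{\perp} = \mathbf{e}_2$ and $\mathbf{e}_3$ are precisely the three quantities $\boldsymbol{\nu} \cdot \mathbf{G} \mathbf{n}^{\perp}$, $\boldsymbol{\nu} \cdot \mathbf{G} \mathbf{m}^{\perp}$, $\boldsymbol{\nu} \cdot \mathbf{G} \mathbf{e}_3$ (with $\mathbf{G} = (\nabla \mathbf{v}) \mathbf{F}_{\alpha}$) entering the definition (\ref{eq:cKDef}) of $c_K$, while the boundary data $\mathbf{v} = (\mathbf{Id} + \mathbf{b} \otimes \mathbf{m}) \mathbf{F}_{\alpha}^{-1} \mathbf{z}$ on $\mathbf{z} \cdot \mathbf{m} \in \{0,L\}$ pins the averages of these derivatives to the macroscopic values $|(\mathbf{Id} + \mathbf{b} \otimes \mathbf{m}) \mathbf{n}^{\perp}|$, $\boldsymbol{\nu} \cdot \mathbf{m}^{\perp}$ and $\boldsymbol{\nu} \cdot \mathbf{e}_3$.

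The key geometric input is then that, by (\ref{eq:cKDef}) together with the hypothesis $(\mathbf{Id} + \mathbf{b} \otimes \mathbf{m}) \mathbf{n}^{\perp} \cdot \mathbf{a} \neq 0$ (which forces $c_K > 0$), no single well $\mathbf{G} \in K$ can simultaneously match all three pinned values: at least one of the three directional derivatives of $u$ must deviate from its boundary value by at least $c_K$, up to the $\alpha$-weighting. Consequently, wherever the elastic energy density is small the gradient $\nabla u$ is close to one of two distinct admissible levels, whereas the boundary conditions force the relevant directional average to an intermediate value\textemdash exactly the frustrated two-well structure that drives domain branching. I would make this rigorous by slicing $\mathcal{R}_{L,H}$ into two-dimensional sections spanned by $\mathbf{m}$ and the offending transverse direction, restricting $u$ to almost every slice, and invoking the Chan--Conti two-well lower bound on each slice.

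Integrating the per-slice estimate over the remaining coordinate produces the cross-sectional factor $H^2$ and reproduces the three competing regimes of their bound. The genuine branching regime scales as $c_K^{4/3} \hat{\sigma}^{2/3} L^{1/3}$ and carries the prefactor $\bar{c}_K = \alpha^{10/3} c_K^{4/3}$, the exponent $c_K^{4/3}$ being the expected signature of an incompatibility of amplitude $c_K$ in a branching functional. The two elastic-dominated regimes, in which either the length $L$ is too short or the cross-section $H$ too thin to make oscillation profitable, contribute $\hat{c}_K H/L$ and $\hat{c}_K L/H$ with $\hat{c}_K = \alpha^2 c_K^2$, where the exponent $c_K^2$ reflects paying order-$c_K^2$ elastic energy over a definite fraction of the volume. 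Since a minimizer is free to select whichever mechanism is cheapest, the lower bound is the minimum of the three, namely (\ref{eq:lbBound}); the powers of $\alpha$ and $c_K$ are tracked through the change of variables and the $\boldsymbol{\nu}$-projection, and the residual constant $c_{\star}$ is universal.

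The main obstacle I anticipate is the scalar reduction combined with the adaptation of \cite{ChanConti2}: their estimate is stated for a two-dimensional rectangle with Dirichlet data on all four sides, whereas here compatibility with austenite is prescribed only on the two faces $\mathbf{z} \cdot \mathbf{m} \in \{0,L\}$, the lateral faces being free. The delicate point is to show that the free lateral boundaries cannot relieve the frustration encoded by $c_K$ for less than the cost captured by the degenerate terms\textemdash equivalently, that oscillation cannot ``escape'' into the $\mathbf{m}^{\perp}$ or $\mathbf{e}_3$ directions\textemdash and, simultaneously, that the frame-indifferent, continuum-of-rotations structure of $K$ does not let $\nabla u$ interpolate between the two wells at zero elastic cost. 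Both are controlled by the $c_K$-based incompatibility estimate, which is exactly why $c_K$ is defined as the minimum over all of $K$ of the worst-direction deviation.
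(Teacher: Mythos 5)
There is a genuine gap, and it sits exactly where your proof must do its work. Your mechanism rests on the claim that, wherever the elastic energy density is small, $\nabla u$ for $u = \boldsymbol{\nu} \cdot \mathbf{v}$ is ``close to one of two distinct admissible levels,'' producing a scalar two-well problem to which the Chan--Conti bound can be applied slice by slice. This is false: the well set is $K = SO(3)\mathbf{A} \cup SO(3)\mathbf{B}$, two full rotation orbits, so the projected quantities $\boldsymbol{\nu} \cdot (\nabla \mathbf{v})\mathbf{F}_{\alpha}\mathbf{n}^{\perp}$, $\boldsymbol{\nu} \cdot (\nabla \mathbf{v})\mathbf{F}_{\alpha}\mathbf{m}^{\perp}$, $\boldsymbol{\nu} \cdot (\nabla \mathbf{v})\mathbf{F}_{\alpha}\mathbf{e}_3$ sweep continua as the rotation varies over each orbit; there are no two levels. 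The positivity of $c_K$ in (\ref{eq:cKDef}) is a much weaker statement\textemdash that no \emph{single} $\mathbf{G} \in K$ can simultaneously match all three components of the boundary-induced affine map\textemdash and it yields no pointwise dichotomy for any scalar derivative. A second error compounds this: the boundary conditions are imposed only on the faces $\mathbf{z} \cdot \mathbf{m} \in \{0, L\}$, so the fundamental theorem of calculus pins only the average of the derivative along the $\mathbf{m}$-direction (this is Proposition \ref{thirdProp}); the lateral faces are free, so the bulk averages of the transverse derivatives $\partial_{\mathbf{m}^{\perp}} u$ and $\partial_{\mathbf{e}_3} u$ are \emph{not} pinned, contrary to your setup. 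Finally, Chan--Conti's theorem cannot be ``invoked on each slice'': the restriction of $\mathbf{v}$ to a two-dimensional section is a map into $\mathbb{R}^3$, the wells are not of their $SO(2)$-invariant form after your scalar projection, and their result assumes boundary data on all sides. Your closing remark that these obstacles are ``controlled by the $c_K$-based incompatibility estimate'' is circular\textemdash that estimate, combined correctly with rigidity, is precisely the proof that is missing.

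For comparison, the paper never reduces to a scalar or to two-dimensional slices, and it borrows from \cite{ChanConti2} only their localization/rigidity lemma (their Lemma 3.1), adapted to give Propositions \ref{firstProp} and \ref{secondProp}: on a low-energy $\delta$-cube one finds a \emph{single} affine map $\mathbf{F}\mathbf{F}_{\alpha}^{-1}\mathbf{z} + \mathbf{d}$ with $\mathbf{F} \in K$ that is $L^1$-close to $\mathbf{v}$\textemdash this Poincar\'{e}-based step is where the surface energy enters and where the rotational invariance is tamed, and it replaces your (unavailable) pointwise two-level structure. The transverse directions then enter not through boundary pinning but through Proposition \ref{fourthProp}, a purely algebraic lower bound of the form $\inf_{\eta} \int_s^{s+\delta} |t\xi + \eta|\,dt = \tfrac{\delta^2}{4}|\xi|$ applied to the \emph{fixed} affine map in all three coordinate directions on the cube, which is exactly what the three terms in (\ref{eq:cKDef}) encode and which requires no boundary information at all. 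The triangle inequality (\ref{eq:keyEstPoint}) couples Propositions \ref{secondProp}, \ref{thirdProp} and \ref{fourthProp}, and the three regimes in the final bound arise from optimizing the resulting inequality over the free localization scale $\delta \in (0, \min\{H,L\}]$\textemdash not, as in your sketch, from integrating slice-wise branching bounds. If you want to salvage your outline, the component you must supply is precisely a three-dimensional analogue of the rigidity step (Proposition \ref{secondProp}); without it, the frustration encoded by $c_K$ cannot be converted into an energy lower bound.
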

\begin{remark}[On the lower bound parameters]
The constant $c_{K}$ is actually $>0$ if and only if the crystallographic parameters satisfy $(\mathbf{Id} + \mathbf{b} \otimes  \mathbf{m}) \mathbf{n}^{\perp} \cdot \mathbf{a} \neq 0$.  Consequently, this hypothesis is required for the $\hat{\sigma}^{2/3} L^{1/3}$ scaling using our strategy of proof. 
\end{remark}
The theorem follows from a series of propositions.  We state the propositions below, and reserve proof for the coming section.   The first observation to make is that we can always isolate a strip $\mathcal{S}_{L,\delta} \stackrel{\rm def.}{=}  (0,L) \times (s', s'+\delta) \times (s'', s'' + \delta) \subset \mathcal{R}_{L,H}$ and a cube inside that strip $\mathcal{Q}_{\delta} \stackrel{\rm def.}{=}  (s, s+ \delta) \times (s', s'+\delta) \times (s'', s'' + \delta) \subset \mathcal{S}_{L,\delta}$ such the energy on these domains is no higher than the average energy of the entire system:
\begin{proposition}\label{firstProp}
Let $\mathbf{v} \in \mathcal{M}_{\alpha}$.  For any  $\delta \in (0, \min\{H, L\}]$, there exists $\mathcal{S}_{L,\delta}$ and $\mathcal{Q}_{\delta}$ as above such that 
\begin{equation}
\begin{aligned}\label{eq:averageDomain}
\mathcal{E}^{\sigma}(\mathbf{v}, \mathcal{S}_{L,\delta}) \leq c\frac{\delta^2}{H^2} \mathcal{E}^{\sigma}(\mathbf{v}, \mathcal{R}_{L,H}), \quad \mathcal{E}^{\sigma}(\mathbf{v}, \mathcal{Q}_{\delta}) \leq c\frac{\delta^3}{H^2 L} \mathcal{E}^{\sigma}(\mathbf{v}, \mathcal{R}_{L,H}).
\end{aligned}
\end{equation}
Here, the constant $c >0$ is universal. 
\end{proposition}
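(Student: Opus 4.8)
The plan is to prove this by a two-stage pigeonhole (averaging) argument, exploiting the single structural fact that the energy is a nonnegative, additive set function. Since the integrand defining $\mathcal{E}^{\sigma}(\mathbf{v}, \cdot)$ (both the elastic density $\hat{\varphi}$ and the interfacial term) is pointwise nonnegative, for any finite collection of pairwise disjoint subdomains $D_1, \ldots, D_k \subseteq \mathcal{R}_{L,H}$ we have $\sum_{j=1}^k \mathcal{E}^{\sigma}(\mathbf{v}, D_j) \leq \mathcal{E}^{\sigma}(\mathbf{v}, \mathcal{R}_{L,H})$, and hence at least one index $j$ satisfies $\mathcal{E}^{\sigma}(\mathbf{v}, D_j) \leq \tfrac{1}{k}\,\mathcal{E}^{\sigma}(\mathbf{v}, \mathcal{R}_{L,H})$. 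This elementary averaging observation is the only tool needed; the rest is careful bookkeeping to ensure the selected subdomains have exactly the coordinate-aligned form $\mathcal{S}_{L,\delta}$ and $\mathcal{Q}_{\delta}$ demanded by the statement.

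First I would isolate the strip. Since $0 < \delta \leq H$, I partition the square cross-section $[0,H]^2$ in the $\mathbf{e}_2,\mathbf{e}_3$ plane into $k_1 = \lfloor H/\delta \rfloor^2$ pairwise disjoint open squares of side $\delta$; extruding each of these along the full length in the $\mathbf{e}_1 = \mathbf{m}$ direction produces $k_1$ pairwise disjoint strips, each of the prescribed form $\mathcal{S}_{L,\delta} = (0,L)\times(s',s'+\delta)\times(s'',s''+\delta)$. By the averaging observation, some such strip satisfies $\mathcal{E}^{\sigma}(\mathbf{v}, \mathcal{S}_{L,\delta}) \leq k_1^{-1}\,\mathcal{E}^{\sigma}(\mathbf{v}, \mathcal{R}_{L,H})$. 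Using the elementary bound $\lfloor t \rfloor \geq t/2$ for $t \geq 1$ (applied with $t = H/\delta \geq 1$), I get $k_1 \geq H^2/(4\delta^2)$, which yields the first claimed estimate $\mathcal{E}^{\sigma}(\mathbf{v}, \mathcal{S}_{L,\delta}) \leq 4\,\tfrac{\delta^2}{H^2}\,\mathcal{E}^{\sigma}(\mathbf{v}, \mathcal{R}_{L,H})$ with constant $c = 4$.

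Next I would isolate a cube inside that strip. Fixing the strip $\mathcal{S}_{L,\delta}$ just selected and using $\delta \leq L$, I partition its length interval $(0,L)$ into $k_2 = \lfloor L/\delta \rfloor$ disjoint subintervals of length $\delta$, thereby cutting $\mathcal{S}_{L,\delta}$ into $k_2$ pairwise disjoint cubes $\mathcal{Q}_{\delta} = (s,s+\delta)\times(s',s'+\delta)\times(s'',s''+\delta)$ of the stated form. Averaging again gives a cube with $\mathcal{E}^{\sigma}(\mathbf{v}, \mathcal{Q}_{\delta}) \leq k_2^{-1}\,\mathcal{E}^{\sigma}(\mathbf{v}, \mathcal{S}_{L,\delta})$, and $k_2 \geq L/(2\delta)$ by the same floor estimate. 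Combining this with the strip bound from the previous step produces $\mathcal{E}^{\sigma}(\mathbf{v}, \mathcal{Q}_{\delta}) \leq 8\,\tfrac{\delta^3}{H^2 L}\,\mathcal{E}^{\sigma}(\mathbf{v}, \mathcal{R}_{L,H})$, the second claimed estimate. Taking $c = 8$, a universal constant independent of $\delta, H, L$ and of $\mathbf{v}$, covers both inequalities at once.

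I do not expect a genuine analytic obstacle: the statement is a soft consequence of additivity and nonnegativity of the energy, and the argument never touches the detailed structure of $\hat{\varphi}$, $\mathbf{F}_{\alpha}$, or the boundary conditions. The only points needing mild care are the floor-function estimates $\lfloor t\rfloor \geq t/2$ for $t \geq 1$ (which is where the constraint $\delta \leq \min\{H,L\}$ is used, guaranteeing both $H/\delta \geq 1$ and $L/\delta \geq 1$ so that at least one subdomain of each type exists) and the verification that the axis-aligned partitions produce subdomains of precisely the coordinate-aligned shapes $\mathcal{S}_{L,\delta}$ and $\mathcal{Q}_{\delta}$, which holds by construction. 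Consequently, the main subtlety is presentational rather than mathematical: ensuring the extracted constant is genuinely universal.
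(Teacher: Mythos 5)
Your proof is correct, and it is essentially the argument the paper intends: the paper itself does not write out the details but simply defers to the localization result (Lemma 3.1) of Chan and Conti, which rests on the same nonnegativity-plus-additivity pigeonhole reasoning over disjoint coordinate-aligned subdomains. Your self-contained two-stage averaging argument, with the floor-function bound $\lfloor t\rfloor \geq t/2$ for $t\geq 1$ giving the universal constant, is a valid instantiation of exactly that approach.
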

\noindent We now employ the Poincar\'{e} inequality on the $\delta$-cube to relate the regions of average energy of $\mathbf{v}$ to the energy wells given by $K$.  For reference, for any $\mathcal{Q}_{\delta}  =(s, s+ \delta) \times (s', s'+\delta) \times (s'', s'' + \delta)$ and $\mathbf{u} \in W^{1,1}(\mathcal{Q}_{\delta},\mathbb{R}^3)$, the Poincar\'{e} inequality has the form
\begin{equation}
\begin{aligned}\label{eq:Poincare}
\|\mathbf{u} - \bar{\mathbf{u}} \|_{L^1(\mathcal{Q}_{\delta})} \leq  c_1 \delta \| \nabla \mathbf{u}\|_{L^1(\mathcal{Q}_{\delta})} 
\end{aligned}
\end{equation}
for $\bar{\mathbf{u}} = \fint \mathbf{u}(\mathbf{x}) dx$ (the average) and $c_1$ the uniform Poincar\'{e} constant on a unit cube.  The $\delta$-dependence in the inequality here is key.
\begin{proposition}\label{secondProp}
Let $\mathbf{v} \in \mathcal{M}_{\alpha}$, let $\delta \in (0, \min\{H, L\}]$, and fix a $\delta$-cube $\mathcal{Q}_{\delta}$ as in Proposition \ref{firstProp}.  Then, there exists an $\mathbf{F} \in K$ and $\mathbf{d} \in \mathbb{R}^3$ such that 
\begin{equation}
\begin{aligned}\label{eq:keyEst1}
\int_{\mathcal{Q}_{\delta}} |\mathbf{v} - \mathbf{F}\mathbf{F}_{\alpha}^{-1} \mathbf{z} - \mathbf{d}| dz \leq  \frac{c}{\alpha^{4}}\Big( \frac{\delta^5}{\hat{\sigma} H^2L} \mathcal{E}^{\hat{\sigma}}(\mathbf{v}, \mathcal{R}_{L,H})  + \alpha^3 \frac{\delta^4}{HL^{1/2}} \big(\mathcal{E}^{\hat{\sigma}}(\mathbf{v}, \mathcal{R}_{L,H})\big)^{1/2}\Big).
\end{aligned}
\end{equation}
Here, the constant $c$ is universal.
\end{proposition}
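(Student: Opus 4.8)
The plan is to obtain the estimate entirely from the Poincar\'{e} inequality (\ref{eq:Poincare}), the averaged energy bound of Proposition \ref{firstProp}, and the interfacial lower bound in (\ref{eq:EintBound}); no geometric rigidity is required, because the interfacial term already controls the oscillation of the gradient. Throughout I work in the sheared reference configuration and abbreviate $\mathbf{G} \stackrel{\rm def.}{=} (\nabla \mathbf{v}) \mathbf{F}_{\alpha}$, which is precisely the physical deformation gradient $\nabla_{\mathbf{x}} \mathbf{y}$. Since $\mathbf{v} \in W^{2,1}$, one has $\mathbf{G} \in W^{1,1}(\mathcal{Q}_{\delta})$, and $|\mathbf{F}_{\alpha}^{\pm 1}| \leq 2\alpha^{-1}$ as operator norms (using $|\mathbf{n} \cdot \mathbf{m}| \leq 1$ and $\alpha \leq 1$ in (\ref{eq:tFDef})); this is the sole source of $\alpha$-dependence in the constants.

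First I would reduce the claim to a gradient estimate. Writing $\nabla \mathbf{v} - \mathbf{F} \mathbf{F}_{\alpha}^{-1} = (\mathbf{G} - \mathbf{F}) \mathbf{F}_{\alpha}^{-1}$ and choosing $\mathbf{d}$ to be the average of $\mathbf{v} - \mathbf{F}\mathbf{F}_{\alpha}^{-1} \mathbf{z}$ over $\mathcal{Q}_{\delta}$, the Poincar\'{e} inequality (\ref{eq:Poincare}) gives
\begin{equation}
\int_{\mathcal{Q}_{\delta}} |\mathbf{v} - \mathbf{F} \mathbf{F}_{\alpha}^{-1} \mathbf{z} - \mathbf{d}| \, dz \leq c_1 \delta \int_{\mathcal{Q}_{\delta}} |\nabla \mathbf{v} - \mathbf{F}\mathbf{F}_{\alpha}^{-1}| \, dz \leq 2 c_1 \frac{\delta}{\alpha} \int_{\mathcal{Q}_{\delta}} |\mathbf{G} - \mathbf{F}| \, dz,
\end{equation}
so it remains to select $\mathbf{F} \in K$ and control $\int_{\mathcal{Q}_{\delta}} |\mathbf{G} - \mathbf{F}|$.

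Next I would take $\mathbf{F} \in K$ realizing $\text{dist}(\bar{\mathbf{G}}, K)$, with $\bar{\mathbf{G}} = \fint_{\mathcal{Q}_{\delta}} \mathbf{G}$. The triangle inequality together with the pointwise-then-averaged bound $\text{dist}(\bar{\mathbf{G}}, K) \leq \fint |\bar{\mathbf{G}} - \mathbf{G}| + \fint \text{dist}(\mathbf{G}, K)$ yields the split
\begin{equation}
\int_{\mathcal{Q}_{\delta}} |\mathbf{G} - \mathbf{F}| \, dz \leq 2 \int_{\mathcal{Q}_{\delta}} |\mathbf{G} - \bar{\mathbf{G}}| \, dz + \int_{\mathcal{Q}_{\delta}} \text{dist}(\mathbf{G}, K) \, dz,
\end{equation}
which crucially avoids deciding which of the two wells is active. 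The oscillation term is handled by a second Poincar\'{e} step applied to $\mathbf{G}$: since $|\nabla \mathbf{G}| \leq 2\alpha^{-1} |\nabla^2 \mathbf{v}|$ and the interfacial lower bound (\ref{eq:EintBound}) gives $\int_{\mathcal{Q}_{\delta}} |\nabla^2 \mathbf{v}| \leq (c \alpha^2 \hat{\sigma})^{-1} \mathcal{E}^{\hat{\sigma}}(\mathbf{v}, \mathcal{Q}_{\delta})$, one obtains $\int_{\mathcal{Q}_{\delta}} |\mathbf{G} - \bar{\mathbf{G}}| \leq c\, \alpha^{-3} \delta \hat{\sigma}^{-1} \mathcal{E}^{\hat{\sigma}}(\mathbf{v}, \mathcal{Q}_{\delta})$. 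The distance term is handled by Cauchy--Schwarz and the two-well bound $\hat{\varphi}(\mathbf{G}) \geq \text{dist}^2(\mathbf{G}, K)$, giving $\int_{\mathcal{Q}_{\delta}} \text{dist}(\mathbf{G}, K) \leq \delta^{3/2} (\int_{\mathcal{Q}_{\delta}} \text{dist}^2(\mathbf{G}, K))^{1/2} \leq \delta^{3/2} \mathcal{E}^{\hat{\sigma}}(\mathbf{v}, \mathcal{Q}_{\delta})^{1/2}$. Finally, inserting the cube energy bound $\mathcal{E}^{\hat{\sigma}}(\mathbf{v}, \mathcal{Q}_{\delta}) \leq c \delta^3 (H^2 L)^{-1} \mathcal{E}^{\hat{\sigma}}(\mathbf{v}, \mathcal{R}_{L,H})$ from (\ref{eq:averageDomain}) and combining with the first display produces exactly the two claimed terms, the first carrying $\delta^5 \alpha^{-4} \hat{\sigma}^{-1} (H^2 L)^{-1}$ and the second $\delta^4 \alpha^{-1} (H L^{1/2})^{-1}$; factoring out $\alpha^{-4}$ gives the stated form.

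The only genuine care needed, and the step I would treat as the main obstacle, is the bookkeeping of the $\alpha$-powers across the two coordinate changes: one $\mathbf{F}_{\alpha}^{-1}$ converting $\nabla \mathbf{v}$-differences into $\mathbf{G}$-differences, one $\mathbf{F}_{\alpha}$ hidden in $\nabla \mathbf{G}$, and the $\alpha^{2}$ in the interfacial lower bound. These must combine so that the residual constant $c$ is truly universal and all crystallographic dependence is carried by the explicit $\alpha^{-4}$ and $\alpha^{3}$ factors. Matching the $\delta$, $H$, $L$ and $\hat{\sigma}$ scalings is then automatic, being the $\delta^3/(H^2 L)$ weight of Proposition \ref{firstProp} multiplied by the extra $\delta$ from each Poincar\'{e} application and the $\delta^{3/2}$ from Cauchy--Schwarz.
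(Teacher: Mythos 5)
Your proof is correct, and it is essentially the same argument the paper relies on: the paper does not spell out a proof but defers to the localization result of Lemma 3.1 in Chan and Conti (2015), whose content is exactly your scheme --- select the low-energy cube, apply the Poincar\'{e} inequality to $\mathbf{v}$ minus the candidate affine map, control the oscillation of $\mathbf{G}=(\nabla\mathbf{v})\mathbf{F}_{\alpha}$ by a second Poincar\'{e} step fed by the interfacial lower bound (\ref{eq:EintBound}), control $\mathrm{dist}(\mathbf{G},K)$ by Cauchy--Schwarz and the two-well hypothesis, and pick $\mathbf{F}$ as the projection of the average gradient so that no rigidity or well-selection argument is needed. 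Your $\alpha$-bookkeeping also reproduces the stated prefactors ($\alpha^{-4}$ on the first term, the net $\alpha^{-1}$ on the second) exactly, so the estimate (\ref{eq:keyEst1}) follows with a universal constant as claimed.
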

\noindent We now relate regions on which $\mathbf{v}$ has average energy to the hard boundary conditions on the left and right of the domain.  Here, we obtain a quantitative estimate on the closeness of $\mathbf{v}$ to a homogenous deformation $(\mathbf{Id} + \mathbf{b} \otimes \mathbf{m}) \mathbf{F}_{\alpha}^{-1} \mathbf{z}$ in these regions.  This makes use of the fact that we have hard boundary conditions on \textit{both} sides.
\begin{proposition}\label{thirdProp}
Let $\mathbf{v} \in \mathcal{M}_{\alpha}$, let $\delta \in (0, \min\{H, L\}]$, and fix a $\delta$-strip $\mathcal{S}_{L,\delta}$ and corresponding $\delta$-cube $\mathcal{Q}_{\delta}$ as in Proposition \ref{firstProp}.  Then, 
\begin{equation}
\begin{aligned}\label{eq:keyEst2}
&\int_{\mathcal{Q}_{\delta}} |\boldsymbol{\nu} \cdot \big(\mathbf{v} -(\mathbf{Id} + \mathbf{b} \otimes \mathbf{m}) \mathbf{F}_{\alpha}^{-1} \mathbf{z}\big)| dz \leq \frac{c}{\alpha} \Big(  \frac{\delta^3 L^{1/2}}{H}\big( \mathcal{E}^{\hat{\sigma}}(\mathbf{v}, \mathcal{R}_{L,H})\big)^{1/2}\Big).
\end{aligned}
\end{equation}
Here, the constant $c$ is universal, and the direction $\boldsymbol{\nu} \in \mathbb{S}^2$ is as defined Section \ref{ssec:LBSec}.
\end{proposition}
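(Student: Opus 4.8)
The plan is to combine the two-sided boundary condition in $\mathcal{M}_{\alpha}$ with a special algebraic feature of the direction $\boldsymbol{\nu}$. Set $\mathbf{w}(\mathbf{z}) := \mathbf{v}(\mathbf{z}) - (\mathbf{Id}+\mathbf{b}\otimes\mathbf{m})\mathbf{F}_{\alpha}^{-1}\mathbf{z}$, which by definition of $\mathcal{M}_{\alpha}$ vanishes on \emph{both} faces $\mathbf{z}\cdot\mathbf{m}\in\{0,L\}$. Fixing the transverse coordinates of the strip $\mathcal{S}_{L,\delta}$ and integrating along the $\mathbf{m}=\mathbf{e}_{1}$ direction, the fundamental theorem of calculus gives $\boldsymbol{\nu}\cdot\mathbf{w}(z_{1},z_{2},z_{3})=\int_{0}^{z_{1}}\boldsymbol{\nu}\cdot\partial_{1}\mathbf{w}\,dt$, while vanishing at $z_{1}=L$ as well yields the key \emph{zero-average identity} $\int_{0}^{L}\boldsymbol{\nu}\cdot\partial_{1}\mathbf{w}\,dt=0$. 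Estimating the primitive by the full-length integral and integrating over $\mathcal{Q}_{\delta}\subset\mathcal{S}_{L,\delta}$ reduces the claim to showing $\int_{\mathcal{S}_{L,\delta}}|\boldsymbol{\nu}\cdot\partial_{1}\mathbf{w}|\,dz\le (c/\alpha)\,\delta\,L^{1/2}\big(\int_{\mathcal{S}_{L,\delta}}\text{dist}^{2}((\nabla\mathbf{v})\mathbf{F}_{\alpha},K)\big)^{1/2}$; Proposition~\ref{firstProp} then supplies the localization factor $\delta^{2}/H^{2}$, and the normalization hypothesis $\hat{\varphi}\ge\text{dist}^{2}(\cdot,K)$ converts the elastic integral into $\mathcal{E}^{\hat{\sigma}}$.

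The algebraic feature is the following. Write $\mathbf{t}:=\mathbf{F}_{\alpha}^{-1}\mathbf{m}$, the material direction carried onto $\mathbf{e}_{1}$ by the change of variables, so that $\partial_{1}\mathbf{w}=\mathbf{G}\mathbf{t}-(\mathbf{Id}+\mathbf{b}\otimes\mathbf{m})\mathbf{t}$ with $\mathbf{G}:=(\nabla\mathbf{v})\mathbf{F}_{\alpha}$. The shear $\mathbf{F}_{\alpha}$ in (\ref{eq:tFDef}) is precisely the one that aligns the twin planes with the $\mathbf{e}_{1}$-axis, so $\mathbf{t}$ is parallel to the twinning direction $\mathbf{n}^{\perp}$ and in particular $\mathbf{n}\cdot\mathbf{t}=0$. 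Then (\ref{compat1}) gives $(\mathbf{A}-\mathbf{B})\mathbf{t}=\mathbf{a}(\mathbf{n}\cdot\mathbf{t})=\mathbf{0}$, and (\ref{compat2}) gives
\begin{equation*}
\mathbf{A}\mathbf{t}=\mathbf{B}\mathbf{t}=(\lambda\mathbf{A}+(1-\lambda)\mathbf{B})\mathbf{t}=(\mathbf{Id}+\mathbf{b}\otimes\mathbf{m})\mathbf{t}=\kappa\,\boldsymbol{\nu},
\end{equation*}
where $\kappa$ is a scalar (nonzero since $\mathbf{Id}+\mathbf{b}\otimes\mathbf{m}$ is invertible and $\mathbf{n}^{\perp}\ne\mathbf{0}$). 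Thus both martensite wells and the austenite-compatible affine map send $\mathbf{t}$ to the \emph{same} multiple of $\boldsymbol{\nu}$: the entire variant/phase content of $\partial_{1}\mathbf{w}$ is invisible to the projection $\boldsymbol{\nu}\cdot(\cdot)$.

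It remains to control the rotation. Decompose $\mathbf{G}=\mathbf{R}\mathbf{U}+\mathbf{E}$ with $\mathbf{R}\mathbf{U}\in K$ a nearest well and $|\mathbf{E}|=\text{dist}(\mathbf{G},K)$. Using $\mathbf{U}\mathbf{t}=\kappa\boldsymbol{\nu}$ for $\mathbf{U}\in\{\mathbf{A},\mathbf{B}\}$ and $\boldsymbol{\nu}\cdot(\mathbf{Id}+\mathbf{b}\otimes\mathbf{m})\mathbf{t}=\kappa$, one finds $\boldsymbol{\nu}\cdot\partial_{1}\mathbf{w}=\kappa(\boldsymbol{\nu}\cdot\mathbf{R}\boldsymbol{\nu}-1)+\boldsymbol{\nu}\cdot\mathbf{E}\mathbf{t}$. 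The decisive point is that the rotation term has a \emph{definite sign}: $\boldsymbol{\nu}\cdot\mathbf{R}\boldsymbol{\nu}\le|\boldsymbol{\nu}|\,|\mathbf{R}\boldsymbol{\nu}|=1$, so $\kappa(\boldsymbol{\nu}\cdot\mathbf{R}\boldsymbol{\nu}-1)$ never changes sign along the line. Hence its $L^{1}$-norm equals the modulus of its signed integral, which by the zero-average identity equals $|\int_{0}^{L}\boldsymbol{\nu}\cdot\mathbf{E}\mathbf{t}\,dt|\le\int_{0}^{L}|\mathbf{E}\mathbf{t}|\,dt$. Therefore $\int_{0}^{L}|\boldsymbol{\nu}\cdot\partial_{1}\mathbf{w}|\,dt\le 2\int_{0}^{L}|\mathbf{E}\mathbf{t}|\,dt\le (c/\alpha)\int_{0}^{L}\text{dist}(\mathbf{G},K)\,dt$. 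One then applies Cauchy--Schwarz in $t$, Cauchy--Schwarz over the $\delta^{2}$ cross-section, the bound $\text{dist}^{2}\le\hat{\varphi}$, and Proposition~\ref{firstProp}; assembling the powers of $\delta$, $L$, $H$ produces exactly the bound (\ref{eq:keyEst2}).

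The main obstacle is precisely the rotation $\mathbf{R}$. As a \emph{full} vector, $\partial_{1}\mathbf{w}$ contains the contribution $(\mathbf{R}-\mathbf{Id})\kappa\boldsymbol{\nu}$, which is $O(1)$ and cannot be controlled pointwise by the energy; no $\big(\mathcal{E}^{\hat{\sigma}}\big)^{1/2}$ bound on $\int|\partial_{1}\mathbf{w}|$ is available. The projection onto $\boldsymbol{\nu}$ is what turns this uncontrolled term into the one-sided inequality $\boldsymbol{\nu}\cdot\mathbf{R}\boldsymbol{\nu}-1\le 0$, and it is only the \emph{two-sided} boundary condition---through the exact zero average---that upgrades this sign into an $L^{1}$ estimate by the elastic energy alone. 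The other delicate verification is the identity $\mathbf{A}\mathbf{t}=\mathbf{B}\mathbf{t}=\kappa\boldsymbol{\nu}$, i.e. that $\mathbf{F}_{\alpha}^{-1}\mathbf{m}$ is genuinely parallel to $\mathbf{n}^{\perp}$; this is the geometric content of the shear (\ref{eq:tFDef}) and is where the orientation of the frame $\{\mathbf{e}_{1},\mathbf{e}_{2}\}$ must be fixed consistently. Note that the non-degeneracy hypothesis $(\mathbf{Id}+\mathbf{b}\otimes\mathbf{m})\mathbf{n}^{\perp}\cdot\mathbf{a}\ne 0$ is not actually needed for this proposition---the sign argument is insensitive to the value of $\kappa$---but it is what later makes $c_{K}>0$ and drives the contradiction in Theorem~\ref{lbTheorem}.
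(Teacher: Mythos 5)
Your proposal is correct and follows essentially the same route as the paper: the two-sided boundary condition gives the zero-average identity for $\boldsymbol{\nu}\cdot\partial_1\mathbf{w}$ along the strip, a pointwise one-sided bound by $\mathrm{dist}((\nabla\mathbf{v})\mathbf{F}_{\alpha},K)$ converts this into an $L^1$ estimate (with the factor $2$), and H\"{o}lder plus Proposition \ref{firstProp} and the FTC step on $\mathcal{Q}_{\delta}$ finish the argument. Your decomposition $\mathbf{G}=\mathbf{R}\mathbf{U}+\mathbf{E}$ with the sign observation $\boldsymbol{\nu}\cdot\mathbf{R}\boldsymbol{\nu}\le 1$ is just a repackaging of the paper's positive-part argument, which uses $(\boldsymbol{\nu}\cdot\mathbf{G}\mathbf{n}^{\perp}-\beta)_{+}\le(|\mathbf{G}\mathbf{n}^{\perp}|-\beta)_{+}\le \mathrm{dist}(\mathbf{G},K)$ via the same crystallographic fact $|\mathbf{G}_0\mathbf{n}^{\perp}|=\beta$ for all $\mathbf{G}_0\in K$.
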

\noindent Finally, we can relate the the two quantities being estimated in (\ref{eq:keyEst1}) and (\ref{eq:keyEst2}) via the inequality 
\begin{equation}
\begin{aligned}\label{eq:keyEstPoint}
&|\boldsymbol{\nu} \cdot \big( \mathbf{F}  \mathbf{F}_{\alpha}^{-1} \mathbf{z} + \mathbf{d} - (\mathbf{I} + \mathbf{b} \otimes \mathbf{m})\mathbf{F}_{\alpha}^{-1} \mathbf{z}  \big)|   \leq |\boldsymbol{\nu} \cdot \big(\mathbf{v} -(\mathbf{Id} + \mathbf{b} \otimes \mathbf{m}) \mathbf{F}_{\alpha}^{-1} \mathbf{z}\big)| +  |\mathbf{v} - \mathbf{F} \mathbf{F}_{\alpha}^{-1} \mathbf{z} - \mathbf{d}|,
\end{aligned}
\end{equation}
and the former, integrated on a $\delta$-cube, can be bounded from below.  
\begin{proposition}\label{fourthProp}
Let $\delta \in (0, \min\{H, L\}]$, $\mathcal{Q}_{\delta} =  (s, s+ \delta) \times (s', s'+\delta) \times (s'', s'' + \delta)$ and $\boldsymbol{\nu}$ as above.  Then, 
\begin{equation}
\begin{aligned}\label{eq:lbFinally}
\int_{\mathcal{Q}_{\delta}} |\boldsymbol{\nu} \cdot \big( \mathbf{F}  \mathbf{F}_{\alpha}^{-1} \mathbf{z} + \mathbf{d} - (\mathbf{I} - \mathbf{b} \otimes \mathbf{m}) \mathbf{F}_{\alpha}^{-1} \mathbf{z}  \big)|dz \geq \frac{1}{4} c_{K} \delta^4, \quad \forall \;\; \mathbf{F} \in K, \mathbf{d} \in \mathbb{R}^3,
\end{aligned}
\end{equation}
for $c_{K}> 0$ as defined in the theorem.
\end{proposition}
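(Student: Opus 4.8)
The plan is to treat the integrand as an affine scalar field of $\mathbf{z}$ and bound its $L^1$ norm over the cube from below by a slicing argument. Writing $\mathbf{M} \stackrel{\rm def.}{=} (\mathbf{F} - \mathbf{Id} - \mathbf{b}\otimes\mathbf{m})\mathbf{F}_{\alpha}^{-1}$ and $\mathbf{g} \stackrel{\rm def.}{=} \mathbf{M}^{T}\boldsymbol{\nu}$, the integrand equals $|\mathbf{g}\cdot\mathbf{z} + \boldsymbol{\nu}\cdot\mathbf{d}|$, which is affine in $\mathbf{z}$. The elementary input is the one-dimensional estimate $\int_0^{\delta}|a u + b|\,du \geq \min_{b'}\int_0^{\delta}|a u + b'|\,du = |a|\delta^2/4$. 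Foliating $\mathcal{Q}_{\delta}$ by segments parallel to a coordinate axis $\mathbf{e}_k$, on each of which the field is affine with slope $g_k = \mathbf{g}\cdot\mathbf{e}_k$ and an intercept fixed by the global field, and integrating this estimate over the complementary $\delta^2$ cross-section, yields $\int_{\mathcal{Q}_{\delta}}|\mathbf{g}\cdot\mathbf{z} + \boldsymbol{\nu}\cdot\mathbf{d}|\,dz \geq \tfrac{\delta^4}{4}\,|g_k|$, and hence $\geq \tfrac{\delta^4}{4}\max_k|g_k|$. The additive constant $\boldsymbol{\nu}\cdot\mathbf{d}$ is absorbed because each slice is bounded by its own minimal intercept, so the bound is automatically uniform in $\mathbf{d}$.

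Next comes a purely algebraic identification of the relevant slopes with the three quantities defining $c_{K}$. Using $\mathbf{m}=\mathbf{e}_1$, $\mathbf{m}^{\perp}=\mathbf{e}_2$, the fact that $\mathbf{F}_{\alpha}^{-1}$ fixes $\mathbf{e}_2$ and $\mathbf{e}_3$ (the shear only alters the $\mathbf{e}_1$ column), and $(\mathbf{Id}+\mathbf{b}\otimes\mathbf{m})\mathbf{e}_2=\mathbf{e}_2$, $(\mathbf{Id}+\mathbf{b}\otimes\mathbf{m})\mathbf{e}_3=\mathbf{e}_3$, I obtain $\boldsymbol{\nu}\cdot\mathbf{M}\mathbf{e}_2 = \boldsymbol{\nu}\cdot(\mathbf{F}-\mathbf{Id})\mathbf{m}^{\perp}$ and $\boldsymbol{\nu}\cdot\mathbf{M}\mathbf{e}_3 = \boldsymbol{\nu}\cdot(\mathbf{F}-\mathbf{Id})\mathbf{e}_3$, precisely the second and third entries in $c_{K}$. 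For the first entry I use the pulled-back direction $\mathbf{w}\stackrel{\rm def.}{=}\mathbf{F}_{\alpha}\mathbf{n}^{\perp}$, for which $\mathbf{M}\mathbf{w} = (\mathbf{F}-\mathbf{Id}-\mathbf{b}\otimes\mathbf{m})\mathbf{n}^{\perp}$; since $\boldsymbol{\nu}=(\mathbf{Id}+\mathbf{b}\otimes\mathbf{m})\mathbf{n}^{\perp}/|(\mathbf{Id}+\mathbf{b}\otimes\mathbf{m})\mathbf{n}^{\perp}|$ satisfies $\boldsymbol{\nu}\cdot(\mathbf{Id}+\mathbf{b}\otimes\mathbf{m})\mathbf{n}^{\perp}=|(\mathbf{Id}+\mathbf{b}\otimes\mathbf{m})\mathbf{n}^{\perp}|$, this slope equals $\boldsymbol{\nu}\cdot\mathbf{F}\mathbf{n}^{\perp}-|(\mathbf{Id}+\mathbf{b}\otimes\mathbf{m})\mathbf{n}^{\perp}|$, i.e. the bracket in the first term of $c_{K}$ with $\mathbf{G}=\mathbf{F}$.

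To finish, I combine the two ingredients. By the very definition $c_{K}=\min_{\mathbf{G}\in K}\max\{\,\alpha|\boldsymbol{\nu}\cdot\mathbf{G}\mathbf{n}^{\perp}-|(\mathbf{Id}+\mathbf{b}\otimes\mathbf{m})\mathbf{n}^{\perp}||,\ |\boldsymbol{\nu}\cdot(\mathbf{G}-\mathbf{Id})\mathbf{m}^{\perp}|,\ |\boldsymbol{\nu}\cdot(\mathbf{G}-\mathbf{Id})\mathbf{e}_3|\,\}$, for the given $\mathbf{F}\in K$ at least one of these three (appropriately weighted) directional slopes is $\geq c_{K}$. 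If that slope is the $\mathbf{m}^{\perp}$- or $\mathbf{e}_3$-contribution, the slice bound along $\mathbf{e}_2$ or $\mathbf{e}_3$ from the first paragraph gives $\int_{\mathcal{Q}_{\delta}}|\cdot|\geq\tfrac14 c_{K}\delta^4$ immediately; if it is the $\mathbf{n}^{\perp}$-contribution, I slice along $\mathbf{w}=\mathbf{F}_{\alpha}\mathbf{n}^{\perp}$, which (having no $\mathbf{e}_3$ component) reduces to an oblique one-dimensional estimate on the $z_1$–$z_2$ square cross-section of $\mathcal{Q}_{\delta}$, the weight $\alpha$ supplying exactly the shear factor. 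In all cases the uniformity over $\mathbf{d}$ is preserved by the per-slice minimization, giving the claimed inequality for all $\mathbf{F}\in K$ and $\mathbf{d}\in\mathbb{R}^3$.

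The step I expect to be the main obstacle is the $\mathbf{n}^{\perp}$ term, because its natural slicing direction $\mathbf{w}=\mathbf{F}_{\alpha}\mathbf{n}^{\perp}$ is oblique to the cube axes, so the clean constant $\tfrac14$ from axis-parallel slicing does not transfer verbatim. Reconciling the oblique directional derivative with an axis-parallel slice forces a comparison in which the non-orthogonality of $\{\mathbf{n},\mathbf{m}\}$ enters, and it is precisely the weight $\alpha=\sqrt{1-(\mathbf{n}\cdot\mathbf{m})^2}$ in the first entry of $c_{K}$ that renders the three contributions commensurate and keeps the constant at $\tfrac14$. I would handle this by carrying out the oblique one-dimensional estimate directly on the two-dimensional $z_1$–$z_2$ cross-section where $\mathbf{w}$ lives, tracking that the geometric prefactor is compensated by $\alpha$. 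A secondary bookkeeping point is the apparent sign typo $(\mathbf{Id}-\mathbf{b}\otimes\mathbf{m})$ in the statement of (\ref{eq:lbFinally}): for consistency with the definition of $\boldsymbol{\nu}$, with $c_{K}$, and with (\ref{eq:keyEstPoint}), this must be the austenite-compatible map $\mathbf{Id}+\mathbf{b}\otimes\mathbf{m}$, which I use throughout.
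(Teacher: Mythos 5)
Your overall strategy coincides with the paper's: view the integrand as an affine function of $\mathbf{z}$, apply the one-dimensional fact $\inf_{\eta\in\mathbb{R}}\int_s^{s+\delta}|t\xi+\eta|\,dt=\tfrac{\delta^2}{4}|\xi|$ slice by slice (which is also what makes the bound uniform in $\mathbf{d}$), and identify directional slopes with the entries of $c_K$; your treatment of the $\mathbf{m}^{\perp}$ and $\mathbf{e}_3$ entries is exactly the paper's. The genuine gap is the step you yourself flag as the ``main obstacle,'' and it stems from overlooking an identity: the direction $\mathbf{w}=\mathbf{F}_{\alpha}\mathbf{n}^{\perp}$ is \emph{not} oblique. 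By the very design of the shear (these identities open the paper's proof of Proposition \ref{thirdProp}), $\mathbf{F}_{\alpha}\mathbf{n}^{\perp}=\alpha\mathbf{m}=\alpha\mathbf{e}_1$, equivalently $\mathbf{F}_{\alpha}^{-1}\mathbf{e}_1=\alpha^{-1}\mathbf{n}^{\perp}$; mapping the twin direction onto the first coordinate axis is the whole point of the change of variables. Consequently the $z_1$-slope of your affine field is $g_1=\boldsymbol{\nu}\cdot(\mathbf{F}-\mathbf{Id}-\mathbf{b}\otimes\mathbf{m})\mathbf{F}_{\alpha}^{-1}\mathbf{e}_1=\alpha^{-1}\bigl(\boldsymbol{\nu}\cdot\mathbf{F}\mathbf{n}^{\perp}-|(\mathbf{Id}+\mathbf{b}\otimes\mathbf{m})\mathbf{n}^{\perp}|\bigr)$, and ordinary axis-parallel slicing along $z_1$ gives $\tfrac{\delta^4}{4}|g_1|\geq\tfrac{\delta^4}{4}\,\alpha\bigl|\boldsymbol{\nu}\cdot\mathbf{F}\mathbf{n}^{\perp}-|(\mathbf{Id}+\mathbf{b}\otimes\mathbf{m})\mathbf{n}^{\perp}|\bigr|$, because $\alpha\leq 1\leq\alpha^{-1}$. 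The weight $\alpha$ in the first entry of $c_K$ is thus just what survives the lossy replacement $\alpha^{-1}\to\alpha$, not a compensation for oblique chord geometry; this is precisely how the paper argues. Had the direction truly been oblique, your fallback plan would be genuinely problematic: oblique slicing of a cube produces chords of varying length, the per-chord constant $\tfrac14$ degrades, and there is no reason the degradation would be exactly cancelled by $\alpha$.

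Two further points. First, you are right that $(\mathbf{I}-\mathbf{b}\otimes\mathbf{m})$ in (\ref{eq:lbFinally}) is a sign typo for $(\mathbf{I}+\mathbf{b}\otimes\mathbf{m})$; the paper's own proof and the inequality (\ref{eq:keyEstPoint}) use the latter. Second, the paper's proof of this proposition does not stop at the integral estimate: it also verifies $c_K>0$, by compactness of $K$ together with the hypothesis $(\mathbf{Id}+\mathbf{b}\otimes\mathbf{m})\mathbf{n}^{\perp}\cdot\mathbf{a}\neq 0$, which gives $\mathbf{a}\cdot\boldsymbol{\nu}\neq 0$ and hence $|\boldsymbol{\nu}\cdot(\mathbf{G}-\mathbf{Id})\mathbf{m}^{\perp}|=|\{-\lambda,1-\lambda\}(\mathbf{a}\cdot\boldsymbol{\nu})(\mathbf{n}\cdot\mathbf{m}^{\perp})|\neq 0$ whenever the first entry vanishes. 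Your proposal takes this positivity for granted; to be complete you should either reproduce that argument or state explicitly that positivity is established elsewhere.
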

\noindent The theorem follows from these estimates.
\begin{proof}[Proof of Theorem \ref{lbTheorem}] Let $\mathbf{v} \in \mathcal{M}_{\alpha}$.  For any $\delta \in (0, \min \{H,L\}]$, we obtain a $\delta$-cube $\mathcal{Q}_{\delta}$ on which the energy is no more that average (Proposition \ref{firstProp}).  By the estimates of Proposition \ref{secondProp} and \ref{thirdProp} on this $\delta$-cube, by the inequality in (\ref{eq:keyEstPoint}), and by the estimate in Proposition \ref{fourthProp}, we deduce the inequality: 
\begin{equation}
\begin{aligned}
&4c_{\star \star} c_K  \leq   \frac{\delta}{\hat{\sigma} H^2 L \alpha^{4}} \tilde{\mathcal{E}}^{\hat{\sigma}}(\mathbf{v}, \mathcal{R}_{L,H})  + \frac{1}{H L^{1/2}\alpha} \big(\tilde{\mathcal{E}}^{\hat{\sigma}}(\mathbf{v}, \mathcal{R}_{L,H})\big)^{1/2} +  \frac{L^{1/2}}{\delta  H \alpha}\big( \tilde{\mathcal{E}}^{\hat{\sigma}}(\mathbf{v}, \mathcal{R}_{L,H})\big)^{1/2}
\end{aligned}
\end{equation}
for some universal $c_{\star \star}> 0$ and $c_K >0$ as defined in the theorem.  Since $\delta \in (0, \min \{H, L\}]$, the second term can be bounded from above by the third, yielding the estimate
\begin{equation}
\begin{aligned}\label{eq:lbProof}
&2c_{\star \star} c_K \leq  \frac{\delta}{\hat{\sigma} H^2 L \alpha^{4}} \tilde{\mathcal{E}}^{\hat{\sigma}}(\mathbf{v}, \mathcal{R}_{L,H})   + \frac{L^{1/2}}{\delta H \alpha }\big( \tilde{\mathcal{E}}^{\hat{\sigma}}(\mathbf{v}, \mathcal{R}_{L,H})\big)^{1/2} , \quad \forall \;\; \delta  \in (0, \min \{H, L\}].
\end{aligned}
\end{equation}
 At least one of the two terms in the upper bound above is $\geq c_{\star \star} c_K$.  Consequently, 
\begin{equation}
\begin{aligned}\label{eq:lbProofp}
\tilde{\mathcal{E}}^{\hat{\sigma}}(\mathbf{v}, \mathcal{R}_{L,H}) &\geq c_{\star} H^2 \min \Big\{ c_K \frac{\hat{\sigma} L\alpha^4}{\delta },  c_K^2 \frac{\delta^2 \alpha^2}{L} \Big\}, \quad \forall \;\; \delta  \in (0, \min \{H, L\}]
\end{aligned}
\end{equation}
for $c_{\star} = \min\{ c_{\star \star},  c_{\star \star}^2\}$.  We are free to maximize this lower bound with respect to $\delta \in (0, \min\{H, L\}]$ to make the inequality sharp.  We claim that
\begin{equation}
\begin{aligned}\label{eq:lbProofpp}
\max_{\delta \in (0, \min\{H, L\}]} \min \Big\{ c_{K} \frac{\hat{\sigma} L \alpha^4}{\delta}, c_K^2 \frac{\delta^2  \alpha^2}{L} \Big\} = \min \Big\{c_K^{4/3} \alpha^{10/3} \hat{\sigma}^{2/3}L^{1/3}, c_K^2\alpha^2H \min \{\frac{H}{L},\frac{L}{H} \}   \Big\} .
\end{aligned}
\end{equation}
Indeed, the maximization has two possibilities: either it is obtained by making the two terms in the set in (\ref{eq:lbProofp}) equal for some $\delta \in (0, \min\{H,L\})$; thus, giving the first term in the set on the right in (\ref{eq:lbProofpp}). Or this is impossible for a $\delta \in (0, \min\{H,L\})$ and it is obtained by evaluating the second term in (\ref{eq:lbProofp}) for $\delta = \min \{ H, L\}$; thus, giving the second term in the set on the right in (\ref{eq:lbProofpp}).  In either case, the maximization happens to be a minimize the set in (\ref{eq:lbProofpp}).  It follows that 
\begin{equation}
\begin{aligned}
\tilde{\mathcal{E}}^{\hat{\sigma}}(\mathbf{v}, \mathcal{R}_{L,H}) \geq c_{\star}H^3 \min \Big\{c_K^{4/3} \alpha^{10/3} \frac{\hat{\sigma}^{2/3}L^{1/3}}{H}, c_K^2\alpha^2\frac{H}{L} , c_K^2\alpha^2 \frac{L}{H} \Big\} .
\end{aligned}
\end{equation}
The theorem follows after taking the infimum over all $\mathbf{v} \in \mathcal{M}_{\alpha}$ for the above inequality.   
\end{proof}

\subsubsection{The proofs.} 
We now turn to a proof of the the propositions above.
\begin{proof}[Proof of Proposition \ref{firstProp}-\ref{secondProp}]
The proof of these two propositions is adapted from the localization result in Lemma 3.1 of  \cite{ChanConti2}.  This can be done via a very minor and straightforward modification of the original proof.  As such, we do not reproduce the argument here. 
\end{proof}

\begin{proof}[Proof of Proposition \ref{thirdProp}]
Let $\mathbf{v}, \delta, \mathcal{S}_{\delta,L}, \mathcal{Q}_{\delta}, \boldsymbol{\nu}$ as defined in the proposition and $\beta \stackrel{\rm def.}{=}  |(\mathbf{Id}  + \mathbf{b} \otimes \mathbf{m})\mathbf{n}^{\perp}|$.  Since $\mathbf{F}_{\alpha} \mathbf{n}^{\perp} = \alpha \mathbf{m}$ and $\mathbf{F}_{\alpha}^{-1} \mathbf{m} = \alpha^{-1}\mathbf{n}^{\perp}$, the boundary conditions for $\mathbf{v} \in \mathcal{M}_{\alpha}$ provide that, in the direction $\boldsymbol{\nu}$ (recall the definition in Section \ref{ssec:LBSec}), 
\begin{equation}
\begin{aligned}
\int_{\mathcal{S}_{\delta,L}} \big(\boldsymbol{\nu} \cdot (\nabla \mathbf{v} ) \mathbf{F}_{\alpha} \mathbf{n}^{\perp} - \beta \big) dz &=  \int_{s''}^{s''+\delta} \int_{s'}^{s' + \delta} \int_0^{L}\Big( \partial_1\big(\alpha (\boldsymbol{\nu} \cdot \mathbf{v})\big) - \beta \Big)dz_1 dz_2 dz_3 \\
&= L \int_{s''}^{s''+\delta}  \int_{s'}^{s'+\delta} \Big( \alpha \big( \boldsymbol{\nu} \cdot (\mathbf{Id} + \mathbf{b} \otimes \mathbf{e}_1) \mathbf{F}_{\alpha}^{-1} \mathbf{m} \big)  -\beta \Big) dz_2 dz_3 = 0.
\end{aligned}
\end{equation}
Let $f_{\pm} \stackrel{\rm def.}{=}  \max\{ 0, \pm f\}$.  Since the integration above vanishes, we also have the inequalities
\begin{equation}
\begin{aligned}
\int_{\mathcal{S}_{\delta,L}} \big|\boldsymbol{\nu} \cdot (\nabla \mathbf{v} ) \mathbf{F}_\alpha \mathbf{n}^{\perp} - \beta \big| dz &= 2 \int_{\mathcal{S}_{\delta,L}} \big(\boldsymbol{\nu} \cdot (\nabla \mathbf{v} ) \mathbf{F}_{\alpha} \mathbf{n}^{\perp} - \beta \big)_{+}dz \leq 2 \int_{\mathcal{S}_{\delta,L}} \big(| (\nabla \mathbf{v} ) \mathbf{F}_{\alpha} \mathbf{n}^{\perp}| - \beta \big)_{+}dz  \\
&\leq 2 \int_{\mathcal{S}_{\delta,L}} \big(|\mathbf{G} \mathbf{n}^{\perp}| + |(\nabla \mathbf{v}) \mathbf{F}_{\alpha} - \mathbf{G}| - \beta \big)_{+}dz \quad \text{ for all } \mathbf{G} \in K.
\end{aligned}
\end{equation}
Further, $|\mathbf{G} \mathbf{n}^{\perp}| = \beta$ for all $\mathbf{G} \in K$ (due to the compatibility conditions (\ref{compat1}-\ref{compat2})), and so we conclude that
\begin{equation}
\begin{aligned}\label{eq:firstOb}
\int_{\mathcal{S}_{\delta,L}} \big|\boldsymbol{\nu}  \cdot (\nabla \mathbf{v} )\mathbf{F}_{\alpha} \mathbf{n}^{\perp} - \beta \big| dz &\leq 2 \int_{\mathcal{S}_{\delta,L}} {\rm dist}\big((\nabla \mathbf{v}) \mathbf{F}_{\alpha} , K\big) dz 
\leq c \frac{L^{1/2} \delta^2}{H} \big(\tilde{\mathcal{E}}^{\hat{\sigma}}(\mathbf{v}, \mathcal{R}_{L,H})\big)^{1/2}.
\end{aligned}
\end{equation}
The latter is due to H\"{o}lder's inequality, the lower bound on the energy density $\hat{\varphi}$, and the first estimate of Proposition \ref{firstProp}.  Now, observe that $\boldsymbol{\nu} \cdot \mathbf{v}(0,z_2,z_3) + \alpha^{-1} \beta z_1 = \boldsymbol{\nu} \cdot (\mathbf{Id} + \mathbf{b} \otimes \mathbf{m}) \mathbf{F}_{\alpha}^{-1} \mathbf{z}$ since $\mathbf{v} \in \mathcal{M}_{\alpha}$.   Consequently, 
\begin{equation}
\begin{aligned}\label{eq:secondOb}
\int_{\mathcal{Q}_{\delta}} | \boldsymbol{\nu} \cdot \big(\mathbf{v}(\mathbf{z}) - (\mathbf{Id} + \mathbf{b} \otimes \mathbf{m})\mathbf{F}_{\alpha}^{-1} \mathbf{z} \big)|dz  &= \int_{\mathcal{Q}_{\delta}} \Big| \alpha^{-1} \int_{0}^{z_1}  (\boldsymbol{\nu} \cdot (\nabla \mathbf{v} (s, z_2,z_3))\mathbf{F}_{\alpha} \mathbf{n}^{\perp} -\beta ) ds \Big| dz \\
&\leq  \frac{\delta}{\alpha}  \int_{\mathcal{S}_{\delta,L}}| \boldsymbol{\nu} \cdot (\nabla \mathbf{v}(\mathbf{z}) ) \mathbf{F}_{\alpha} \mathbf{n}^{\perp} - \beta \big | dz. 
\end{aligned}
\end{equation}
The estimate in the proposition follows by combining (\ref{eq:firstOb}) with (\ref{eq:secondOb}).
\end{proof}
\begin{proof}[Proof of Proposition \ref{fourthProp}]
Let $\delta$, $\mathcal{Q}_{\delta}$ and $\boldsymbol{\nu}$ as in the proposition.  To prove this result, we repeatedly use the fact the $\inf_{\eta \in \mathbb{R}} \int_{s}^{s+\delta} |t \xi + \eta|dt = \frac{\delta^2}{4}|\xi|$.  In particular, 
\begin{equation}
\begin{aligned}\label{eq:lbck}
&\int_{\mathcal{Q}_{\delta}}   |\boldsymbol{\nu} \cdot \big( \mathbf{F} \mathbf{F}_{\alpha}^{-1} \mathbf{z} + \mathbf{d} - (\mathbf{I} + \mathbf{b} \otimes \mathbf{m}) \mathbf{F}_{\alpha}^{-1} \mathbf{z}  \big)|dz\\
&\qquad  \geq \int_{s''}^{s''+\delta} \int_{s'}^{s'+ \delta} \inf_{\eta \in \mathbb{R}}\big | \alpha (\boldsymbol{\nu} \cdot \mathbf{F} \mathbf{n}^{\perp} - |(\mathbf{Id} + \mathbf{b} \otimes \mathbf{m}) \mathbf{n}^{\perp}|)z_1 + \eta \big|dz_1 dz_2 dz_3 \\
&\qquad = \frac{\delta^4}{4}   \alpha \big|\boldsymbol{\nu} \cdot \mathbf{F} \mathbf{n}^{\perp} - |(\mathbf{Id} + \mathbf{b} \otimes \mathbf{m} ) \mathbf{n}^{\perp}|\big|,
\end{aligned}
\end{equation}
and repeating this argument but exchanging the roles of $z_1$ and $z_i$ ($i = 2,3$) leads to the other estimates in the set defining $c_K$ in (\ref{eq:cKDef}). We then trivially conclude the lower bound (\ref{eq:lbFinally}) in the proposition for $c_K$ as defined.  It still remains, however, to prove that $c_K >0$ under for the crystallographic theory and under the additional hypothesis $(\mathbf{I} + \mathbf{b} \otimes \mathbf{m}) \mathbf{n}^{\perp} \cdot \mathbf{a} \neq 0$.  Since the minimization is over a compact set $K$, we need only to show that all three terms in the set cannot be zero simultaneously.  To this end, notice that the first term in the set for $c_K$ (i.e., the one reflected in the lower bound in (\ref{eq:lbck})) is zero if and only if $\mathbf{G} \in K$ satisfies $\boldsymbol{\nu} \cdot \mathbf{G} \in \{ \boldsymbol{\nu} \cdot \mathbf{A} , \boldsymbol{\nu} \cdot \mathbf{B}\}$.  But, in assuming $\boldsymbol{\nu} \cdot \mathbf{G}$ is one of these cases, we find that the second term in the set for $c_K$ is non-vanishing  since 
\begin{equation}
\begin{aligned}
|\boldsymbol{\nu} \cdot ( \{ \mathbf{A}, \mathbf{B}\} - \mathbf{Id}) \mathbf{m}^{\perp}| = |\{ -\lambda, 1-\lambda\} (\mathbf{a} \cdot \boldsymbol{\nu})(  \mathbf{n} \cdot \mathbf{m}^{\perp})| \neq 0.
\end{aligned}
\end{equation}
Note, this quantity is non-zero since the crystallographic theory assumes $\lambda \in (0,1)$, $\alpha \neq 0$, and the added hypothesis above gives $(\mathbf{a} \cdot \boldsymbol{\nu}) \neq 0$.  Therefore, $c_K >0$ as asserted.  
\end{proof}

We finally prove the lower bound in (\ref{eq:EintBound}) with the following observation regarding tensor norms.
\begin{proposition}\label{appendProp}
Let $\mathbb{A} = a_{ijk} \mathbf{e}_i \otimes \mathbf{e}_j \otimes \mathbf{e}_k \in \mathbb{R}^{3\times3\times3}$ represent a third order tensor (with repeated indices summed here and below and for the orthonormal basis $\{\mathbf{e}_1, \mathbf{e}_2, \mathbf{e}_3\} = \{ \mathbf{m}, \mathbf{m}^{\perp}, \mathbf{e}_3\}$).  Then, for $\mathbf{F}_{\alpha}$ as defined in (\ref{eq:tFDef}),
\begin{equation}
\begin{aligned}
\big(a_{ilm}( F_{\alpha})_{lj} (F_{\alpha})_{mk}\big)\big(a_{il'm'} (F_{\alpha})_{l'j} (F_{\alpha})_{m'k}\big) \geq c \alpha^4 |\mathbb{A}|^2,
\end{aligned}
\end{equation}
where $c> 0$ is a universal constant.
\end{proposition}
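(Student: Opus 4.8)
The plan is to read the left-hand side as the squared Frobenius norm of a third-order tensor and then reduce the proposition to an operator-norm bound on $\mathbf{F}_{\alpha}^{-1}$. Define $b_{ijk} \stackrel{\rm def.}{=} a_{ilm}(F_{\alpha})_{lj}(F_{\alpha})_{mk}$, the tensor $\mathbb{B}$ obtained by letting $\mathbf{F}_{\alpha}$ act on the second and third slots of $\mathbb{A}$ while fixing the first. Expanding the contraction in the proposition shows it equals $b_{ijk}b_{ijk} = |\mathbb{B}|^2$, so the claim is exactly $|\mathbb{B}|^2 \geq c\,\alpha^4 |\mathbb{A}|^2$. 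Since $\mathbf{F}_{\alpha}$ is invertible with $\mathbf{F}_{\alpha}^{-1} = \mathbf{Id} + \alpha^{-1}(\mathbf{n}\cdot\mathbf{m})\,\mathbf{m}^{\perp} \otimes \mathbf{m}$, this action is a linear isomorphism on third-order tensors, and the appearance of $\alpha^4$ signals that each of the two acted-upon slots should contribute a factor $\alpha^2$ through the smallest singular value of $\mathbf{F}_{\alpha}$.

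Concretely, I would slice along the untouched first index: for each fixed $i$ let $A_i, B_i$ be the $3\times 3$ matrices with entries $(A_i)_{lm} = a_{ilm}$ and $(B_i)_{jk} = b_{ijk}$. Then $B_i = \mathbf{F}_{\alpha}^{T} A_i \mathbf{F}_{\alpha}$, hence $A_i = \mathbf{F}_{\alpha}^{-T} B_i \mathbf{F}_{\alpha}^{-1}$. Submultiplicativity of the Frobenius norm under one-sided matrix multiplication gives $|A_i| \leq \|\mathbf{F}_{\alpha}^{-1}\|_{\rm op}^{2}\,|B_i|$, and summing the squares over $i$ yields $|\mathbb{A}|^2 \leq \|\mathbf{F}_{\alpha}^{-1}\|_{\rm op}^{4}\,|\mathbb{B}|^2$, i.e. $|\mathbb{B}|^2 \geq \|\mathbf{F}_{\alpha}^{-1}\|_{\rm op}^{-4}\,|\mathbb{A}|^2$. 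It then remains only to estimate $\|\mathbf{F}_{\alpha}^{-1}\|_{\rm op}$ from above by a multiple of $1/\alpha$.

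For this estimate, note that in the orthonormal basis $\{\mathbf{m}, \mathbf{m}^{\perp}, \mathbf{e}_3\}$ the matrix $\mathbf{F}_{\alpha}^{-1}$ is the identity plus a single shear entry $t \stackrel{\rm def.}{=} \alpha^{-1}(\mathbf{n}\cdot\mathbf{m})$ in the $(\mathbf{m}^{\perp},\mathbf{m})$ position, and acts as the identity on $\mathbf{e}_3$. Hence $\|\mathbf{F}_{\alpha}^{-1}\|_{\rm op}^2$ is the largest eigenvalue of the $2\times 2$ block $\left(\begin{smallmatrix} 1+t^2 & t \\ t & 1\end{smallmatrix}\right)$, whose two eigenvalues are positive and sum to $2+t^2$; thus $\|\mathbf{F}_{\alpha}^{-1}\|_{\rm op}^2 \leq 2+t^2$. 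The decisive input is the crystallographic identity $\alpha = \sqrt{1-(\mathbf{n}\cdot\mathbf{m})^2}$, which forces $t^2 = \alpha^{-2}(\mathbf{n}\cdot\mathbf{m})^2 = \alpha^{-2}-1$; therefore $\|\mathbf{F}_{\alpha}^{-1}\|_{\rm op}^2 \leq 1+\alpha^{-2} \leq 2\alpha^{-2}$ using $\alpha \leq 1$. Combining, $|\mathbb{B}|^2 \geq (2\alpha^{-2})^{-2}|\mathbb{A}|^2 = \tfrac14 \alpha^4 |\mathbb{A}|^2$, which proves the claim with the universal constant $c = 1/4$.

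The only delicate point is tracking the $\alpha$-dependence faithfully. A bare invertibility argument would give some positive $c$ that silently degenerates as $\alpha \to 0$; the whole content is that $\mathbf{F}_{\alpha}$ becomes singular at exactly the rate $\alpha$, so the step deserving care is the singular-value estimate of $\mathbf{F}_{\alpha}^{-1}$ together with the identity $t^2 = \alpha^{-2}-1$, which is where the definition $\alpha = \sqrt{1-(\mathbf{n}\cdot\mathbf{m})^2}$ enters and pins down the sharp exponent. Everything else is routine linear algebra.
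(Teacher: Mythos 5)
Your proof is correct and follows essentially the same route as the paper's: slice $\mathbb{A}$ along the first (untouched) index into matrices $\mathbf{A}_i$, recognize the left-hand side as $\sum_i |\mathbf{F}_{\alpha}^T \mathbf{A}_i \mathbf{F}_{\alpha}|^2$, and control the loss through the singular values of the shear $\mathbf{F}_{\alpha}$ --- your bound via $\|\mathbf{F}_{\alpha}^{-1}\|_{\mathrm{op}}$ is just the dual formulation of the paper's bound via $\sigma_{\min}(\mathbf{F}_{\alpha}) = \|\mathbf{F}_{\alpha}^{-1}\|_{\mathrm{op}}^{-1}$. The one step where you go further is the last: the paper merely asserts $\sigma_{\min}(\mathbf{F}_{\alpha}) \geq c\alpha$ ``given the structure of $\mathbf{F}_{\alpha}$,'' whereas you actually prove it, using the identity $t^2 = \alpha^{-2}(\mathbf{n}\cdot\mathbf{m})^2 = \alpha^{-2}-1$ together with the $2\times 2$ eigenvalue computation, and thereby extract the explicit universal constant $c = 1/4$.
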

\begin{proof}
We let $\mathbf{A}_i \stackrel{\rm def.}{=}  a_{ijk} \mathbf{e}_j \otimes \mathbf{e}_k$, and note that it is easy to explicitly verify that 
\begin{equation}
\begin{aligned}
\big(a_{ilm} (F_{\alpha})_{lj} (F_{\alpha})_{mk}\big)(a_{il'm'} (F_{\alpha})_{l'j} (F_{\alpha})_{m'k}\big)  = \sum_{i = 1,2,3} |\mathbf{F}_{\alpha}^T \mathbf{A}_i \mathbf{F}_{\alpha} |^2.
\end{aligned}
\end{equation}
Thus, using a standard estimate for the normed product of matrices $\mathbf{A}, \mathbf{B} \in \mathbb{R}^{3\times3}$, i.e., $|\mathbf{A} \mathbf{B}|^2 \geq \frac{1}{3}\sigma_{min}(\mathbf{B})^2|\mathbf{A}|^2$, we conclude that 
\begin{equation}
\begin{aligned}
\big(a_{ilm} (F_{\alpha})_{lj} (F_{\alpha})_{mk}\big)\big(a_{il'm'} (F_{\alpha})_{l'j} (F_{\alpha})_{m'k}\big) \geq  \sum_{i = 1,2,3} \frac{1}{9} \sigma_{min}^4(\mathbf{F}_{\alpha}) |\mathbf{A}_i|^2 = \frac{1}{9} \sigma_{min}^4(\mathbf{F}_{\alpha})  |\mathbb{A}|^2.
\end{aligned}
\end{equation}
Given the structure of $\mathbf{F}_{\alpha}$, it follows that $\sigma_{min}(\mathbf{F}_{\alpha}) \geq c \alpha $ for some universal $c>0$ as desired.  The proposition follows.
\end{proof}

\end{document}